\DeclareMathAlphabet{\mathcal}{OMS}{ntxm}{m}{n}
\let\mathbb\relax
\let\mathbb\mathds
\renewcommand{\paragraph}{%
  \@startsection{paragraph}{4}%
  {\z@}{2.25ex \@plus 1ex \@minus .2ex}{-1em}%
  {\normalfont\normalsize\bfseries}%
}
\definecolor{linkblue}{HTML}{001487}
\crefname{enumi}{Step}{Steps}
\newtheorem{theorem}{Theorem}[section]
\newtheorem*{theorem*}{Theorem}
\newtheorem{proposition}[theorem]{Proposition}
\newtheorem{lemma}[theorem]{Lemma}
\newtheorem{corollary}[theorem]{Corollary}
\theoremstyle{remark}
\newtheorem{remark}[theorem]{Remark}
\theoremstyle{definition}
\newtheorem{definition}[theorem]{Definition}
\newtheorem{protocol}{Protocol}
\numberwithin{equation}{section}
\newcommand\numberthis{\addtocounter{equation}{1}\tag{\theequation}}
\newcommand{\setft}[1]{\textnormal{#1}}
\newcommand{\eps}{\varepsilon}
\newcommand{\1}{\mathbb{1}}
\newcommand{\id}{\setft{id}}
\newcommand{\N}{\ensuremath{\mathds{N}}}
\newcommand{\R}{\ensuremath{\mathds{R}}}
\newcommand{\bits}{\ensuremath{\{0, 1\}}}
\newcommand{\ot}{\ensuremath{\otimes}}
\newcommand{\deq}{\coloneqq}
\newcommand{\tr}[1]{\mbox{\rm Tr}\!\left[ #1 \right]}
\newcommand{\ptr}[2]{\mbox{\rm Tr}_{#1}\!\left[ #2 \right]}
\newcommand{\pr}[1]{{\rm Pr}\!\left[ #1 \right]}
\newcommand{\prs}[2]{{\rm Pr}_{#1}\!\left[ #2 \right]}
\newcommand{\norm}[1]{\left\lVert#1\right\rVert}
\DeclareMathOperator{\pos}{Pos}
\DeclareMathOperator{\supp}{\setft{supp}}
\newcommand{\sth}{{\setft{~s.t.~}}}
\newcommand{\tand}{\;\textnormal{~and~}\;}
\newcommand{\ket}[1]{|#1\rangle}
\newcommand{\bra}[1]{\langle#1|}
\newcommand{\proj}[1]{\ket{#1}\!\bra{#1}}
\DeclarePairedDelimiterX\braket[2]{\langle}{\rangle}{#1 \delimsize\vert #2}
\newcommand{\cp}{\setft{CP}}
\newcommand{\cptp}{\setft{CPTP}}
\newcommand{\cA}{\ensuremath{\mathcal{A}}}
\newcommand{\cB}{\ensuremath{\mathcal{B}}}
\newcommand{\cC}{\ensuremath{\mathcal{C}}}
\newcommand{\cD}{\ensuremath{\mathcal{D}}}
\newcommand{\cE}{\ensuremath{\mathcal{E}}}
\newcommand{\cF}{\ensuremath{\mathcal{F}}}
\newcommand{\cI}{\ensuremath{\mathcal{I}}}
\newcommand{\cM}{\ensuremath{\mathcal{M}}}
\newcommand{\cN}{\ensuremath{\mathcal{N}}}
\newcommand{\cP}{\ensuremath{\mathcal{P}}}
\newcommand{\cR}{\ensuremath{\mathcal{R}}}
\newcommand{\cS}{\ensuremath{\mathcal{S}}}
\newcommand{\cT}{\ensuremath{\mathcal{T}}}
\newcommand{\cX}{\ensuremath{\mathcal{X}}}
\newcommand{\cY}{\ensuremath{\mathcal{Y}}}
\newcommand{\cZ}{\ensuremath{\mathcal{Z}}}
\DeclareSymbolFont{greekletters}{OML}{ntxmi}{m}{it}
\DeclareMathSymbol{\alpha}{\mathord}{greekletters}{"0B}
\DeclareMathSymbol{\beta}{\mathord}{greekletters}{"0C}
\DeclareMathSymbol{\gamma}{\mathord}{greekletters}{"0D}
\DeclareMathSymbol{\delta}{\mathord}{greekletters}{"0E}
\DeclareMathSymbol{\epsilon}{\mathord}{greekletters}{"0F}
\DeclareMathSymbol{\zeta}{\mathord}{greekletters}{"10}
\DeclareMathSymbol{\eta}{\mathord}{greekletters}{"11}
\DeclareMathSymbol{\theta}{\mathord}{greekletters}{"12}
\DeclareMathSymbol{\iota}{\mathord}{greekletters}{"13}
\DeclareMathSymbol{\kappa}{\mathord}{greekletters}{"14}
\DeclareMathSymbol{\lambda}{\mathord}{greekletters}{"15}
\DeclareMathSymbol{\mu}{\mathord}{greekletters}{"16}
\DeclareMathSymbol{\nu}{\mathord}{greekletters}{"17}
\DeclareMathSymbol{\xi}{\mathord}{greekletters}{"18}
\DeclareMathSymbol{\pi}{\mathord}{greekletters}{"19}
\DeclareMathSymbol{\rho}{\mathord}{greekletters}{"1A}
\DeclareMathSymbol{\sigma}{\mathord}{greekletters}{"1B}
\DeclareMathSymbol{\tau}{\mathord}{greekletters}{"1C}
\DeclareMathSymbol{\upsilon}{\mathord}{greekletters}{"1D}
\DeclareMathSymbol{\phi}{\mathord}{greekletters}{"1E}
\DeclareMathSymbol{\chi}{\mathord}{greekletters}{"1F}
\DeclareMathSymbol{\psi}{\mathord}{greekletters}{"20}
\DeclareMathSymbol{\omega}{\mathord}{greekletters}{"21} 
\DeclareMathSymbol{\vartheta}{\mathord}{greekletters}{"23}
\DeclareMathSymbol{\varpi}{\mathord}{greekletters}{"24}
\DeclareMathSymbol{\varrho}{\mathord}{greekletters}{"25}
\DeclareMathSymbol{\varsigma}{\mathord}{greekletters}{"26}
\DeclareMathSymbol{\varphi}{\mathord}{greekletters}{"27}
\newcommand{\hmin}{H_\setft{min}}
\newcommand{\hmax}{H_{\ensuremath{\text{max}}}}
\newcommand{\dmaxeps}[2]{D_{\setft{max}}^\eps \! \left( #1 \, \middle\Vert \, #2 \right)}
\newcommand{\dalpha}[2]{D_\alpha \! \left( #1 \, \middle\Vert \, #2 \right)}
\newcommand{\dvaralpha}[3]{D_{#1} \! \left( #2 \, \middle\Vert \, #3 \right)}
\newcommand{\dreg}[2]{D^\setft{reg} \! \left( #1 \, \middle\Vert \, #2 \right)}
\newcommand{\drega}[2]{D^\setft{reg}_\alpha \! \left( #1 \, \middle\Vert \, #2 \right)}
\newcommand{\had}{H_\alpha}
\newcommand{\hau}{H^{\shortuparrow}_\alpha}
\newcommand{\hadvar}[1]{H_{#1}}
\newcommand{\hauvar}[1]{H^{\shortuparrow}_{#1}}
\let\H\relax
\newcommand{\H}{\mathcal{H}}
\DeclareMathOperator*{\E}{\mathds{E}}
\newcommand{\mbP}{\ensuremath{\mathds{P}}}
\newcommand*{\freq}[1]{\mathsf{freq}(#1)}
\newcommand*{\Var}[1]{\mathsf{Var}(#1)}
\newcommand*{\Max}[1]{\mathsf{Max}(#1)}
\newcommand*{\MinSigma}[1]{\mathsf{Min}_{\Sigma}(#1)}
\newcommand*{\Min}[1]{\mathsf{Min}(#1)}
\newcommand{\states}{\setft{S}}
\newcommand{\Herm}{\setft{Herm}}
\newcommand{\spec}{\setft{Spec}}
\newcommand{\sym}{\setft{Sym}}
\newcommand{\wexp}{\omega_{\setft{exp}}}
\tikzstyle{porte} = [fill=gray!25, draw]
\begin{document}

\title{Generalised entropy accumulation}

\author[1]{Tony Metger\footnote{Email: \href{mailto:tmetger@ethz.ch}{tmetger@ethz.ch}}}
\author[2]{Omar Fawzi}
\author[3]{David Sutter}
\author[1]{Renato Renner}
\affil[1]{Institute for Theoretical Physics, ETH Zurich, 8093 Zurich, Switzerland}
\affil[2]{Univ Lyon, ENS Lyon, UCBL, CNRS, Inria, LIP, F-69342, Lyon Cedex 07, France}
\affil[3]{IBM Quantum, IBM Research Europe -- Zurich, Switzerland}
\date{\vspace{-1cm}}

\maketitle

\begingroup 
\hyphenpenalty 10000
\exhyphenpenalty 10000
\begin{abstract}
The min-entropy of a quantum system $A$ conditioned on another quantum system $E$ describes how much randomness can be extracted from $A$ with respect to an adversary in possession of $E$.
This quantity plays a crucial role in quantum cryptography: the security proofs of many quantum cryptographic protocols reduce to showing a lower bound on such a min-entropy.
Here, we develop a new tool, called \emph{generalised entropy accumulation}, for computing such bounds.
Concretely, we consider a sequential process in which each step outputs a system $A_i$ and updates a side information register $E$. We prove that if this process satisfies a natural ``non-signalling'' condition between past outputs and future side information, the min-entropy of the outputs $A_1, \dots, A_n$ conditioned on the  side information $E$ at the end of the process can be bounded from below by a sum of von Neumann entropies associated with the individual steps. This is a generalisation of the entropy accumulation theorem (EAT)~\cite{eat}, which deals with a more restrictive model of side information: there, past side information cannot be updated in subsequent rounds, and newly generated side information has to satisfy a Markov condition. 

Due to its more general model of side-information, our generalised EAT can be applied more easily and to a broader range of cryptographic protocols. 
In particular, it is the first general tool that is applicable to mistrustful device-independent cryptography.
To demonstrate this, we give the first security proof for blind randomness expansion~\cite{miller2017randomness} against general adversaries.
Furthermore, our generalised EAT can be used to give improved security proofs for quantum key distribution~\cite{eat_for_qkd}, and also has applications beyond quantum cryptography.
\end{abstract}
\endgroup

\newpage 
\thispagestyle{empty}
{
\hypersetup{linkcolor=black}
\tableofcontents
}

\newpage

\section{Introduction} \label{sec:intro}

Suppose that Alice and Eve share a quantum state $\rho_{A^n E}$.
From her systems $A^n \deq A_1 \dots A_n$, Alice would like to extract bits that look uniformly random to Eve, except with some small failure probability $\eps$.
The number of such random bits that Alice can extract is given by the smooth min-entropy $\hmin^\eps(A^n|E)_\rho$~\cite{renner_thesis}.
This quantity plays a central role in quantum cryptography: for example, the main task in security proofs of quantum key distribution (QKD) protocols is usually finding a lower bound for the smooth min-entropy.

Unfortunately, for many cryptographic protocols deriving such a bound is challenging.
Intuitively, the reason is the following: the state $\rho_{A^n E}$ is usually created as the output of a multi-round protocol, where each round produces one of Alice's systems $A_i$ and allows Eve to execute some attack to gain information about $A_1, \dots, A_i$.
These attacks can depend on each other, i.e., Eve may use what she learnt in round $i - 1$ to plan her attack in round $i$.
This non-i.i.d.~nature of the attacks makes it hard to find a lower bound on $\hmin^\eps(A^n|E)_\rho$ that holds for any possible attack that Eve can execute.
In contrast, it is typically much easier to compute a conditional von Neumann entropy associated with a single-round of the protocol, where the non-i.i.d.~nature of Eve's attack plays no role.
Therefore, it is desirable to relate the smooth min-entropy of the output of the multi-round protocol to the von Neumann entropies associated with the individual rounds.

From an information-theoretic point of view, this question can be phrased as follows:
can the smooth min-entropy $\hmin^\eps(A^n | E)_\rho$ be bounded from below in terms of von Neumann entropies \smash{$H(A_i | E_i)_{\rho^i_{A_i E_i}}$} for some (yet to be determined) systems $E_i$ and states $\rho^i_{A_i E_i}$ related to $\rho$?
While for general states $\rho_{A^n E}$ no useful lower bound can be found, previous works have established such bounds under additional assumptions on the state $\rho_{A^n E}$.

The first bound of this form was proven via the \emph{asymptotic equipartition property} (AEP)~\cite{tomamichel09}. It assumes that the system $E$ is $n$-partite (i.e., we replace $E$ by $E^n = E_1 \dots E_n$) and that the state $\rho_{A^n E^n} = \rho_{A_1 E_1} \ot \dots \ot \rho_{A_n E_n}$ is a product of identical states.
Then, the AEP shows that\footnote{Since $\rho$ is a product of identical states, all of the terms $H(A_i | E_i)_\rho$ are equal, i.e., $\sum_{i=1}^n H(A_i | E_i)_\rho = n H(A_i | E_i)_\rho$ for any $i$. We write the sum here explicitly to highlight the analogy with the EAT presented below.}
\begin{align*}
H_{\min}^{\eps}(A^n  | E^n)_\rho \geq \sum_{i=1}^n H(A_i | E_i)_\rho - O(\sqrt{n}) \ .
\end{align*}
For applications in cryptography, the assumption that $\rho$ is an i.i.d.~product state is usually too strong: it corresponds to the (unrealistic) assumption that Eve executes the same independent attack in each round, a so-called \emph{collective attack}.

The \emph{entropy accumulation theorem} (EAT)~\cite{eat} is a generalisation of the AEP which requires far weaker assumptions on the state $\rho_{A^nE}$.
Specifically, the EAT considers states that result from a sequential process that starts with a state $\rho^0_{R_0 E'}$ and in every step outputs a system $A_i$ and a piece of side information $I_i$. 
The system $E'$ is not acted upon during the process.
The full side information at the end of this process is $E = I_1 \dots I_n E'$.
We can represent such a process by the following diagram, where $\cM_i$ are quantum channels.
 \begin{center}
    \begin{tikzpicture}[thick]
        \draw
            (0, 0) node[porte] (m1) {$\cM_1$}
            ++(2, 0) node[porte] (m2) {$\cM_2$}
            ++(2, 0) node (dotdotdot) {$\cdots$}
            ++(2, 0) node[porte] (mn) {$\cM_n$}
            (m1) ++(-.5, -1.2) node (a1) {$A_1$}
            (m1) ++(.5, -1.2) node (b1) {$I_1$}
            (m2) ++(-.5, -1.2) node (a2) {$A_2$}
            (m2) ++(.5, -1.2) node (b2) {$I_2$}
            (mn) ++(-.5, -1.2) node (an) {$A_n$}
            (mn) ++(.5, -1.2) node (bn) {$I_n$}
            ;
       \draw 
            (m1) ++(-1.5, 0) edge[->] node[midway, above] {$R_0$} (m1)
            (m1) edge[->] node[midway, above] {$R_1$} (m2)
            (m2) edge[->] node[midway, above] {$R_2$} (dotdotdot)
            (dotdotdot) edge[->] node[midway, above] {$R_{n-1}$} (mn)
            (m1) edge[->] (a1)
            (m1) edge[->] (b1)
            (m2) edge[->] (a2)
            (m2) edge[->] (b2)
            (mn) edge[->] (an)
            (mn) edge[->] (bn)
            ;
        \draw
        (m1) ++(-1.5, 0) to  ++(-.5, .5) node[left] {$\rho^0_{R_0 E'}$} to ++(.5, .5) coordinate (topright) to node[midway, above] {${E'}$} (topright -| mn.east) coordinate (rightwall)
            ;            
    \end{tikzpicture}
    \end{center}

The EAT requires an additional condition on the side information: the new side information $I_i$ generated in round $i$ must be independent from the past outputs $A^{i-1}$ conditioned on the existing side information $I^{i-1} E'$.
Mathematically, this is captured by the condition that the systems $A^{i-1} \leftrightarrow  I^{i-1} E' \leftrightarrow I_i$ form a Markov chain for any initial state $\rho^0_{R_0 E'}$.
With this Markov condition, the EAT states that\footnote{The EAT from~\cite{eat} also makes an analogous statement about an upper bound on the max-entropy $\hmax$. 
We derive a generalisation of that statement in~\cref{app:hmax} but only focus on $\hmin$ in the introduction and main text since that is the case that is typically relevant for applications.}
\begin{align}
H_{\min}^{\eps}(A^n | I^n E')_{\cM_n \circ \dots \circ \cM_1(\rho^0_{R_0 E' })} \geq \sum_{i=1}^n \inf_{\omega} H(A_i | I_i \tilde E)_{\cM_i(\omega)} - O(\sqrt{n}) \ , \label{eqn:intro_eat}
\end{align}
where $\tilde E$ is a purifying system isomorphic to $R_{i-1}$ and the infimum is taken over all states $\omega$ on systems $R_{i-1}\tilde E$.\footnote{In fact, the EAT is more general in that it allows taking into account observed statistics to restrict the minimization over $\omega_{A_i B_i E}$, but we restrict ourselves to the simpler case without statistics in this introduction.}

Let us discuss the model of side information used by the EAT in more detail.
The EAT considers side information consisting of two parts: the initial side information $E'$ (which is not acted upon during the process) and the outputs $I^n = I_1 \dots I_n$.
This splitting of side information into a ``static'' part $E'$ and a part $I^n$ which is generated in each step of the process is particularly suited to device-independent cryptography: there, Eve prepares a device in an initial state $\rho^0_{R_0 E'}$, where $R_0$ is the device's internal memory and $E'$ is Eve's initial side information from preparing the device.
Then, Alice (and Bob, though we only consider Alice's system here) executes a multi-round protocol with this device, where each round leaks some additional piece of information $I_i$ to Eve, so that Eve's side information at the end of the protocol is $I^n E'$.
Indeed, the EAT has been used to establish tight security proofs in the device-independent setting, see e.g.,~\cite{arnon2018practical,BMP18}.

The Markov condition in the EAT captures the following intuition:
if we want to find a bound on $H_{\min}^{\eps}(A^n | I^n E')$ in terms of single-round quantities, it is required that side information about $A_i$ is itself output in step $i$, as otherwise we cannot hope to estimate the contribution to the total entropy from step $i$.
To illustrate what could happen without such a condition, consider a case where $A_i$ is classical and no side information is output in the first $n-1$ rounds, but the side information $I_n$ in the last round contains a copy of the systems $A^n$ (which can be passed along during the process in the systems $R_i$).
Then, clearly $H_{\min}^{\eps}(A^n | I^n E') = 0$, but for the first $n-1$ rounds, each single-round entropy bound that only considers the systems $A_i$ and $I_i$ can be positive.

\paragraph{Main result.} 
In this work, we further relax the assumptions on how the final state $\rho_{A^n E}$ is generated.
Specifically, we consider sequential processes as in the EAT, but with a fully general model of side information, i.e., the side information can be updated in each step in the process.
Diagrammatically, such a process can be represented as follows:
    \begin{center}
    \begin{tikzpicture}[thick]
        \draw
            (0, 0) node[porte, minimum height=1.7cm] (m1) {$\cM_1$}
            ++(2.2, 0) node[porte, minimum height=1.7cm] (m2) {$\cM_2$}
            ++(2.2, 0) node (dotdotdot) {$\cdots$}
            ++(2.2, 0) node[porte, minimum height=1.7cm] (mn) {$\cM_n$}
            (m1) ++(0, -1.8) node (a1) {$A_1$}
            (m2) ++(0, -1.8) node (a2) {$A_2$}
            (mn) ++(0, -1.8) node (an) {$A_n$}
            ;
        \draw
            (m1.west) ++(0, 0.5) coordinate (m1-in1)
            (m1.west) ++(0, -0.5) coordinate (m1-in2)
            (m2.west) ++(0, 0.5) coordinate (m2-in1)
            (m2.west) ++(0, -0.5) coordinate (m2-in2)
            (mn.west) ++(0, 0.5) coordinate (mn-in1)
            (mn.west) ++(0, -0.5) coordinate (mn-in2)
            (mn.east) ++(0, 0.5) coordinate (mn-out1)
            (mn.east) ++(0, -0.5) coordinate (mn-out2)
            (dotdotdot.west) ++(0, 0.5) coordinate (dotdotdot-in1)
            (dotdotdot.west) ++(0, -0.5) coordinate (dotdotdot-in2)
            ;
       \draw 
            (m1-in1) ++(-1.3, 0) edge[->] node[midway, above] {$E_0$} (m1-in1)
            (m1-in2) ++(-1.3, 0) edge[->] node[midway, above] {$R_0$} (m1-in2)
            (m2-in1) ++(-1.3, 0) edge[->] node[midway, above] {$E_1$} (m2-in1)
            (m2-in2) ++(-1.3, 0) edge[->] node[midway, above] {$R_1$} (m2-in2)
            (dotdotdot-in1) ++(-1.3, 0) edge[->] node[midway, above] {$E_2$} (dotdotdot-in1)
            (dotdotdot-in2) ++(-1.3, 0) edge[->] node[midway, above] {$R_2$} (dotdotdot-in2)
            (mn-in1) ++(-1.3, 0) edge[->] node[midway, above] {$E_{n-1}$} (mn-in1)
            (mn-in2) ++(-1.3, 0) edge[->] node[midway, above] {$R_{n-1}$} (mn-in2)
            (mn-out1) edge[->] node[midway, above] {$E_{n}$} coordinate (mn-out1) ++(1.1, 0)
            (mn-out2) edge[->] node[midway, above] {$R_{n}$} coordinate (mn-out2) ++(1.1, 0)
            (m1) edge[->] (a1)
            (m2) edge[->] (a2)
            (mn) edge[->] (an)
            ;
        \draw
        (m1-in2) ++(-1.3, 0) to  ++(-.5, .5) node[left] {$\rho^0_{R_0 E_0}$} to ++(.5, .5)
            ;            
    \end{tikzpicture}
    \end{center}

Our generalised EAT then states the following. 
\begin{theorem} \label{thm:intro_eat}
Consider quantum channels $\cM_i: R_{i-1} E_{i-1} \to A_i R_i E_i$ that satisfy the following ``non-signalling'' condition (discussed in detail below): for each $\cM_i$, there must exist a quantum channel $\cR_i: E_{i-1} \to E_i$ such that 
\begin{align}
\setft{Tr}_{A_i R_i} \circ \cM_i = \cR_i \circ \setft{Tr}_{R_{i-1}} \,. \label{eqn:intro_ns}
\end{align}
Then, the min-entropy of the outputs $A^n$ conditioned on the final side information $E_n$ can be bounded as 
\begin{align}
\hmin^\eps(A^n | E_n )_{\cM_n \circ \dots \circ \cM_1(\rho^0_{R_0 E_0 })} \geq \sum_{i = 1}^n \inf_{\omega} H(A_i|E_i \tilde E_{i-1} )_{\cM_i(\omega)} - O(\sqrt{n}) \, , \label{eqn:intro_gen_eat}
\end{align}
where $\tilde E_{i-1} \equiv R_{i-1}E_{i-1}$ is a purifying system for the input to $\cM_i$ and the infimum is taken over all states $\omega$ on systems $R_{i-1}E_{i-1}\tilde E_{i-1}$.\footnote{As usual, the channels $\cM_i$ act as identity on any additional systems that may be part of the input state, i.e.~$\cM_i(\omega_{R_{i-1}E_{i-1}\tilde E_{i-1}}) = (\cM_i \ot \id_{\tilde E_{i-1}})(\omega_{R_{i-1}E_{i-1}\tilde E_{i-1}})$ is a state on $A_i R_i E_i \tilde E_{i-1}$. In particular, the register $\tilde E_{i-1}$ containing a purification of the input is also part of the output state.}
\end{theorem}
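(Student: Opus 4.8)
The natural strategy is to reduce the generalised EAT to the original EAT~\cite{eat} by artificially splitting each channel $\cM_i$ so that its output side information can be cast as ``new'' side information of the form required there. Concretely, I would introduce dummy registers and rewrite the process so that the updated side-information register $E_i$ is treated as a fresh output $I_i$ in each round, while the register that is ``never touched'' in the EAT sense is taken to be trivial (or a fixed reference). The non-signalling condition~\eqref{eqn:intro_ns}, which says $\setft{Tr}_{A_i R_i} \circ \cM_i = \cR_i \circ \setft{Tr}_{R_{i-1}}$, is exactly what should let this reduction go through: it guarantees that the marginal on the new side information does not depend on the previous outputs $A^{i-1}$ (only on the part of the state that $\cR_i$ sees), which is the content of the Markov condition $A^{i-1} \leftrightarrow I^{i-1} E' \leftrightarrow I_i$. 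So the first real step is to verify carefully that~\eqref{eqn:intro_ns} implies the relevant Markov chains after the process is rewritten.

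However, I expect a direct black-box reduction to fail, or at least to be lossy, because in the EAT the ``old'' side information $E'$ is genuinely inert, whereas here $E_{i-1}$ is fed into $\cM_i$ and transformed. Passing $E_{i-1}$ along inside the $R$-register and only ``revealing'' the transformed copy at the end would lose the per-round structure needed for the accumulation bound. So I would instead prove the theorem from scratch, following the Rényi-entropy / chain-rule architecture of the EAT proof, but adapted to the new side-information model. The plan is: (i) define \emph{channel conditional entropies} $H_\alpha^{\shortuparrow}(A_i | E_i \tilde E_{i-1})$ for a sandwiched Rényi parameter $\alpha$ close to $1$, relative to the worst-case input $\omega$; (ii) prove a single-round chain-rule-type inequality bounding $H_\alpha(A^i | E_i)$ of the post-$\cM_i$ state from below by $H_\alpha(A^{i-1}|E_{i-1})$ of the pre-$\cM_i$ state plus the channel quantity from step (i), up to correction terms — this is where~\eqref{eqn:intro_ns} is used, since it lets one relate the divergence before and after applying $\cM_i$ while controlling how $E_{i-1}$ is modified; (iii) iterate to get $H_\alpha(A^n|E_n) \geq \sum_i \inf_\omega H_\alpha^{\shortuparrow}(A_i|E_i\tilde E_{i-1}) - (\text{error})$; (iv) optimise over $\alpha$ and use a continuity / Taylor-expansion argument near $\alpha = 1$ to replace the Rényi quantities by von Neumann entropies at the cost of an $O(\sqrt n)$ term; and (v) convert from Rényi to smooth min-entropy via the standard relation $\hmin^\eps \geq H_\alpha - \frac{\text{const}}{\alpha - 1}\log\frac{1}{\eps}$ and choose $\alpha - 1 \sim 1/\sqrt n$.

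The main obstacle will be step (ii), the single-round chain rule in the generalised model. In the original EAT this is the technical heart, and it relies crucially on $E'$ being untouched so that one can insert a purification and apply data processing cleanly; here the channel $\cR_i$ acts on $E_{i-1}$, so one must track how the purifying system $\tilde E_{i-1} \equiv R_{i-1}E_{i-1}$ evolves and ensure that the ``old'' entropy term is about $A^{i-1}$ conditioned on the \emph{updated} register $E_i$ — which is why the statement conditions on $E_i\tilde E_{i-1}$ rather than $E_{i-1}$. I anticipate needing a Stinespring dilation of $\cM_i$ together with the factorisation $\setft{Tr}_{A_iR_i}\circ\cM_i = \cR_i\circ\setft{Tr}_{R_{i-1}}$ to exhibit an isometry that acts as identity on the systems carrying $A^{i-1}$, thereby recovering enough of the structure the EAT chain rule exploits. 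Secondary technical points are: making the error terms in steps (ii)–(iv) uniform in the round index and in the input state (so that the $\inf_\omega$ survives), handling the dimension dependence (the $O(\sqrt n)$ constant will involve $\log|A_i|$ and the Rényi-$2$ entropy of $A_i$), and checking that the non-signalling condition composes correctly over all $n$ rounds so that the final state is of the claimed form.
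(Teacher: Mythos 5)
Your high-level architecture (reduce to a per-round R\'enyi chain rule, iterate, then convert to von Neumann via a continuity bound near $\alpha=1$ and to smooth min-entropy via the known relation, choosing $\alpha-1 \sim 1/\sqrt n$) is exactly the skeleton the paper uses, and your assessment that a black-box reduction to the original EAT would be lossy is also right. Steps (i), (iii), (iv), (v) of your plan match the paper essentially line for line: they appear as the use of \cite[Lemma~B.10]{eat}, \cite[Lemma~B.9]{eat}, and monotonicity in $\alpha$ in the proof of \cref{thm:improved_eat}.

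The genuine gap is step (ii). You correctly identify it as the crux, but the approach you sketch --- Stinespring-dilating $\cM_i$ and ``exhibiting an isometry that acts as identity on the systems carrying $A^{i-1}$'' so as to recycle the original EAT chain-rule argument --- is not what the paper does and will not go through as stated. The central obstruction, which your plan does not surface, is the following: the available chain rule for the sandwiched R\'enyi divergence (\cref{lem:standard_chain_rule}, from~\cite{sharp_divergence}) yields
\begin{align*}
\dalpha{\cM(\rho_{ARE})}{\cF(\sigma_{ARE})} \leq \dalpha{\rho_{ARE}}{\sigma_{ARE}} + \drega{\cM}{\cF}\,,
\end{align*}
and translating this into entropy language produces a lower bound in terms of $\had(A|RE)_\rho$ rather than $\had(A|E)_\rho$. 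Conditioning on the hidden register $R$ is catastrophic (in the DI setting $\had(A|RE)$ is zero), so one cannot simply iterate this. Removing $R$ from the conditioning is the actual new content of the paper, and it requires two ideas that are absent from your proposal: (a) a new Uhlmann-type theorem for $D_\alpha$ with $\alpha>1$ (\cref{lem:min_over_ref_system}), which \emph{only holds in a regularised form} over $k$-fold tensor powers and is proved via spectral pinching; the non-signalling condition $\cF=\cR\circ\setft{Tr}_R$ is precisely what makes $\cF^{\ot k}(\hat\sigma_{A^kR^kE^k})$ independent of the choice of $R^k$-extension, so that one can optimise over that extension and invoke the Uhlmann-type result; and (b) a de-regularisation argument (\cref{lem:entropy_reg}) replacing the regularised channel divergence $\drega{\cM}{\cF}$ by a single-copy quantity, using permutation invariance and the symmetric-subspace / post-selection representation, at the cost of bumping $\alpha$ to $\tfrac{1}{2-\alpha}$. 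Without (a) you never get rid of the $R$ in the conditioning, and without (b) the per-round bound is an uncomputable $n\to\infty$ limit. A minor but telling confusion: you write that the ``old'' entropy term should be $A^{i-1}$ conditioned on the \emph{updated} register $E_i$; in the paper's chain rule (\cref{thm:improved_entropy_chain_rule}) the recursive term is $\had(A^{i-1}|E_{i-1})$, conditioned on the \emph{previous} side information, and it is the \emph{new} single-round term that is conditioned on $E_i\tilde E_{i-1}$. Getting this bookkeeping right is exactly where the non-signalling condition and the Uhlmann-type step enter.
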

We give a formal statement and proof in \cref{sec:eat} and also show that, similarly to the EAT, statistics collected during the process can be used to restrict the minimization over $\omega$ (see \cref{thm:with_testing} for the formal statement).
By a simple duality argument, \cref{eqn:intro_gen_eat} also implies an upper bound on the smooth max-entropy $\hmax$, which we explain in \cref{app:hmax}.
This generalises a similar result from~\cite{eat}, although in~\cite{eat} one could not make use of duality due to the Markov condition and instead had to prove the statement about $\hmax$ separately, again highlighting that our generalised EAT is easier to work with.

The intuition behind the non-signalling condition in our generalised EAT is similar to the Markov condition in the original EAT:
by the same reasoning as for the Markov condition, since the lower bound is made  up of terms of the form $H(A_i|E_i \tilde E_{i-1} )_{\cM_i(\omega)}$, it is required that side information about $A_i$ that is present in the final system $E_n$ is already present in $E_i$.
This means that side information about $A_i$ should not be passed on via the $R$-systems and later be included in the $E$-systems.
The non-signalling condition captures this requirement: it demands that if one only considers the marginal of the new side information $E_i$ (without the new output $A_i$), it must be possible to generate this state from the past side information $E_{i-1}$ alone, without access to the system $R_{i-1}$.
This means that any side information that $E_i$ contains about the past outputs $A_1 \dots A_{i-1}$ must have essentially already been present in $E_{i-1}$ and could not have been stored in $R_{i-1}$.

The name ``non-signalling condition'' is due to the fact that \cref{eqn:intro_ns} is a natural generalisation of the standard non-signalling conditions in non-local games:
if we view the systems $R_{i-1}$ and $R_i A_i$ as the inputs and outputs on ``Alice's side'' of $\cM_i$, and $E_{i-1}$ and $E_i$ as the inputs and outputs on ``Eve's side'', then \cref{eqn:intro_ns} states that the marginal of the output on Eve's side cannot depend on the input on Alice's side.
This is exactly the non-signalling condition in non-local games, except that here the inputs and outputs are allowed to be fully quantum.

To understand the relation between the Markov and non-signalling conditions, it is instructive to consider the setting of the original EAT as a special case of our generalised EAT.
In the original EAT, the full side information available after step $i$ is $E' I^{i}$, and past side information is not updated during the process.
For our generalised EAT, we therefore set $E_i = E' I^i$ and consider maps $\cM_i = \cM'_i \ot \id_{E_{i-1}}$, where $\cM'_i: R_{i-1} \to A_i I_i R_i$ is the map used in the original EAT.
We need to check that with this choice of systems and maps, the Markov condition of the original EAT implies the non-signalling condition of our generalised EAT.
The Markov condition requires that for any state input \smash{$\omega^{i-1}_{A^{i-1}I^{i-1}R_{i-1}E'}$}, the output state \smash{$\omega^{i}_{A^i I^i R_i E'} = \cM_i(\omega^{i-1})$} satisfies $A^{i-1} \leftrightarrow  I^{i-1} E' \leftrightarrow I_i$.\footnote{Strictly speaking, the EAT as stated in~\cite{eat} only requires that this Markov property holds for any input state $\omega^{i-1}$ in the image of the previous maps $\cM_{i-1} \circ \dots\circ \cM_1$. The same is true for the non-signalling condition, i.e., one can check that our proof of the generalised EAT still works if the map $\cR_i$ only satisfies \cref{eqn:intro_ns} on states in the image of $\cM_{i-1} \circ \dots\circ \cM_1$. To simplify the presentation, we use the stronger condition \cref{eqn:intro_ns} throughout this paper.}
It is then a standard result on quantum Markov chains~\cite{Pet86} that there must exist a quantum channel $\cR_i: I^{i-1} E' \to I^{i} E'$ such that \smash{$\omega^i_{I^i E'} = \cR_i(\omega^{i-1}_{I^{i-1} E'})$}.
Remembering that we defined $E_i = E' I^i$ (so that $\cR_i: E_{i-1} \to E_i$) and adding the systems $A^{i-1}$ (on which both $\cM_i$ and $\cR_i$ act as identity), we find that $\cM_i$ satisfies the non-signalling condition: 
\begin{align*}
\setft{Tr}_{A_i R_i} \circ \cM_i(\omega^{i-1}_{A^{i-1} R_{i-1} E_{i-1}}) = \omega^i_{A^{i-1} E_i} = \cR_i(\omega^{i-1}_{A^{i-1} E_{i-1}}) = \cR_i \circ \setft{Tr}_{R_{i-1}} (\omega^{i-1}_{A^{i-1} R_{i-1} E_{i-1}}) \,.
\end{align*}
Then, noting that all conditioning systems on which $\cM_i$ acts as the identity map can collectively be replaced by a single purifying system isomorphic to the input, we see that we recover the original EAT (\cref{eqn:intro_eat}) from our generalised EAT (\cref{eqn:intro_gen_eat}).

We emphasise that while the original EAT with the Markov condition can be recovered as a special case, our model of side information and the non-signalling condition are much more general than the original EAT; arguably, for a sequential process they are the most natural and general way of expressing the notion that future side information should not contain new information about past outputs, which appears to be necessary for an EAT-like result.
To demonstrate the greater generality of our result, in \cref{sec:application_eat} we use it to give the first multi-round proof for blind randomness expansion, a task to which the original EAT could not be applied, and a more direct proof of the E91 QKD protocol than was possible with the original EAT.
Our generalised EAT can also be used to prove security of a much larger class of QKD protocols than the original EAT. 
Interestingly, for (device-dependent) QKD protocols, no ``hidden system'' $R$ is needed and therefore the non-signalling condition is trivially satisfied, i.e., the advantage of our generalised EAT for QKD security proofs stems entirely from the more general model of side information, not from replacing the Markov condition by the non-signalling condition; see~\cref{sec:qkd} for an informal comparison of how the original and generalised EAT can be applied to QKD, and~\cite{eat_for_qkd} for a detailed treatment of the application of our generalised EAT to QKD, including protocols to which the original EAT could not be applied.

\paragraph{Proof sketch.}
The generalised EAT involves both the min-entropy, which can be viewed as a ``worst-case entropy'', and the von Neumann entropy, which can be viewed as an ``average case entropy''.
These two entropies are special cases of a more general family of entropies called R\'enyi entropies, which are denoted by $H_\alpha$ for a parameter $\alpha > 1$ (see \cref{sec:entropies_prelim} for a formal definition).\footnote{We note that the definition of R\'enyi entropies can be extended to $\alpha < 1$, but we will only need the case $\alpha > 1$.}
The min-entropy can be obtained from the R\'enyi entropy by taking $\alpha \to \infty$, whereas the von Neumann entropy corresponds to the limit $\alpha \to 1$.
Hence, the R\'enyi entropies interpolate between the min-entropy and the von Neumann entropy, and they will play a crucial role in our proof.

The key technical ingredient for our generalised EAT is a new chain rule for R\'enyi entropies (\cref{thm:improved_entropy_chain_rule} in the main text). 
\begin{lemma} \label{lem:chain_rule_intro}
Let $\alpha \in (1, 2)$, $\rho_{ARE}$ a quantum state, and $\cM: RE \to A'R'E'$ a quantum channel which satisfies the non-signalling condition in \cref{eqn:intro_ns}, i.e.~there exists a channel $\cR: E \to E'$ such that $\setft{Tr}_{A'R'} \circ \cM = \cR \circ \setft{Tr}_{R}$. 
Then 
\begin{align}
\had(AA'|E')_{\cM(\rho)} \geq \had(A|E)_{\rho} + \inf_{\omega_{RE\tilde E}}\hadvar{\frac{1}{2-\alpha}}(A'|E' \tilde E)_{\cM(\omega)}
\end{align}
for a purifying system $\tilde E \equiv RE$, where the infinimum is over all quantum states $\omega$ on systems $RE\tilde E$.
\end{lemma}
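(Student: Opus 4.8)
The plan is to peel the statement apart into three pieces: a chain rule for Rényi entropies that holds for arbitrary states (the technical heart), a data-processing step that is the only place the non-signalling hypothesis is used, and a purification argument that produces the infimum over $\omega$. Throughout, write $\sigma \deq \cM(\rho)$ for the output state on $AA'R'E'$ (recall that $\cM$ acts as the identity on $A$), and set $\beta \deq \tfrac{1}{2-\alpha}$, noting that $\beta \ge \alpha$ and $\beta \to 1$ as $\alpha \to 1$.

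\emph{The chain rule.} The core claim is an inequality valid for every state $\sigma_{AA'E'}$ and every $\alpha \in (1,2)$, of roughly the form
\[
\had(AA'|E')_\sigma \;\ge\; \had(A|E')_\sigma + \hadvar{\beta}(A'|AE')_\sigma \,.
\]
At $\alpha \to 1$ this is the ordinary chain rule $H(AA'|E') = H(A|E') + H(A'|AE')$; for $\alpha \in (1,2)$ some Rényi parameter must be shifted, and the claim is that it suffices to shift the last summand to $\beta = \tfrac{1}{2-\alpha}$ while keeping $\alpha$ on the first. I would prove this by working with the sandwiched Rényi divergence underlying $\had$, most cleanly after passing --- via a purification of $\sigma$ and the duality relations between the $\had$- and $\hau$-type conditional entropies on pure states --- to a sub-additivity inequality for $\hau$; the value $\beta = \tfrac{1}{2-\alpha}$ is precisely the one that makes the dual parameters match up (one ends up with a pairing such as $\tfrac{1}{\gamma} + \delta = 2$ for the two relevant exponents $\gamma,\delta$). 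The remaining inequality would then be established by a Hölder inequality for Schatten norms together with a minimax exchange over the free reference state appearing in the definition of $\hau$. This is the step I expect to be the main obstacle: arranging the parameter bookkeeping so that the bound is tight, i.e.\ that no loss is incurred beyond the single shift $\alpha \mapsto \tfrac{1}{2-\alpha}$.

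\emph{Where non-signalling enters.} By hypothesis there is a channel $\cR : E \to E'$ with $\setft{Tr}_{A'R'} \circ \cM = \cR \circ \setft{Tr}_R$; applying both sides to $\rho_{ARE}$ gives $\sigma_{AE'} = (\id_A \ot \cR)(\rho_{AE})$, and hence $\sigma_{E'} = \cR(\rho_E)$. Since $\had$ is non-decreasing when a channel acts on the conditioning register --- this follows from data processing for the underlying Rényi divergence (valid for $\alpha \ge \tfrac{1}{2}$), because $\1_A \ot \rho_E$ is mapped precisely to $\1_A \ot \cR(\rho_E) = \1_A \ot \sigma_{E'}$ --- we get $\had(A|E')_\sigma \ge \had(A|E)_\rho$, which disposes of the first summand of the chain rule.

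\emph{The infimum.} Fix a purification of $\rho_{RE}$ on a register $\tilde E \equiv RE$, with pure state $\phi$, so that $\cM(\phi)$ is a state on $A'R'E'\tilde E$ and $\hadvar{\beta}(A'|E'\tilde E)_{\cM(\phi)} \ge \inf_{\omega_{RE\tilde E}}\hadvar{\beta}(A'|E'\tilde E)_{\cM(\omega)}$. On the other hand, purifying $\rho_{ARE}$ with an additional register $P$ makes $AP$ a purifying register for $\rho_{RE}$; since any two purifications are related by an isometry on the purifying register and conditional entropies are invariant under such isometries, $\hadvar{\beta}(A'|APE')_\sigma = \hadvar{\beta}(A'|E'\tilde E)_{\cM(\phi)}$ (choosing $\tilde E$ large enough to accommodate $AP$). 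Finally, $\had$ does not increase when the conditioning register is enlarged, so $\hadvar{\beta}(A'|AE')_\sigma \ge \hadvar{\beta}(A'|APE')_\sigma$. Combining the three displayed estimates yields
\[
\had(AA'|E')_{\cM(\rho)} \;\ge\; \had(A|E)_\rho + \inf_{\omega_{RE\tilde E}} \hadvar{\tfrac{1}{2-\alpha}}(A'|E'\tilde E)_{\cM(\omega)} \,,
\]
which is the claim. All steps except the chain rule are routine manipulations with data processing, duality, and purifications, so essentially all of the work is in establishing the chain rule of the first step.
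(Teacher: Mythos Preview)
Your approach is genuinely different from the paper's. The paper does \emph{not} go through a state-level chain rule of the form $\had(AA'|E')_\sigma \geq \had(A|E')_\sigma + \hadvar{\beta}(A'|AE')_\sigma$. Instead it works entirely at the level of channel divergences: it writes $-\had(AA'|E')_{\cM(\rho)}$ as $\dalpha{\tilde\cM(\rho_{ARE})}{\tilde\cN(\sigma_{ARE})}$ for a suitable $\tilde\cN$ built from the replacer map, applies the Fang--Fawzi chain rule to get a bound involving $\dalpha{\rho_{ARE}}{\sigma_{ARE}}$ plus a \emph{regularised} channel divergence, then proves two new lemmas --- a regularised Uhlmann-type statement (the paper's \cref{lem:min_over_ref_system}) to replace $\dalpha{\rho_{ARE}}{\sigma_{ARE}}$ by $\dalpha{\rho_{AE}}{\sigma_{AE}}$, and a symmetric-subspace argument (\cref{lem:entropy_reg}) to remove the regularisation at the cost of the parameter shift $\alpha\mapsto\tfrac{1}{2-\alpha}$. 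Crucially, the non-signalling hypothesis is used in the Uhlmann step (it is what lets one minimise over extensions $\hat\sigma_{A^n R^n}$ without changing $\cF^{\otimes n}(\hat\sigma)$), not merely for a data-processing step on the conditioning system as in your argument.

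The concern with your route is that the entire technical content is loaded into your first step, and that chain rule is neither standard nor obviously obtainable by the sketch you give. The known R\'enyi chain rules (Dupuis 2015, and those used in the original EAT) pair parameters via a H\"older relation such as $\tfrac{\alpha}{\alpha-1}=\tfrac{\beta}{\beta-1}+\tfrac{\gamma}{\gamma-1}$, and typically put $\hau$ rather than $\had$ on at least one of the ``outer'' terms; your combination --- the same $\alpha$ on $\had(AA'|E')$ and $\had(A|E')$ with the remaining term at $\tfrac{1}{2-\alpha}$ --- does not fit that pattern. The duality relations you invoke are clean for $\hau$ but not for $\had$ (which is tied to the fixed marginal), so ``pass to $\hau$ by duality, then H\"older'' is not obviously going to land back on $\had(A|E')$ with the right parameter. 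Your other two steps (DPI under non-signalling, purification to obtain the infimum) are fine; but unless the state-level chain rule can be established directly, this route does not close --- and if it could be established easily, it would render the paper's regularised-Uhlmann and de Finetti machinery unnecessary, which is some evidence that the sketch is harder to complete than it looks.
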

We first describe how this chain rule implies our generalised EAT, following the same idea as in \cite{eat, dupuis2019entropy}.
For this, recall that our goal is to find a lower bound on $\hmin^\eps(A^n | E_n )_{\cM_n \circ \dots \circ \cM_1(\rho^0_{R_0 E_0 })}$ for a sequence of maps satisfying the non-signalling condition $\setft{Tr}_{A_i R_i} \circ \cM_i = \cR_i \circ \setft{Tr}_{R_{i-1}}$.
As a first step, we use a known relation between the smooth min-entropy and the R\'enyi entropy~\cite{tomamichel09}, which (up to a small penalty term depending on $\eps$ and $\alpha$) reduces the problem to lower-bounding 
\begin{align*}
H_\alpha(A^n | E_n )_{\cM_n \circ \dots \circ \cM_1(\rho^0_{R_0 E_0})} = H_\alpha(A_n A^{n-1} | E_n )_{\cM_n \circ \dots \circ \cM_1(\rho^0_{R_0 E_0})} \,.
\end{align*}
To this, we can apply \cref{lem:chain_rule_intro} by choosing $A = A^{n-1}$, $A' = A_n$, $E = E_{n-1}$, $E' = E_n$, $R = R_{n-1}$, $R' = R_n$, and $\rho = \cM_{n-1} \circ \dots \circ \cM_1(\rho^0_{R_0 E_0})$.
Then, since the map $\cM_n$ satisfies the non-signalling condition, \cref{lem:chain_rule_intro} implies that 
\begin{align*}
&\had(A_1^n | E_n )_{\cM_n \circ \dots \circ \cM_1(\rho_{R_0 E_0 })} \\
&\hspace{20mm}\geq \had(A_1^{n-1} | E_{n-1} )_{\cM_{n-1} \circ \dots \circ \cM_1(\rho_{R_0 E_0 })}  + \inf_{\omega \in \states(R_{n-1}E_{n-1}\tilde E_{n-1})} \hadvar{\frac{1}{2-\alpha}}(A_n|E_n \tilde E_{n-1})_{\cM_n(\omega)} \,.
\end{align*}
We can now repeat this argument for the term $\had(A_1^{n-1} | E_{n-1} )_{\cM_{n-1} \circ \dots \circ \cM_1(\rho_{R_0 E_0 })}$. 
After $n$ applications of \cref{lem:chain_rule_intro}, we find that 
\begin{align*}
\had(A_1^n | E_n )_{\cM_n \circ \dots \circ \cM_1(\rho_{R_0 E_0 })} 
&\geq \sum_{i = 1}^n \inf_{\omega \in \states(R_{i-1}E_{i-1}\tilde E_{i-1})} \hadvar{\frac{1}{2-\alpha}}(A_i|E_i \tilde E_{i-1})_{\cM_i(\omega)} \,.
\end{align*}
To conclude, we use a continuity bound from \cite{dupuis2019entropy} to relate $\hadvar{\frac{1}{2-\alpha}}(A_i|E_i \tilde E_{i-1})_{\cM_i(\omega)}$ to $H(A_i|E_i \tilde E_{i-1})_{\cM_i(\omega)}$.
It can be shown that for a suitable choice of $\alpha$, the penalty terms we incur by switching from the min-entropy to the R\'enyi entropy and then to the von Neumann entropy scale as $O(\sqrt{n})$.
Therefore, we obtain \cref{eqn:intro_gen_eat}.
We also provide a version that allows for ``testing'' (which is crucial for application in quantum cryptography and explained in detail in \cref{sec:testing}) and features explicit second-order terms similar to those in~\cite{dupuis2019entropy}.

We now turn our attention to the proof of \cref{lem:chain_rule_intro}.
For this, we need to introduce the (sandwiched) R\'enyi divergence of order $\alpha$ between two (possibly unnormalised) quantum states $\rho$ and $\sigma$, denoted by $\dalpha{\rho}{\sigma}$.
We refer to \cref{sec:entropies_prelim} for a formal definition; for this overview, it suffices to know that $\dalpha{\rho}{\sigma}$ is a measure of how different $\rho$ is from $\sigma$, and that the conditional R\'enyi entropy is related to the R\'enyi divergence by 
\begin{align*}
H_{\alpha}(A|B)_{\rho} = - \dalpha{\rho_{AB}}{\1_A \ot \rho_B} \,.
\end{align*}
Our starting point for proving \cref{lem:chain_rule_intro} is the following chain rule for the R\'enyi divergence from~\cite{sharp_divergence}:
\begin{align}
\dalpha{\cM(\rho)}{\cF(\sigma)} \leq \dalpha{\rho_{ARE}}{\sigma_{ARE}} + \lim_{n \to \infty} \frac{1}{n} \sup_{\omega_{R^n E^n \tilde E^n}}\dalpha{\cM^{\ot n}(\omega)}{\cF^{\ot n}(\omega)} \,, \label{eqn:intro_old_chain_rule}
\end{align}
where $\cM$ and $\cF$ are (not necessarily trace preserving) quantum channels from $RE$ to $A'R'E'$, and $\rho$ and $\sigma$ are any quantum states on $ARE$.
The optimization is over all quantum states $\omega$ on $n$ copies of the systems $RE\tilde E$ (with $\tilde E \equiv RE$ as before).

Making a suitable choice of $\cF$ (which depends on $\cM$) and $\sigma$ (which depends on $\rho$), one can turn \cref{eqn:intro_old_chain_rule} into the following chain rule for the conditional R\'enyi entropy: 
\begin{align}
\had(AA'|E')_{\cM(\rho)} \geq \had(A|RE)_{\rho} + \lim_{n \to \infty} \frac{1}{n} \inf_{\omega_{R^n E^n \tilde E^n}} \had((A')^n|(E')^n \tilde E^n)_{\cM^{\ot n}(\omega)} \,. \label{eqn:intro_h_old}
\end{align}
This chain rule resembles \cref{lem:chain_rule_intro}, but is significantly weaker and cannot be used to prove a useful entropy accumulation theorem.
The reason for this is twofold: 
\begin{enumerate}
\item \cref{eqn:intro_h_old} provides a lower bound in terms of $\had(A|RE)$, not $\had(A|E)$.
The additional conditioning on the $R$-system can drastically lower the entropy: for example, in a device-independent scenario, $R$ would describe the internal memory of the device. 
Then, Alice's output $A$ contains no entropy when conditioned on the internal memory of the device that produced the output, i.e.~$\had(A|RE) = 0$.
On the other hand, Alice's output conditioned only on Eve's side information $E$ may be quite large (and can usually be certified by playing a non-local game), i.e.~$\had(A|E) > 0$.
\item \cref{eqn:intro_h_old} contains the regularised quantity $\lim_{n \to \infty} \frac{1}{n} \inf_{\omega_{R^n E^n \tilde E^n}} \had((A')^n|(E')^n \tilde E^n)_{\cM^{\ot n}(\omega)}$. Due to the limit $n \to \infty$, this quantity cannot be computed numerically and therefore the bound in \cref{eqn:intro_h_old} cannot be evaluated for concrete examples.
\end{enumerate}
We now describe how we overcome each of these issues in turn.
\begin{enumerate}
\item We prove a new variant of Uhlmann's theorem~\cite{Uhlmann76}, a foundational result in quantum information theory.
The original version of Uhlmann's theorem deals with the case of $\alpha = 1/2$; we show that for $\alpha > 1$, a similar result holds, but an additional regularisation is required.
Concretely, we prove that for any states $\rho_{ARE}$ and $\sigma_{AE}$:  
\begin{align}
\lim_{k \to \infty} \frac{1}{k} \inf_{\substack{\hat \sigma_{A^k R^k E^k} \\\sth \hat \sigma_{A^k E^k} = \sigma_{AE}^{\otimes k}}} \dalpha{\rho_{ARE}^{\otimes k}}{\hat \sigma_{A^k R^k E^k}} = \dalpha{\rho_{AE}}{\sigma_{AE}} \,. \label{eqn:intro_uhlmann}
\end{align}
The proof of this result relies heavily on the spectral pinching technique~\cite{tomamichel2015quantum, Sutter_book} and we refer to \cref{lem:min_over_ref_system} for details as well as a non-asymptotic statement with explicit error bounds.

We make use of this extended Uhlmann's theorem as follows: for the case we are interested in, the map $\cF$ in \cref{eqn:intro_old_chain_rule} satisfies a non-signalling condition.
We can show that this condition implies that for any state $\hat \sigma_{A^k R^k E^k} \sth \hat \sigma_{A^k E^k} = \sigma_{AE}^{\otimes k}$: 
\begin{align*}
\dalpha{\cM(\rho)}{\cF(\sigma)} = \frac{1}{k} \dalpha{\cM^{\ot k}(\rho_{ARE}^{\otimes k})}{\cF^{\ot k}(\hat \sigma_{A^k R^k E^k})} \,.
\end{align*}
Applying \cref{eqn:intro_old_chain_rule} to the r.h.s.~of this equality results in a bound that contains $\dalpha{\rho_{ARE}^{\otimes k}}{\hat \sigma_{A^k R^k E^k}}$.
We can now minimise over all states $\hat \sigma_{A^k R^k E^k} \sth \hat \sigma_{A^k E^k} = \sigma_{AE}^{\otimes k}$ and take the limit $k \to \infty$.
Then, \cref{eqn:intro_uhlmann} allows us to drop the $R$-system.
Therefore, under the non-signalling condition on $\cF$, we obtain the following improved chain rule for the sandwiched R\`enyi divergence, which might be of independent interest:
\begin{align*}
\dalpha{\cM(\rho)}{\cF(\sigma)} \leq \dalpha{\rho_{AE}}{\sigma_{AE}} + \lim_{n \to \infty} \frac{1}{n} \sup_{\omega_{R^n E^n \tilde E^n}}\dalpha{\cM^{\ot n}(\omega)}{\cF^{\ot n}(\omega)} \,.
\end{align*}
Using this chain rule, we can show that \cref{eqn:intro_h_old} still holds if $\had(A|RE)$ is replaced by $\had(A|E)$.
\item To remove the need for a regularisation in \cref{eqn:intro_h_old}, we show that due to the permutation-invariance of $\cM^{\ot n}$ and $\cF^{\ot n}$, for $\alpha > 1$ and $n \to \infty$ one can replace the optimization over $\omega_{R^n E^n \tilde E^n}$ with a fixed input state, namely the projector onto the symmetric subspace of $R^n E^n \tilde E^n$.
For this replacement, one incurs a small loss in $\alpha$, replacing it by $\frac{1}{2 - \alpha}$ (which is only slightly larger than $\alpha$ in the typical regime where $\alpha$ is close to 1).
The projector onto the symmetric subspace has a known representation as a mixture of tensor product states~\cite{christandl2009postselection}.
Combining these two steps, we show that the optimization over $\omega_{R^n E^n \tilde E^n}$ can be restricted to tensor product states, which means that the regularisation in \cref{eqn:intro_h_old} can be removed (see \cref{sec:removing_reg} for details): 
\begin{align*}
\lim_{n \to \infty} \frac{1}{n} \inf_{\omega_{R^n E^n \tilde E^n}} \had((A')^n|(E')^n \tilde E^n)_{\cM^{\ot n}(\omega)} \geq \inf_{\omega_{RE\tilde E}} \hadvar{\frac{1}{2-\alpha}}(A'|E' \tilde E)_{\cM(\omega)} \,.
\end{align*}
\end{enumerate}
Combining these results yields \cref{lem:chain_rule_intro} and, as a result, our generalised EAT.

\paragraph{Sample application: blind randomness expansion.}
The main advantage of the generalised EAT over previous  results is its broader applicability.
For example, as demonstrated in~\cite{eat_for_qkd}, the generalised EAT can be used to prove the security of prepare-and-measure QKD protocols, which is of immediate practical relevance, and can also simplify the analysis of entanglement-based QKD protocols as discussed in \cref{sec:qkd}.
Here, we focus on the application of our generalised EAT to mistrustful device-independent (DI) cryptography.
In mistrustful DI cryptography, multiple parties each use a quantum device to execute a protocol with one another.
Each party trusts neither its quantum device nor the other parties in the protocol.
Hence, from the point of view of one party, say Alice, all the remaining parties in the protocol are collectively treated as an adversary Eve, who may also have prepared Alice's untrusted device.

While the original EAT could be used to analyse DI protocols in which the parties trust each other, e.g.~DIQKD~\cite{arnon2019simple}, the setting of mistrustful DI cryptography is significantly harder to analyse because the adversary Eve actively participates in the protocol and may update her side information during the protocol in arbitrary ways.
Analysing such protocols requires the more general model of side information we deal with in this paper.
As a concrete example for mistrustful DI cryptography, we consider blind randomness expansion, a primitive introduced in~\cite{miller2017randomness}.
Previous work~\cite{miller2017randomness,fu2018local} could only analyse blind randomness expansion under the i.i.d.~assumption.
Here, we give the first proof that blind randomness expansion is possible for general adversaries.
The proof is a straightforward application of our generalised EAT and briefly sketched below; we refer to \cref{sec:blind_re} for a detailed treatment.

In blind randomness expansion, Alice receives an untrusted quantum device from the adversary Eve.
Alice then plays a non-local game, e.g.~the CHSH game, with this device and Eve, and wants to extract certified randomness from her outputs of the non-local game, i.e.~we need to show that Alice's outputs contain a certain amount of min-entropy conditioned on Eve's side information.
Concretely, in each round of the protocol Alice samples inputs $x$ and $y$ for the non-local game, inputs $x$ into her device to receive outcome $a$, and sends $y$ to Eve to receive outcome $b$; Alice then checks whether $(x,y,a,b)$ satisfies the winning condition of the non-local game.
For comparison, recall that in standard DI randomness expansion~\cite{colbeck2009quantum,Colbeck_2011,pironio2010random,vazirani2012certifiable,miller2016robust}, Alice receives two devices from Eve and uses them to play the non-local game.
This means that in standard DI randomness expansion, Eve never learns any of the inputs and outputs of the game.
In contrast, in blind randomness expansion Eve learns one of the inputs, $y$, and is free to choose one of the outputs, $b$, herself.
Hence, Eve can choose the output $b$ based on past side information and update her side information in each round of the protocol using the values of $y$ and $b$.

To analyse such a protocol, we use the setting of \cref{thm:intro_eat}, with $A_i$ representing the output of Alice's device $D$ from the non-local game in the $i$-th round, $R_i$ the internal memory of $D$ after the $i$-th round, and $E_i$ Eve's side information after the $i$-th round, which can be generated arbitrarily from entanglement shared between Eve and $D$ at the start of the protocol and information Eve gathered during the first $i$ rounds of the protocol.
The map $\cM_i$ describes one round of the protocol, and because Alice's device and Eve cannot communicate during the protocol it is easy to show that the non-signalling condition from \cref{thm:intro_eat} is satisfied.
Therefore, we can apply \cref{thm:intro_eat} to lower-bound Alice's conditional min-entropy $\hmin(A^n|E_n)$ in terms of the single-round quantities $\inf_{\omega} H(A_i|E_i \tilde E_{i-1} )_{\cM_i(\omega)}$.\footnote{In fact, in order for this single-round quantity to be positive one has to restrict the infimum to input states that allow the non-local game to be won with a certain probability. This requires using the generalised EAT with testing (\cref{sec:testing}), not \cref{thm:intro_eat}. We refer to \cref{sec:blind_re} for details.}
This single-round quantity corresponds to the i.i.d.~scenario, i.e.~the generalised EAT has reduced the problem of showing blind randomness expansion against general adversaries to the (much simpler) problem of showing it against i.i.d.~adversaries.
The quantity $\inf_{\omega} H(A_i|E_i \tilde E_{i-1} )_{\cM_i(\omega)}$ can be computed using a general numerical technique~\cite{brown2021computing}, and for certain classes of non-local games it may also be possible to find an analytical lower bound using ideas from~\cite{miller2017randomness,fu2018local}.
Inserting the single-round bound, we obtain a lower bound on $\hmin(A^n|E_n)$ that scales linearly with $n$, showing that blind randomness expansion is possible against general adversaries.
We also note that as explained in~\cite{miller2017randomness}, this result immediately implies that \emph{unbounded} randomness expansion is possible with only three devices, whereas previous works required four devices~\cite{miller2016robust,chung2014physical,coudron_infinite}.

\paragraph{Future work.}
In this work, we have developed a new information-theoretic tool, the generalised EAT.
The generalised EAT deals with a more general model of side information than previous techniques and is therefore more broadly and easily applicable.
In particular, our generalised EAT can be used to analyse mistrustful DI cryptography.
We have demonstrated this by giving the first proof of blind randomness expansion against general adversaries.
We expect that the generalised EAT could similarly be used for other protocols such as two-party cryptography in the noisy storage model~\cite{kaniewski2016device} or certified deletion~\cite{fu2018local,broadbent2020quantum,kundu2020composably}.
In addition to mistrustful DI cryptography, our result can also be used to give new proofs for device-dependent QKD, as demonstrated in \cref{sec:qkd} and~\cite{eat_for_qkd}, and is applicable to proving the security of commercial quantum random number generators, which typically have correlations between rounds due to experimental imperfections~\cite{frauchiger2013true}.

Beyond cryptography, the generalised EAT is useful whenever one is interested in bounding the min-entropy of a large system that can be decomposed in a sequential way.
Such problems are abundant in physics. For example, the dynamics of an open quantum system can be described in terms of interactions that take place sequentially with different parts of the system's  environment~\cite{Campbell2021}.   
In quantum thermodynamics, such a description is commonly employed to model the thermalisation of a system that is brought in contact with a thermal bath. For a lack of techniques, the entropy flow during a thermalisation process of this type is usually quantified in terms of von Neumann entropy rather than the operationally more relevant smooth min- and max-entropies~\cite{delRio2016}.
The generalised EAT may be used to remedy this situation. A similar situation arises in quantum gravity, where smooth entropies play a role in the study of black holes ~\cite{Akers2021}.

In a different direction, one can also try to further improve the generalised EAT itself.
Compared to the original EAT~\cite{eat}, our generalised EAT features a more general model of side information and a weaker condition on the relation between different rounds, replacing the Markov condition of~\cite{eat} with our weaker non-signalling condition in \cref{eqn:intro_ns}.
It is natural to ask whether a further step in this direction is possible: while the model of side information we consider is fully general, it may be possible to replace the non-signalling condition with a weaker requirement.
We have argued above that our non-signalling condition appears to be the most general way of stating the requirement that future side information does not reveal information about past outputs, which seems necessary for an EAT-like theorem.\footnote{In an EAT-like theorem, the entropy contribution from a particular round $i$ has to be calculated conditioned on the side information revealed in that round because we want to analyse the process round-by-round, not globally. If a future round revealed additional side information, then the total entropy contributed by round $i$ would decrease, but there is no way of accounting for that in an EAT-like theorem that simply sums up single-round contributions. As an extreme case, the last round of the process could reveal all prior outputs as side information, so that the total amount of conditional entropy produced by the process is 0, but single-round entropy contributions could be positive. This demonstrates the need for some condition that enforces that future side information does not reveal information about past outputs. We note that this does not mean that there is no way of proving an entropy lower bound in more general settings: for example,~\cite{jain2022direct} do show a bound on the entropy produced by parallel repeated non-local games, but this requires a global analysis.}
It would be interesting to formalise this intuition and see whether our theorem is provably ``tight'' in terms of the conditions placed on the sequential process.
Furthermore, it might be possible to improve the way the statistical condition in~\cref{thm:with_testing} is dealt with in the proof, e.g.~using ideas from~\cite{qpe1,qpe2}.

Finally, one could attempt to extend entropy accumulation from conditional entropies to relative entropies.
Such a \emph{relative entropy accumulation theorem} (REAT) would be the following statement: for two sequences of channels $\{\cE_1, \dots, \cE_n\}$ and $\{\cF_1, \dots, \cF_n\}$ (where $\cF_i$ need not necessarily be trace-preserving), and $\eps > 0$,
\begin{align*}
\dmaxeps{\cE_n \circ \dots \circ \cE_1}{\cF_n \circ \dots \circ \cF_1} \stackrel{?}{\leq} \sum_{i=1}^n \dreg{\cE_i}{\cF_i} + O(\sqrt{n}) \,.
\end{align*}
Here, $D_{\setft{max}}^\eps$ is the $\eps$-smooth max-relative entropy~\cite{tomamichel2015quantum} and we used the (regularised) channel divergences defined in \cref{def:channel_dalpha}.
The key technical challenge in proving this result is to show that the regularised channel divergence $\drega{\cE_i}{\cF_i}$ is continuous in $\alpha$ at $\alpha = 1$, which is an important technical open question.
If one had such a continuity statement and the maps $\cF_i$ additionally satisfied a non-signalling condition (which is not required for the statement above), one could also use our~\cref{thm:improved_chain_rule} to derive a more general REAT, which would imply our generalised EAT.

\paragraph{Acknowledgements} We thank Rotem Arnon-Friedman, Peter Brown, Kun Fang, Raban Iten, Joseph M.~Renes, Ernest Tan, Jinzhao Wang, John Wright, and Yuxiang Yang for helpful discussions and the anonymous FOCS reviewers for useful comments. We further thank Mario Berta and Marco Tomamichel for insights on~\cref{lem:standard_chain_rule}, and Fr\'ed\'eric Dupuis and Carl Miller for discussions about blind randomness expansion. Part of this work was carried out when DS was with the Institute for Theoretical Physics at ETH Zurich. 
TM and RR acknowledge support from the National Centres of Competence in Research (NCCRs) QSIT (funded by the Swiss National Science Foundation under grant number 51NF40-185902) and SwissMAP, the Air Force Office of Scientific Research (AFOSR) via project No. FA9550-19-1-0202, the SNSF project No. 200021\_188541 and the QuantERA project eDICT.
OF acknowledges funding from the European Research Council (ERC Grant Agreement No.~851716).
\newpage

\section{Preliminaries}
\subsection{Notation} \label{sec:notation}
The set of positive semidefinite operators on a quantum system $A$ (with associated Hilbert space $\H_A$) is denoted by $\pos(A)$.
The set of quantum states is given by $\states(A) = \{\rho \in \pos(A) \, | \, \tr{\rho} = 1\}$. 
The set of completely positive maps from linear operators on $A$ to linear operators on $A'$ is denoted by $\cp(A, A')$.
If such a map is additionally trace preserving, we call it a quantum channel and denote the set of such maps by $\cptp(A, A')$.
The identity channel on system $A$ is denoted as $\id_A$.
The spectral norm is denoted by $\norm{\cdot}_{\infty}$.

If $A$ is a quantum system and $X$ is a classical system with alphabet $\cX$, we call $\rho \in S(XA)$ a cq-state and can expand it as 
\begin{align*}
\rho_{XA} = \sum_{x \in \cX} \proj{x} \ot \rho_{A, x}
\end{align*}
for subnormalised $\rho_{A, x} \in \pos(A)$. 
For $\Omega \subset \cX$, we define the conditional state
\begin{align*}
\rho_{XA|\Omega} = \frac{1}{\prs{\rho}{\Omega}} \sum_{x \in \Omega} \proj{x} \ot \rho_{A, x} \,, \quad \text{where } \; \prs{\rho}{\Omega} \deq \sum_{x \in \Omega} \tr{\rho_{A, x}} \,.
\end{align*}
If $\Omega = \{x\}$, we also write $\rho_{XA|x}$ for $\rho_{XA|\Omega}$.

\subsection{R\'enyi divergence and entropy} \label{sec:entropies_prelim}
We will make extensive use of the sandwiched R\'enyi divergence~\cite{MLDSFT13,WWY14} and quantities associated with it, namely R\'enyi entropies and  channel divergences.
We recall the relevant definitions here.
\begin{definition}[R\'enyi divergence] \label{def:dalpha}
For $\rho \in \states(A)$, $\sigma \in \pos(A)$, and $\alpha \in [1/2,1) \cup (1,\infty)$ the (sandwiched) R\'enyi divergence is defined as
\begin{align*}
\dalpha{\rho}{\sigma} \deq \frac{1}{\alpha -1} \log \tr{ \Big(\sigma^{\frac{1-\alpha}{2 \alpha}} \rho \sigma^{\frac{1-\alpha}{2 \alpha}}\Big)^{\alpha}}
\end{align*}
for $\supp(\rho) \subseteq \supp(\sigma)$, and $+\infty$ otherwise.
\end{definition}

From the R\'enyi divergence, one can define the conditional R\'enyi entropies as follows (see~\cite{tomamichel2015quantum} for more details). 
\begin{definition}[Conditional R\'enyi entropy] \label{def:halpha}
For a bipartite state $\rho_{AB} \in \states(AB)$ and $\alpha \in [1/2,1) \cup (1,\infty)$, we define the following two conditional R\'enyi entropies:
\begin{align*}
    \had(A|B)_{\rho}=-D_{\alpha}(\rho_{AB} \| \1_A \otimes \rho_B) \qquad \textnormal{and} \qquad \hau(A|B)_\rho= \sup_{\sigma_B \in \states(B)}-D_{\alpha}(\rho_{AB} \| \1_A \otimes \sigma_B)\, .
\end{align*}
\end{definition}
From the definition it is clear that $\had(A|B) \leq \hau(A|B)$.
Importantly, a relation for the other direction also holds.
\begin{lemma}[{\cite[Corollary 5.3]{tomamichel2015quantum}}] \label{lem:halpha_up_down}
For $\rho_{AB} \in \states(AB)$ and $\alpha \in (1,2)$:
\begin{align*}
\had(A|B)_{\rho} \geq \hauvar{\frac{1}{2-\alpha}}(A|B)_{\rho} \,.
\end{align*}
\end{lemma}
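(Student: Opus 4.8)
The plan is to peel off the entropy definitions and reduce the claim to a one-line inequality between sandwiched R\'enyi divergences, and then settle that inequality with an operator inequality. Put $\beta \deq \frac{1}{2-\alpha}$; for $\alpha\in(1,2)$ one has $\beta-\alpha=\frac{(\alpha-1)^2}{2-\alpha}>0$, so $1<\alpha<\beta<\infty$. Unpacking \cref{def:halpha}, the assertion $\had(A|B)_\rho \geq \hauvar{\beta}(A|B)_\rho$ is equivalent to
\begin{equation*}
\dalpha{\rho_{AB}}{\1_A \ot \rho_B} \;\leq\; \dvaralpha{\beta}{\rho_{AB}}{\1_A \ot \sigma_B} \qquad \text{for every } \sigma_B \in \states(B),
\end{equation*}
since the right-hand side of the claimed inequality equals $-\inf_{\sigma_B}\dvaralpha{\beta}{\rho_{AB}}{\1_A \ot \sigma_B}$ while the left-hand side equals $-\dalpha{\rho_{AB}}{\1_A \ot \rho_B}$. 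So it suffices to establish this divergence inequality for an arbitrary state $\sigma_B$ (and, if convenient, one may even restrict attention to the $\sigma_B$ attaining the infimum on the right, for which a fixed-point characterisation is available).

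The specific exponent $\beta=\frac{1}{2-\alpha}$ enters through an ``exponent-matching'' identity: $\frac{1-\beta}{2\beta}=\frac{1-\alpha}{2}$ and $\frac{1}{\beta-1}=\frac{2-\alpha}{\alpha-1}$. Feeding these into \cref{def:dalpha} turns the right-hand side into $\frac{2-\alpha}{\alpha-1}\log\tr{\big((\1_A \ot \sigma_B^{\frac{1-\alpha}{2}})\,\rho_{AB}\,(\1_A \ot \sigma_B^{\frac{1-\alpha}{2}})\big)^{\frac{1}{2-\alpha}}}$ and leaves the left-hand side as $\frac{1}{\alpha-1}\log\tr{\big((\1_A \ot \rho_B^{\frac{1-\alpha}{2\alpha}})\,\rho_{AB}\,(\1_A \ot \rho_B^{\frac{1-\alpha}{2\alpha}})\big)^{\alpha}}$. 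Since $\alpha-1>0$, exponentiating shows the divergence inequality is equivalent to the single trace inequality
\begin{equation*}
\Big(\tr{\big((\1_A \ot \sigma_B^{\frac{1-\alpha}{2}})\,\rho_{AB}\,(\1_A \ot \sigma_B^{\frac{1-\alpha}{2}})\big)^{\frac{1}{2-\alpha}}}\Big)^{2-\alpha} \;\geq\; \tr{\big((\1_A \ot \rho_B^{\frac{1-\alpha}{2\alpha}})\,\rho_{AB}\,(\1_A \ot \rho_B^{\frac{1-\alpha}{2\alpha}})\big)^{\alpha}},
\end{equation*}
in which $\alpha$ and $\frac{1}{2-\alpha}$ now appear as Schatten-norm exponents.

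The heart of the matter is this trace inequality. I would attack it with the Araki--Lieb--Thirring inequality and its relatives, used to trade the outer powers $\alpha$ and $\frac{1}{2-\alpha}$ against the $\rho_B$- and $\sigma_B$-conjugations, together with monotonicity of $\dvaralpha{\gamma}{\cdot}{\cdot}$ in $\gamma$ to absorb the remaining gap between $\alpha$ and $\beta$; the essential point is that monotonicity in the order alone is \emph{not} enough, because $\rho_B$ is not the optimal second argument of $\dalpha{\rho_{AB}}{\1_A \ot \cdot}$, so the extra room opened up by the exponent matching is genuinely needed. This is in effect \cite[Corollary~5.3]{tomamichel2015quantum}, which is what we cite. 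A complementary viewpoint one may prefer: introduce a purification $\rho_{ABC}$ of $\rho_{AB}$ and apply the duality relation for sandwiched conditional R\'enyi entropies -- which exchanges $H_\gamma$ with $H^{\shortuparrow}_{\gamma'}$ at orders with $\frac1\gamma+\frac1{\gamma'}=2$ -- to both sides of the claim; this rewrites it as the \emph{same} inequality on the complementary system with $\alpha$ replaced by $\frac1\alpha\in(\frac12,1)$, reducing the $\alpha>1$ regime to the $\alpha<1$ regime. I expect the trace inequality (equivalently, its $\alpha<1$ incarnation) to be the main obstacle: although the bound is somewhat lossy, so the argument need not be delicate, one still has to pick the right Lieb--Thirring-type estimate and apply it in the right order. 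Once it is in place, taking logarithms and dividing by $\alpha-1>0$ returns the divergence inequality for every $\sigma_B$, hence the lemma.
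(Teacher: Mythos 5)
The paper does not prove this lemma; it is taken verbatim from \cite[Corollary~5.3]{tomamichel2015quantum} with a citation, so there is no in-paper proof to compare against. Judged on its own merits, your write-up correctly identifies the right reduction but leaves the mathematical content unproved.

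What you have done correctly: unpacking \cref{def:halpha} reduces the claim to showing $\dalpha{\rho_{AB}}{\1_A\ot\rho_B}\le\dvaralpha{\beta}{\rho_{AB}}{\1_A\ot\sigma_B}$ for all $\sigma_B$ with $\beta=\tfrac{1}{2-\alpha}$; the exponent identities $\tfrac{1-\beta}{2\beta}=\tfrac{1-\alpha}{2}$ and $\tfrac{1}{\beta-1}=\tfrac{2-\alpha}{\alpha-1}$ are correct; and the observation that monotonicity of $D_\gamma$ in $\gamma$ alone is insufficient (because $\rho_B$ is not the optimiser for the $\beta$-divergence over $\sigma_B$) is exactly the right diagnosis of where the difficulty sits. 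The resulting trace inequality is the crux.

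The gap is that the trace inequality is never established. ``Attack it with Araki--Lieb--Thirring and its relatives'' is a plan, not a proof, and it is not a routine application: ALT trades outer powers against a single fixed conjugating operator, whereas your target inequality has $\rho_B$ on one side and an arbitrary $\sigma_B$ on the other, with an additional outer power of $2-\alpha$ on the trace. The actual route in \cite[Corollary~5.3]{tomamichel2015quantum} passes through the Petz R\'enyi entropy: for $\alpha>1$, ALT gives $\had(A|B)\ge \bar H_\alpha(A|B)$ (Petz), and then a duality relation between the Petz ``down'' entropy on $AB$ and a sandwiched entropy on the complementary system $AC$, together with $H^{\downarrow}\le H^{\shortuparrow}$, produces $\hauvar{\frac{1}{2-\alpha}}$. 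You never invoke the Petz quantity, so the chain is not closed.

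Your fallback duality argument also has a bug: the relation you quote, $\tfrac{1}{\gamma}+\tfrac{1}{\gamma'}=2$, is the duality between $H^{\shortuparrow}_\gamma(A|B)$ and $H^{\shortuparrow}_{\gamma'}(A|C)$ -- both with the up-arrow. It does not exchange $\had$ (the unoptimised, ``down'' entropy) with any $H^{\shortuparrow}$; the duality partner of $\had(A|B)$ at order $\alpha$ is a \emph{different} R\'enyi family (Petz or geometric, depending on convention) at order $1/\alpha$, not the sandwiched $H^{\shortuparrow}$. So the proposed ``apply the $\tfrac1\gamma+\tfrac1{\gamma'}=2$ duality to both sides'' step does not type-check for the left-hand side. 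One minor additional caveat: restricting to ``the $\sigma_B$ attaining the infimum'' presupposes attainment, which you would need to justify (it does hold here by compactness and continuity, but should be said if used). In summary, the framing and the exponent bookkeeping are right, but the core inequality is asserted rather than proved, and the alternative path relies on a misstated duality.
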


In the limit $\alpha \to 1$ the sandwiched R\'enyi divergence converges to the relative entropy:
\begin{align*}
\lim_{\alpha \to 1} \dalpha{\rho}{\sigma} = D(\rho \| \sigma) = \tr{\rho (\log \rho - \log \sigma)} \,.
\end{align*}
Accordingly, the conditional R\'enyi entropy converges to the conditional von Neumann entropy:
\begin{align*}
\lim_{\alpha \to 1} \had(A|B)_{\rho} = H(A|B)_\rho = H(AB)_\rho - H(B)_\rho = -\tr{\rho_{AB} \log \rho_{AB}} + \tr{\rho_{B} \log \rho_{B}}\,.
\end{align*}

Conversely, in the limit $\alpha \to \infty$, the R\'enyi entropy $\hau$ converges to the min-entropy. 
We will make use of a smoothed version of the min-entropy, which is defined as follows~\cite{renner_thesis}. 
\begin{definition}[Smoothed min-entropy] \label{def:min_entropy}
For $\rho_{AB} \in \states(AB)$ and $\eps \in [0,1]$, the $\eps$-smoothed min-entropy of $A$ conditioned on $B$ is
\begin{align*}
    H_{\min}^\eps(A|B)_{\rho} = - \log \inf_{\tilde \rho_{AB} \in \cB_{\eps}(\rho_{AB})} \inf_{\sigma_{B} \in \states(B)} \norm{\sigma_B^{-\frac{1}{2}} \tilde \rho_{AB} \sigma_B^{-\frac{1}{2}} }_{\infty} \, ,
\end{align*}
where $\norm{\cdot}_{\infty}$ denotes the spectral norm and $\cB_{\eps}(\rho_{AB})$ is the $\eps$-ball around $\rho_{AB}$ in term of the purified dis\-tance~\cite{tomamichel2015quantum}.
\end{definition}

Finally, we can extend the definition of the R\'enyi divergence from states to channels. 
The resulting quantity, the channel divergence (and its regularised version), will play an important role in the rest of the manuscript.
\begin{definition}[Channel divergence] \label{def:channel_dalpha}
For $\cE \in \cptp(A, A')$, $\cF \in \cp(A, A')$, and $\alpha \in [1/2,1) \cup (1,\infty)$, the (stabilised) channel divergence is defined as 
\begin{align} \label{eq:channelDiv}
\dalpha{\cE}{\cF} = \sup_{\omega \in \states(A\tilde A)} \dalpha{\cE(\omega)}{\cF(\omega)} \,,
\end{align}
where without loss of generality $\tilde A \equiv A$.
The regularised channel divergence is defined as
\begin{align*}
\drega{\cE}{\cF} \deq \lim_{n \to \infty} \frac{1}{n} \dalpha{\cE^{\otimes n}}{\cF^{\otimes n}} = \sup_n \frac{1}{n} \dalpha{\cE^{\otimes n}}{\cF^{\otimes n}} \,.
\end{align*}
\end{definition}

We note that the channel divergence is in general not additive under the tensor product~\cite[Proposition~3.1]{FFRS20}, so the regularised channel divergence can be strictly larger that the non-regularised one, i.e., $\drega{\cE}{\cF} > \dalpha{\cE}{\cF}$.
The regularised channel divergence, however, does satisfy an additivity property:
\begin{align}
\drega{\cE^{\otimes k}}{\cF^{\otimes k}} 
= \lim_{n \to \infty} \frac{1}{n} \dalpha{\cE^{\otimes k  n}}{\cF^{\otimes k  n}} 
= k \lim_{n \to \infty} \frac{1}{n'} \dalpha{\cE^{\otimes n'}}{\cF^{\otimes n'}}
= k  \, \drega{\cE}{\cF} \,, \label{eqn:reg_additivity}
\end{align}
where we switched to the index $n' = k n$ for the second equality.

\subsection{Spectral pinching}
A key technical tool in our proof will be the use of spectral pinching maps~\cite{Hayashi_2017}, which are defined as follows (see~\cite[Chapter~3]{Sutter_book} for a more detailed introduction).
\begin{definition}[Spectral pinching map]
Let $\rho \in \pos(A)$ with spectral decomposition $\rho=\sum_{\lambda} \lambda P_{\lambda}$, where $\lambda \in \spec(\rho) \subset \R_{\geq 0}$ are the distinct eigenvalues of $\rho$ and $P_{\lambda}$ are mutually orthogonal projectors. 
The \emph{(spectral) pinching map} $\cP_{\rho} \in \cptp(A, A)$ associated with $\rho$ is given by
\begin{align*}
\cP_{\rho}(\omega) \deq \sum_{\lambda \in \spec(\rho)}  P_\lambda \,  \omega \,  P_{\lambda} \,.
\end{align*}
\end{definition}
We will need a few basic properties of pinching maps.
\begin{lemma}[Properties of pinching maps] \label{lem:pinching_property}
For any $\rho,\sigma \in \pos(A)$, the following properties hold:
\begin{enumerate}
    \item  Invariance: $\cP_{\rho}(\rho)=\rho$\,.
    \item  Commutation of pinched state: $[\sigma, \cP_{\sigma}(\rho)] = 0$\,.
    \item  Pinching inequality: \smash{$\cP_{\sigma}(\rho) \geq \frac{1}{|\spec(\sigma)|}\rho$}\,. 
    \item Commutation of pinching maps: if $[\rho,\sigma]=0$, then $\cP_{\rho} \circ \cP_{\sigma} = \cP_{\sigma} \circ \cP_{\rho}$\,.
    \item Partial trace: $\ptr{B}{\cP_{\rho_A \otimes \1_B}(\omega_{AB})} = \cP_{\rho_A}(\omega_A) \quad \forall\,\omega_{AB} \in \pos(AB)$.
\end{enumerate}
\end{lemma}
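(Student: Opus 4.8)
The plan is to verify each of the five properties directly from the spectral decompositions, using only the orthogonality relations of the spectral projectors. Throughout, write $\rho = \sum_\lambda \lambda P_\lambda$ and $\sigma = \sum_\mu \mu Q_\mu$, where the $P_\lambda$ are mutually orthogonal projectors summing to $\1$ (and likewise the $Q_\mu$).

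For property~(i), I would expand $\cP_\rho(\rho) = \sum_{\lambda,\lambda'} \lambda'\, P_\lambda P_{\lambda'} P_\lambda$ and collapse the double sum via $P_\lambda P_{\lambda'} = \delta_{\lambda\lambda'} P_\lambda$ to recover $\sum_\lambda \lambda P_\lambda = \rho$. For property~(ii), both $\sigma\, \cP_\sigma(\rho)$ and $\cP_\sigma(\rho)\, \sigma$ reduce to $\sum_\mu \mu\, Q_\mu \rho\, Q_\mu$ after applying $Q_\mu Q_{\mu'} = \delta_{\mu\mu'} Q_\mu$, so the commutator vanishes. Property~(v) follows by noting that the spectral projectors of $\rho_A \ot \1_B$ are exactly the $P_\lambda \ot \1_B$, and then using cyclicity of the partial trace over $B$ to pull the $B$-identity factors through, i.e.\ $\ptr{B}{(P_\lambda \ot \1_B)\, \omega_{AB}\, (P_\lambda \ot \1_B)} = P_\lambda\, \omega_A\, P_\lambda$; summing over $\lambda$ gives $\cP_{\rho_A}(\omega_A)$. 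For property~(iv), the hypothesis $[\rho,\sigma] = 0$ implies $[P_\lambda, Q_\mu] = 0$ for all $\lambda, \mu$ (commuting Hermitian operators share an eigenbasis), so in the expression $\cP_\rho \circ \cP_\sigma(\omega) = \sum_{\lambda,\mu} P_\lambda Q_\mu\, \omega\, Q_\mu P_\lambda$ one may freely interchange $P_\lambda$ and $Q_\mu$ to obtain $\sum_{\lambda,\mu} Q_\mu P_\lambda\, \omega\, P_\lambda Q_\mu = \cP_\sigma \circ \cP_\rho(\omega)$.

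The only property that requires more than projector bookkeeping is the pinching inequality~(iii), for which I would use the standard realisation of the pinching map as a group average. Writing $n = |\spec(\sigma)|$ and fixing an enumeration $Q_1, \dots, Q_n$ of the spectral projectors, set $U \deq \sum_{j=1}^n e^{2\pi i j / n} Q_j$, which is unitary. The discrete Fourier orthogonality $\frac{1}{n}\sum_{k=0}^{n-1} e^{2\pi i k(j-j')/n} = \delta_{jj'}$ then yields $\frac{1}{n}\sum_{k=0}^{n-1} U^k \rho\, (U^k)^\dagger = \sum_{j=1}^n Q_j \rho\, Q_j = \cP_\sigma(\rho)$. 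Since $\rho \in \pos(A)$, each summand $U^k \rho\, (U^k)^\dagger$ is positive semidefinite, and isolating the $k = 0$ term $\rho$ from the average gives $\cP_\sigma(\rho) \geq \frac{1}{n}\rho$. I expect this averaging identity — and in particular applying the Fourier cancellation correctly — to be the only step needing genuine care; the remaining four properties are immediate computations with orthogonal projectors.
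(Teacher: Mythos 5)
Your proof is correct. The paper itself only writes out arguments for (iv) and (v), deferring (i)--(iii) to the cited literature (Tomamichel's lecture notes and Sutter's monograph); you supply self-contained proofs of all five. For (i) and (ii), your direct projector bookkeeping is exactly what one would write. For (iii), your group-averaging identity $\cP_\sigma(\rho) = \frac{1}{n}\sum_{k=0}^{n-1} U^k \rho\, (U^k)^\dagger$ with $U = \sum_j e^{2\pi i j/n} Q_j$ is the standard proof from the references the paper cites, and the Fourier cancellation step is carried out correctly. For (iv), your observation that $[\rho,\sigma]=0$ forces $[P_\lambda, Q_\mu]=0$ (via a shared eigenbasis, or equivalently because each spectral projector is a polynomial in its parent operator) is the same idea the paper uses when it constructs the joint eigenbasis $\{\ket{x_i}\}$ and expands $P_\lambda$ and $Q_\mu$ in it. For (v), you and the paper both observe that the eigenprojectors of $\rho_A \otimes \1_B$ are $P_\lambda \otimes \1_B$ and then pull the $A$-operators out of the partial trace; your word \emph{cyclicity} is a slight misnomer for the identity $\ptr{B}{(X_A \otimes \1_B)\,\omega_{AB}\,(Y_A \otimes \1_B)} = X_A\, \omega_A\, Y_A$, but the computation is the intended one. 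In short, the only real difference is that you chose to prove (i)--(iii) explicitly rather than cite them; the content is the same.
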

\begin{proof}
Properties (i)--(iii) follow from the definition and~\cite[Chapter 2.6.3]{tomamichel09} or \cite[Lemma~3.5]{Sutter_book}.

For the fourth statement, note that since $[\rho, \sigma] = 0$, there exists a joint orthonormal eigenbasis $\{\ket{x_i}\}$ of $\rho$ and $\sigma$.
Let $P_\lambda$ be the projector onto the eigenspace of $\rho$ with eigenvalue $\lambda$, and $Q_\mu$ the projector onto the eigenspace of $\sigma$ with eigenvalue $\mu$.
We can expand 
\begin{align*}
P_\lambda = \sum_{i \sth \rho \ket{x_i} = \lambda \ket{x_i}} \proj{x_i}  \qquad \textnormal{and} \qquad Q_\mu = \sum_{j \sth \sigma \ket{x_j} = \mu \ket{x_j}} \proj{x_j} \,.
\end{align*}
Since $\{\ket{x_i}\}$ is a family of orthonormal vectors, 
\begin{align*}
P_\lambda Q_{\mu} = \sum_{\substack{i \sth \rho \ket{x_i} = \lambda \ket{x_i} \\ \tand \sigma \ket{x_i} = \mu \ket{x_i}}} \proj{x_i} = Q_{\mu} P_\lambda \,,
\end{align*}
which implies commutation of the pinching maps.

For the fifth statement, note that if we write $\rho = \sum_{\lambda} \lambda P_\lambda$ with eigenprojectors $P_{\lambda}$, then the set of eigenprojectors of $\rho_A \ot \1_B$ is simply $\{P_\lambda \ot \1_B\}$.
Hence, 
\begin{align*}
\ptr{B}{\cP_{\rho_A \otimes \1_B}(\omega_{AB})} = \sum_{\lambda} \ptr{B}{P_\lambda \ot \1_B \omega_{AB} P_\lambda \ot \1_B} = \sum_{\lambda} P_\lambda \ptr{B}{\omega_{AB}} P_\lambda = \cP_{\rho_A}(\omega_A) \,.
\end{align*}
\end{proof}

It is often useful to use the pinching map associated with tensor power states, i.e.,  $\cP_{\rho^{\otimes n}}$.
This is because for $\rho \in \pos(A)$, the factor $|\spec(\rho^{\otimes n})|$ from the pinching inequality (see~\cref{lem:pinching_property}) only scales polynomially in $n$ (see e.g.~\cite[Remark~3.9]{Sutter_book}):
\begin{align}
|\spec(\rho^{\otimes n})| \leq (n+1)^{\dim(A)-1} \,. \label{eqn:tp_spec_bound}
\end{align}
In fact, we can show a similar property for all permutation-invariant states, not just tensor product states.

\begin{lemma} \label{lem:permutation_inv_small_spectrum}
Let $\rho \in \pos(A^{\ot n})$ be permutation invariant and denote $d = \dim(A)$. Then 
\begin{align*}
|\spec(\rho)| \leq (n+d)^{d(d+1)/2} \,.
\end{align*}
\end{lemma}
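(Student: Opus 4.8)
The plan is to bound the number of distinct eigenvalues of a permutation-invariant $\rho \in \pos(A^{\ot n})$ by counting how many distinct eigenvalues can possibly occur, using Schur--Weyl duality to decompose the space. Under the action of $S_n$, the Hilbert space decomposes as $\H_A^{\ot n} \cong \bigoplus_\lambda \H_\lambda \ot \cK_\lambda$, where the sum runs over Young diagrams $\lambda$ with $n$ boxes and at most $d$ rows, $\H_\lambda$ is an irreducible $\mathrm{GL}(d)$-representation, and $\cK_\lambda$ is an irreducible $S_n$-representation. A permutation-invariant operator $\rho$ commutes with the $S_n$-action, so by Schur's lemma it acts as $\rho = \bigoplus_\lambda (\rho_\lambda \ot \1_{\cK_\lambda})$ for some operator $\rho_\lambda$ on $\H_\lambda$. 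Hence $\spec(\rho) = \bigcup_\lambda \spec(\rho_\lambda)$, and so $|\spec(\rho)| \leq \sum_\lambda \dim(\H_\lambda)$.

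The next step is to bound $\sum_\lambda \dim(\H_\lambda)$. Here I would use two standard facts: the number of Young diagrams with $n$ boxes and at most $d$ rows is at most $(n+1)^{d-1}$ (each of the first $d-1$ row lengths lies in $\{0,1,\dots,n\}$), and the dimension of the irreducible $\mathrm{GL}(d)$-representation $\H_\lambda$ is bounded by $(n+1)^{d(d-1)/2}$ (this is the standard bound, e.g.\ via Weyl's dimension formula, as each factor $\frac{\lambda_i - \lambda_j + j - i}{j-i}$ over the $\binom{d}{2}$ pairs $i<j$ is at most $n+1$). Multiplying these gives $\sum_\lambda \dim(\H_\lambda) \leq (n+1)^{d-1} \cdot (n+1)^{d(d-1)/2} = (n+1)^{(d-1)(d+2)/2}$. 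Since $(d-1)(d+2)/2 \leq d(d+1)/2$ and $(n+1) \leq (n+d)$, this is at most $(n+d)^{d(d+1)/2}$, which is the claimed bound. (One could alternatively avoid Schur--Weyl entirely: cruder but self-contained, observe that $\rho$ is determined by its matrix entries on a basis, and the permutation-invariance forces entries to be constant on $S_n$-orbits; combined with a bound on the number of orbits of pairs of basis vectors this also yields a polynomial-in-$n$ bound, though with a worse exponent.)

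The main obstacle, such as it is, is getting the exponent to come out to exactly $d(d+1)/2$ rather than something slightly larger; the representation-theoretic route gives $(d-1)(d+2)/2$ naturally, which is comfortably below the target, so there is slack to absorb any minor inefficiency in the dimension estimates. I would therefore keep the bookkeeping loose and simply verify at the end that $(n+1)^{(d-1)(d+2)/2} \leq (n+d)^{d(d+1)/2}$, which holds since the exponent on the left is smaller and $n+1 \leq n+d$. The only genuine care needed is in the Schur--Weyl step: one must note that $\H_A^{\ot n}$ as an $S_n$-module has every isotypic component of the form $(\text{mult space}) \ot \cK_\lambda$ with the multiplicity space being exactly $\H_\lambda$, and that $\rho$ being permutation-invariant means it lies in the commutant, hence is block-diagonal with blocks indexed by $\lambda$ and acting only on the $\H_\lambda$ factor — this is precisely the content of Schur--Weyl duality plus Schur's lemma, so it can be cited rather than reproved.
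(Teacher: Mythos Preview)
Your proof is correct and follows essentially the same approach as the paper: both use Schur--Weyl duality to write the permutation-invariant operator as $\bigoplus_\lambda \rho_\lambda \otimes \1$, bound $|\spec(\rho)|$ by $\sum_\lambda \dim(\H_\lambda)$, and then combine standard bounds on the number of Young diagrams and the dimension of the $\mathrm{GL}(d)$-irreps. Your intermediate estimates are actually slightly sharper (you use $(n+1)^{d-1}$ for the number of diagrams and $(n+1)^{d(d-1)/2}$ for the dimensions, versus the paper's $(n+1)^d$ and $(n+d)^{d(d-1)/2}$), yielding the same final bound with a bit of slack to spare.
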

\begin{proof}
By Schur-Weyl duality, since $\rho$ is permutation-invariant, we have
\begin{align*}
\rho \cong \bigoplus_{\lambda \in \cI_{d, n}} \rho(\lambda)_{Q_\lambda} \otimes \1_{P_{\lambda}} \,,
\end{align*}
where $\cong$ denotes equality up to unitary conjugation (which leaves the spectrum invariant), $\cI_{d, n}$ is the set of Young diagrams with $n$ boxes and at most $d$ rows, $Q_\lambda$ and $P_\lambda$ are systems whose details need not concern us, and $\rho(\lambda) \in \pos(Q_\lambda)$.
From this it is clear that 
\begin{align*}
|\spec(\rho)| \leq \sum_{\lambda \in \cI_{d, n}} |\spec(\rho(\lambda))| \leq \sum_{\lambda \in \cI_{d, n}} \dim(Q_{\lambda}) \,.
\end{align*}
It is known that $|\cI_{d, n}| \leq (n+1)^d$ and $\dim(Q_\lambda) \leq (n+d)^{d(d-1)/2}$ (see e.g.~\cite[Section 6.2]{harrow_thesis}).
Hence 
\begin{align*}
|\spec(\rho)| \leq (n+1)^d  (n+d)^{d(d-1)/2} \leq (n+d)^{d(d+1)/2} \,.
\end{align*}
\end{proof}

\begin{corollary} \label{lem:pinching_output_spectrum}
Let $\rho, \sigma \in \pos(A)$ and $d = \dim(A)$. Then 
\begin{align*}
|\spec\!\left( \cP_{\rho^{\otimes n}}(\sigma^{\otimes n}) \right)| \leq (n+d)^{d(d+1)/2} \,.
\end{align*}
\end{corollary}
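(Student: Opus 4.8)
The plan is to deduce this immediately from \cref{lem:permutation_inv_small_spectrum}. The key observation is that the state $\cP_{\rho^{\otimes n}}(\sigma^{\otimes n})$, while no longer a tensor power, is still permutation invariant as an element of $\pos(A^{\ot n})$, so the general spectral bound for permutation-invariant states applies verbatim.

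First I would argue permutation invariance. Let $W_\pi$ denote the unitary on $A^{\ot n}$ that permutes the $n$ tensor factors according to $\pi \in S_n$. Both $\rho^{\otimes n}$ and $\sigma^{\otimes n}$ commute with every $W_\pi$. For the pinching map, note that if $\rho^{\otimes n} = \sum_\lambda \lambda P_\lambda$ is the spectral decomposition, then conjugating by $W_\pi$ gives $\rho^{\otimes n} = W_\pi \rho^{\otimes n} W_\pi^\dagger = \sum_\lambda \lambda W_\pi P_\lambda W_\pi^\dagger$, and by uniqueness of the spectral projectors $W_\pi P_\lambda W_\pi^\dagger = P_\lambda$ for each $\lambda$. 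Hence
\begin{align*}
W_\pi \, \cP_{\rho^{\otimes n}}(\sigma^{\otimes n}) \, W_\pi^\dagger = \sum_\lambda W_\pi P_\lambda W_\pi^\dagger \, W_\pi \sigma^{\otimes n} W_\pi^\dagger \, W_\pi P_\lambda W_\pi^\dagger = \sum_\lambda P_\lambda \, \sigma^{\otimes n} \, P_\lambda = \cP_{\rho^{\otimes n}}(\sigma^{\otimes n}) \,,
\end{align*}
so $\cP_{\rho^{\otimes n}}(\sigma^{\otimes n})$ is permutation invariant. Since $\cP_{\rho^{\otimes n}}(\sigma^{\otimes n}) \in \pos(A^{\ot n})$, applying \cref{lem:permutation_inv_small_spectrum} with this state (and the same $d = \dim(A)$, $n$) yields $|\spec(\cP_{\rho^{\otimes n}}(\sigma^{\otimes n}))| \leq (n+d)^{d(d+1)/2}$, which is the claim.

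There is essentially no obstacle here — the content is entirely in \cref{lem:permutation_inv_small_spectrum}, and the only thing to verify is the (routine) fact that spectral pinching with respect to a permutation-invariant operator preserves permutation invariance. If one wanted an even shorter argument, one could instead invoke the commutation property: $[\rho^{\otimes n}, \sigma^{\otimes n}] = 0$ need not hold, but one can still observe that $\cP_{\rho^{\otimes n}}$ maps the permutation-invariant algebra to itself because each eigenprojector $P_\lambda$ of $\rho^{\otimes n}$ is itself permutation invariant; this is the same computation as above phrased algebraically. Either way the corollary follows in a couple of lines.
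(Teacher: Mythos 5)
Your proposal is correct and follows the same route as the paper: observe that the eigenprojectors of $\rho^{\otimes n}$ are permutation-invariant, conclude that $\cP_{\rho^{\otimes n}}(\sigma^{\otimes n})$ is permutation-invariant, and invoke \cref{lem:permutation_inv_small_spectrum}. You merely spell out the conjugation-by-$W_\pi$ computation that the paper states in one line.
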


\begin{proof}
Note that $\cP_{\rho^{\otimes n}}(\sigma^{\otimes n})$ is itself not a product state because the eigenprojectors of $\rho^{\ot n}$ do not have a product form.
However, since every eigenspace of $\rho^{\otimes n}$ is permutation-invariant, $\cP_{\rho^{\otimes n}}(\sigma^{\otimes n})$ is permutation-invariant, too, so we can apply \cref{lem:permutation_inv_small_spectrum}.
\end{proof}

\section{Strengthened chain rules} \label{sec:chain_rule}

One of the crucial properties of entropies are chain rules, which allow us to relate entropies of large composite systems to sums of entropies of the individual subsystems.
In this section, we prove two new such chain rules, one for the R\'enyi divergence (\cref{thm:improved_chain_rule}, which is a generalisation of~\cite[Corollary 5.1]{sharp_divergence}) and one for the conditional entropy (\cref{thm:improved_entropy_chain_rule}).
The chain rule from \cref{thm:improved_entropy_chain_rule} is the key ingredient for our generalised EAT, to which we will turn our attention in \cref{sec:eat}.
\cref{thm:improved_entropy_chain_rule} plays a similar role for our generalised EAT as \cite[Corollary 3.5]{eat} does for the original EAT, but while the latter requires a Markov condition, the former does not.
As a result, our generalised EAT based on \cref{thm:improved_entropy_chain_rule} also avoids the Markov condition.

The outline of this section is as follows: we first prove a generalised chain rule for the R\'enyi divergence (\cref{thm:improved_chain_rule}).
This chain rule contains a regularised channel divergence.
As the next step, we show that in the special case of conditional entropies, we can drop the regularisation (\cref{sec:removing_reg}).
This allows us to derive a chain rule for conditional entropies from the chain rule for channels (\cref{sec:entropy_chain_rule}).

\subsection{Strengthened chain rule for R\'enyi divergence} \label{sec:chain_rule_divergence}
The main result of this section is the following chain rule for the R\'enyi divergence.
\begin{theorem} \label{thm:improved_chain_rule}
Let $\alpha > 1$, $\rho \in \states(A R)$, $\sigma \in \pos(A R)$, $\cE \in \cptp(AR, B)$, and $\cF \in \cp(AR, B)$. Suppose that there exists $\cR \in \cp(A, B)$ such that $\cF = \cR \circ \setft{Tr}_{R}$. 
Then 
\begin{align}
\dalpha{\cE(\rho_{AR})}{\cF(\sigma_{AR})} \leq \dalpha{\rho_{A}}{\sigma_{A}} + \drega{\cE}{\cF} \,. \label{eqn:improved_chain_rule}
\end{align}
\end{theorem}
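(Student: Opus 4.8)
The plan is to prove \cref{thm:improved_chain_rule} by reducing it to the known chain rule from~\cite{sharp_divergence}, namely
\begin{align*}
\dalpha{\cE(\rho_{AR})}{\cF(\sigma_{AR})} \leq \dalpha{\rho_{AR}}{\sigma_{AR}} + \drega{\cE}{\cF} \,,
\end{align*}
and then using the factorisation hypothesis $\cF = \cR \circ \setft{Tr}_R$ together with the extended Uhlmann theorem (\cref{eqn:intro_uhlmann}, i.e.~\cref{lem:min_over_ref_system}) to replace $\dalpha{\rho_{AR}}{\sigma_{AR}}$ by $\dalpha{\rho_A}{\sigma_A}$. The key observation is that because $\cF$ ignores its $R$-input, replacing $\sigma_{AR}$ by \emph{any} extension $\hat\sigma_{AR}$ of $\sigma_A$ leaves $\cF(\sigma_{AR}) = \cR(\sigma_A)$ unchanged, so the left-hand side of \cref{eqn:improved_chain_rule} does not depend on which extension of $\sigma_A$ we plug in; this freedom is exactly what lets us minimise the right-hand side.

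Concretely, I would proceed as follows. First, tensorise: for any $k$, apply the known chain rule to $\cE^{\ot k}$, $\cF^{\ot k}$ with inputs $\rho_{AR}^{\ot k}$ and an arbitrary extension $\hat\sigma_{A^kR^k}$ of $\sigma_A^{\ot k}$. Since $\cF = \cR\circ\setft{Tr}_R$, we have $\cF^{\ot k}(\hat\sigma_{A^kR^k}) = \cR^{\ot k}(\sigma_A^{\ot k}) = \cF^{\ot k}(\sigma_{AR}^{\ot k})$, and likewise $\cE^{\ot k}(\rho_{AR}^{\ot k})$ is just the $k$-th tensor power of $\cE(\rho_{AR})$; using additivity of $\dalpha{\cdot}{\cdot}$ under tensor powers on both the left-hand side and (after dividing by $k$) the regularised channel-divergence term, one gets
\begin{align*}
\dalpha{\cE(\rho_{AR})}{\cF(\sigma_{AR})} \leq \frac{1}{k}\dalpha{\rho_{AR}^{\ot k}}{\hat\sigma_{A^kR^k}} + \drega{\cE}{\cF} \,.
\end{align*}
Now minimise over all extensions $\hat\sigma_{A^kR^k}$ of $\sigma_A^{\ot k}$ and take $k\to\infty$: by the extended Uhlmann theorem (\cref{eqn:intro_uhlmann}, with the roles of the "$\rho$-side" and "$\sigma$-side" matched appropriately — here $\rho_{AR}^{\ot k}$ is the fixed product state and we optimise over extensions of the marginal on the other side), the limit of $\frac{1}{k}\inf_{\hat\sigma}\dalpha{\rho_{AR}^{\ot k}}{\hat\sigma_{A^kR^k}}$ equals $\dalpha{\rho_A}{\sigma_A}$, which yields the claim. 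One should double-check the direction of the Uhlmann statement: \cref{eqn:intro_uhlmann} is phrased with $\rho$ having the "extra" system and $\hat\sigma$ being optimised, which is precisely our situation with $R$ playing the role of the reference, so it applies directly; the hypothesis $\supp(\rho_A)\subseteq\supp(\sigma_A)$ needed for finiteness is implied by $\supp(\rho_{AR})\subseteq\supp(\sigma_{AR})$ in the non-trivial case, and the trivial case where the divergence is $+\infty$ is handled separately.

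The main obstacle I anticipate is not the reduction itself but making sure the two auxiliary ingredients are applied with the correct hypotheses and orientation: (i) the known chain rule of~\cite{sharp_divergence} must hold for $\cF$ merely completely positive (not trace-preserving) and for subnormalised $\sigma$, which is the setting of \cref{def:channel_dalpha} and the cited corollary, so this should be fine; and (ii) the extended Uhlmann theorem must be invoked for the sandwiched R\'enyi divergence at the relevant $\alpha>1$ with one argument a fixed tensor power and the other optimised over extensions of a fixed marginal — exactly the form of \cref{eqn:intro_uhlmann}. A secondary point requiring care is the interchange of the infimum over $\hat\sigma$ with the limit $k\to\infty$ and the fact that the left-hand side is genuinely independent of $k$ and of the chosen extension, which is what legitimises taking the infimum only at the end; this is immediate from $\cF = \cR\circ\setft{Tr}_R$ but worth stating explicitly. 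No smoothing or continuity arguments are needed here, so once these pieces are lined up the proof is short.
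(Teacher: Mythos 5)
Your proposal is correct and follows essentially the same route as the paper's proof: tensorise, exploit $\cF = \cR\circ\setft{Tr}_R$ to replace $\sigma_{AR}^{\otimes n}$ by an arbitrary extension $\hat\sigma_{A^nR^n}$ of $\sigma_A^{\otimes n}$ in the second argument of the chain rule from~\cite{sharp_divergence}, use additivity of $D_\alpha$ and of the regularised channel divergence, then minimise over $\hat\sigma$ and take $n\to\infty$ via the extended Uhlmann statement (\cref{lem:min_over_ref_system}). The subtleties you flag (orientation of the Uhlmann lemma, support condition, applicability of the cited chain rule to non-trace-preserving $\cF$) are indeed the right things to check and are handled exactly as you describe.
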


This is a stronger version of an existing chain rule due to~\cite{sharp_divergence}, which we will use in our proof of \cref{thm:improved_chain_rule}:
\begin{lemma}[{\cite[Corollary 5.1]{sharp_divergence}}] \label{lem:standard_chain_rule}
Let $\alpha > 1$, $\rho \in \states(A)$, $\sigma \in \pos(A)$, $\cE \in \cptp(A, B)$, and $\cF \in \cp(A, B)$. Then 
\begin{align}
\dalpha{\cE(\rho)}{\cF(\sigma)} \leq \dalpha{\rho}{\sigma} + \drega{\cE}{\cF} \,. \label{eqn:standard_chain_rule}
\end{align}
\end{lemma}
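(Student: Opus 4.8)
This is the known chain rule of~\cite[Corollary~5.1]{sharp_divergence}, so it may simply be imported; the following is how I would reprove it using only the machinery of \cref{sec:entropies_prelim} and the pinching subsection. The governing idea: one genuinely has to pass to many copies (the non-regularised channel divergence $\dalpha{\cE}{\cF}$ is not additive, so no single-letter channel term can appear on the right), and the price for decoupling the input pair $(\rho,\sigma)$ from the output is exactly $\dalpha{\rho}{\sigma}$ per copy, extracted by a Chernoff-type estimate for which the sandwiched R\'enyi divergence --- a logarithmic moment generating function --- is tailor-made.

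Concretely, by additivity under tensor powers, $\dalpha{\cE(\rho)}{\cF(\sigma)} = \frac1n\dalpha{\cE^{\ot n}(\rho^{\ot n})}{\cF^{\ot n}(\sigma^{\ot n})}$ for all $n$, so it suffices to bound the right-hand side by $n\,\dalpha{\rho}{\sigma} + \dalpha{\cE^{\ot n}}{\cF^{\ot n}} + o(n)$ and divide by $n$. Pinch the input of the first argument: with $\bar\rho_n := \cP_{\sigma^{\ot n}}(\rho^{\ot n})$ one has $[\bar\rho_n,\sigma^{\ot n}]=0$ and, by the pinching inequality (\cref{lem:pinching_property}(iii)) together with \cref{eqn:tp_spec_bound}, $\rho^{\ot n}\le\poly(n)\,\bar\rho_n$, hence $\cE^{\ot n}(\rho^{\ot n})\le\poly(n)\,\cE^{\ot n}(\bar\rho_n)$. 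Split $\bar\rho_n=\bar\rho_n^{\mathrm{typ}}+\bar\rho_n^{\mathrm{atyp}}$ along the common eigenbasis of $\bar\rho_n$ and $\sigma^{\ot n}$ according to whether the eigenvalue ratio relative to $\sigma^{\ot n}$ is at most $t:=2^{n(\dalpha{\rho}{\sigma}+\delta)}$; a Markov bound, using $\dalpha{\bar\rho_n}{\sigma^{\ot n}}\le \dalpha{\rho^{\ot n}}{\sigma^{\ot n}}=n\,\dalpha{\rho}{\sigma}$ by data processing under pinching, gives $\tr{\bar\rho_n^{\mathrm{atyp}}}\le 2^{-(\alpha-1)n\delta}$. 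On the typical part, $\bar\rho_n^{\mathrm{typ}}\le t\,\sigma^{\ot n}$, hence $\cF^{\ot n}(\sigma^{\ot n})\ge t^{-1}\cF^{\ot n}(\bar\rho_n^{\mathrm{typ}})$; using that $\dalpha{\cdot}{\cdot}$ is anti-monotone in its second argument and satisfies $\dalpha{\mu}{c\nu}=\dalpha{\mu}{\nu}-\log c$, and normalising $\bar\rho_n^{\mathrm{typ}}$ to a state (its trace is $1-o(1)$, and rescaling by a number $\le 1$ only decreases the divergence), one obtains
\begin{align*}
\dalpha{\cE^{\ot n}(\bar\rho_n^{\mathrm{typ}})}{\cF^{\ot n}(\sigma^{\ot n})} \;\le\; \dalpha{\cE^{\ot n}}{\cF^{\ot n}} + n\big(\dalpha{\rho}{\sigma}+\delta\big)\,,
\end{align*}
the channel divergence entering via \cref{def:channel_dalpha} (no purifying reference needed). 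Reassembling the two parts with monotonicity of $X\mapsto\tr{X^\alpha}$ and the Schatten-norm triangle inequality to absorb the atypical contribution and the $\poly(n)$ pinching factors into $o(n)+O(\log n)$, dividing by $n$, sending $n\to\infty$ so $\frac1n\dalpha{\cE^{\ot n}}{\cF^{\ot n}}\to\drega{\cE}{\cF}$, and finally $\delta\to 0$, yields \cref{eqn:standard_chain_rule}.

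The step I expect to be the main obstacle is controlling the atypical contribution $\dalpha{\cE^{\ot n}(\bar\rho_n^{\mathrm{atyp}})}{\cF^{\ot n}(\sigma^{\ot n})}$ tightly enough that it washes out after dividing by $n$ and sending $\delta\to 0$ --- one must prevent $\cF^{\ot n}(\sigma^{\ot n})$ from becoming ill-conditioned, typically by a preliminary $\eta$-regularisation replacing $\sigma$ by $(1-\eta)\sigma+\eta\,\1/d$ (removed at the end by continuity of $\dalpha{\cdot}{\cdot}$), and one must keep all accumulated errors at the $o(n)+O(\log n)$ scale. This bookkeeping, glossed over above, is the real content and is what is carried out in~\cite{sharp_divergence}. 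An alternative route, also in~\cite{sharp_divergence}, bypasses the pinching and typicality by passing through an auxiliary R\'enyi-type divergence with better chain-rule behaviour --- additive, with an exact single-letter chain rule --- that dominates the sandwiched divergence, from which \cref{eqn:standard_chain_rule} then follows by comparison.
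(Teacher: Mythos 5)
The paper does not prove \cref{lem:standard_chain_rule}: it imports it wholesale from~\cite{sharp_divergence}, exactly as your opening sentence allows, so there is no in-paper argument to compare against. What follows is an assessment of your supplementary sketch on its own terms.

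The skeleton --- many-copy lifting, pinching of $\rho^{\ot n}$ against $\sigma^{\ot n}$, a Markov bound on the atypical mass --- is sound, and these ingredients do appear in the reference. But the reassembly step, which you yourself flag as ``glossed over'', hides a genuine obstruction. For $\alpha>1$ the quantity you must control is not $\tr{\bar\rho_n^{\mathrm{atyp}}}$ but the Schatten $\alpha$-norm of $\Gamma^{\frac{1-\alpha}{2\alpha}}\cE^{\ot n}(\bar\rho_n^{\mathrm{atyp}})\Gamma^{\frac{1-\alpha}{2\alpha}}$ with $\Gamma=\cF^{\ot n}(\sigma^{\ot n})$; the conjugation by a \emph{negative} power of $\Gamma$ can amplify the exponentially small trace by a factor of order $\lambda_{\min}\bigl(\cF(\sigma)\bigr)^{-n(\alpha-1)/\alpha}$. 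Even after the $\eta$-regularisation of $\sigma$ that you propose, the typicality threshold $\delta$ must then beat $-\frac{1}{\alpha}\log\lambda_{\min}\bigl(\cF(\sigma_\eta)\bigr)$, and that lower bound on $\delta$ does not go to zero as $\eta\to0$ (indeed it diverges when $\cF(\sigma)$ is rank-deficient). So the pinching-plus-typicality route, as sketched, does not close the limit $\delta\to0$, and the claim that the atypical term can be absorbed into $o(n)+O(\log n)$ is exactly where the argument breaks.

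Your closing sentence names the route that actually works. The reference proceeds through an auxiliary R\'enyi-type divergence --- the geometric (Belavkin--Staszewski) R\'enyi divergence $\hat D_\alpha$ --- which dominates the sandwiched $D_\alpha$, is additive for channels, and obeys an exact single-letter chain rule of the form $\hat D_\alpha(\cE(\rho)\,\|\,\cF(\sigma)) \le \hat D_\alpha(\rho\,\|\,\sigma) + \hat D_\alpha(\cE\,\|\,\cF)$. Applied on $n$ copies, pinching is then needed only to compare $\hat D_\alpha$ and $D_\alpha$ on the input pair (where, after pinching, the two arguments commute and the two divergences coincide), so the conjugation by $\Gamma^{\frac{1-\alpha}{2\alpha}}$ never appears and the ill-conditioning issue does not arise. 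You correctly identified both the obstruction and the workaround; the first route as you have written it has a real gap, and the second route is the one to carry out.
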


The difference between \cref{thm:improved_chain_rule} and \cref{lem:standard_chain_rule} is that on the r.h.s.~of \cref{eqn:improved_chain_rule}, we only have the divergence $\dalpha{\rho_{A}}{\sigma_{A}}$ between the two reduced states on system $A$.
In contrast, if we used \cref{eqn:standard_chain_rule} with systems $AR$, then we would get the divergence $\dalpha{\rho_{AR}}{\sigma_{AR}}$ between the full states.
In particular, the weaker \cref{lem:standard_chain_rule} can easily be recovered from \cref{thm:improved_chain_rule} by taking the system $R$ to be trivial, in which case the condition $\cF = \cR \circ \setft{Tr}_{R}$ becomes trivial, too.

While the difference between~\cref{thm:improved_chain_rule} and~\cref{lem:standard_chain_rule} may look minor at first sight, the two chain rules can give considerably different results:
in general, the data processing inequality ensures that $\dalpha{\rho_{A}}{\sigma_{A}} \leq \dalpha{\rho_{AR}}{\sigma_{AR}}$, but the gap between the two quantities can be significant, i.e., there exist states for which $\dalpha{\rho_{A}}{\sigma_{A}} \ll \dalpha{\rho_{AR}}{\sigma_{AR}}$.
In such cases,~\cref{thm:improved_chain_rule} yields a significantly tighter bound.
This turns out to be crucial if we want to apply this chain rule repeatedly to get an EAT.

We also note that the statement of~\cref{thm:improved_chain_rule} is known to be correct also for $\alpha=1$~\cite[Theorem~3.5]{FFRS20}. 
However, this requires a separate proof and does not follow from~\cref{thm:improved_chain_rule} as it is currently not known whether the function $\alpha \mapsto \drega{\cE}{\cF} $ is continuous in the limit $\alpha \searrow 1$.\footnote{It is well-known~\cite[Lemma~8]{tomamichel09} that $\lim_{\alpha \searrow 1} \dalpha{\cE}{\cF} = D\big(\cE\|\cF\big)$, but it is unclear whether the same holds for the regularised quantity.}

We now turn to the proof of \cref{thm:improved_chain_rule}. 
The key question for the proof is the following: given states $\rho_{AR}$ and $\sigma_{A}$, does there exist an extension $\sigma_{AR}$ of $\sigma_A$ such that $\dalpha{\rho_A}{\sigma_A} = \dalpha{\rho_{AR}}{\sigma_{AR}}$?
For the special case of $\alpha = 1/2$, an affirmative answer is given by Uhlmann's theorem~\cite{Uhlmann76} (see also~\cite[Corollary~3.14]{tomamichel2015quantum}). This also holds for $\alpha = \infty$, but not in general for $\alpha \geq 1$ as discussed in~\cref{sec:uhlmann_alpha}. The following lemma shows that a similar property still holds for $\alpha > 1$ on a regularised level.
\begin{lemma} \label{lem:min_over_ref_system}
Consider quantum systems $A$ and $R$ with $d = \dim(A)$.
For $n \in \N$, we define $A^n = A_1 \dots A_n$, where $A_i$ are copies of the system $A$, and likewise $R^n = R_1 \dots R_n$. 
Then for $\rho \in \states(AR)$, $\sigma \in \pos(A)$, and $\alpha > 1$ we have
\begin{align*}
\dalpha{\rho_A}{\sigma_A} \leq \inf_{\hat \sigma_{A^n R^n} \sth \hat \sigma_{A^n} = \sigma_A^{\otimes n}} \frac{1}{n} \dalpha{\rho_{AR}^{\otimes n}}{\hat \sigma_{A^n R^n}} \leq \dalpha{\rho_A}{\sigma_A} + \frac{\alpha}{\alpha-1} \frac{d(d+1) \log(n+d)}{n}  \, .
\end{align*}
\end{lemma}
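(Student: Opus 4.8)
The plan is to prove the two inequalities separately. The left inequality is immediate from the data processing inequality (DPI): for any extension $\hat\sigma_{A^n R^n}$ with $\hat\sigma_{A^n} = \sigma_A^{\otimes n}$, tracing out $R^n$ and using DPI gives $\dalpha{\rho_{AR}^{\otimes n}}{\hat\sigma_{A^n R^n}} \geq \dalpha{\rho_A^{\otimes n}}{\sigma_A^{\otimes n}} = n \dalpha{\rho_A}{\sigma_A}$, where the last equality is additivity of the R\'enyi divergence on tensor powers. Dividing by $n$ and taking the infimum over $\hat\sigma$ yields the claim.

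The right inequality is the substantive part. The idea is to \emph{construct} a good extension $\hat\sigma_{A^n R^n}$ of $\sigma_A^{\otimes n}$ and bound $\frac1n\dalpha{\rho_{AR}^{\otimes n}}{\hat\sigma_{A^n R^n}}$. First I would apply Uhlmann's theorem (the $\alpha=1/2$ case, equivalently existence of a state extension achieving equality for fidelity) or, more to the point, simply pick \emph{some} extension $\sigma_{AR}$ of $\sigma_A$ — for instance $\sigma_{AR} = \sigma_A \otimes \tau_R$ for a fixed $\tau_R$, or an extension chosen so that $\supp(\rho_{AR})\subseteq\supp(\sigma_{AR})$. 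The key point is that a naive extension $\sigma_{AR}^{\otimes n}$ may give $\dalpha{\rho_{AR}^{\otimes n}}{\sigma_{AR}^{\otimes n}} = n\dalpha{\rho_{AR}}{\sigma_{AR}}$, which is too large. To fix this, I would \emph{pinch}: define $\hat\sigma_{A^n R^n} = \cP_{\rho_{AR}^{\otimes n}}(\sigma_{AR}^{\otimes n})$, the spectral pinching of the candidate extension with respect to $\rho_{AR}^{\otimes n}$. The crucial features are: (a) by \cref{lem:pinching_property}(v), the pinching commutes with the partial trace when the pinched operator has the right product structure — but here $\rho_{AR}^{\otimes n}$ does not split as $(\text{something on }A^n)\otimes(\text{something on }R^n)$, so this needs care; instead I would pinch with respect to $\rho_{A}^{\otimes n}\otimes\1_{R^n}$, for which \cref{lem:pinching_property}(v) does give $\ptr{R^n}{\cP_{\rho_A^{\otimes n}\otimes\1}(\sigma_{AR}^{\otimes n})} = \cP_{\rho_A^{\otimes n}}(\sigma_A^{\otimes n})$. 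Since $\sigma_A^{\otimes n}$ commutes with $\rho_A^{\otimes n}$'s eigenprojectors only if... actually I would instead arrange matters so that the marginal is exactly $\sigma_A^{\otimes n}$: choose the candidate extension to already commute appropriately, or absorb the discrepancy using that pinching only changes things by the spectral-count factor.

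The cleanest route: set $\hat\sigma_{A^n R^n} \deq \cP_{\rho_A^{\otimes n}\otimes\1_{R^n}}(\sigma_{AR}^{\otimes n})$ where $\sigma_{AR}$ is an extension of $\sigma_A$ chosen (via Uhlmann for $\alpha=1/2$, or just the product extension) with $\sigma_A^{\otimes n}$ already in the eigenbasis — more robustly, replace $\sigma_A$ at the outset by $\cP_{\rho_A}(\sigma_A)$, which commutes with $\rho_A$ and satisfies $\dalpha{\rho_A}{\cP_{\rho_A}(\sigma_A)} \le \dalpha{\rho_A}{\sigma_A}$ (pinching can only decrease divergence via DPI/invariance) while costing nothing relevant. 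Then the marginal of $\hat\sigma_{A^n R^n}$ is $\cP_{\rho_A^{\otimes n}}(\sigma_A^{\otimes n}) = \sigma_A^{\otimes n}$ (using $[\rho_A,\sigma_A]=0$ and \cref{lem:pinching_property}(i),(v)), so $\hat\sigma$ is a valid competitor in the infimum. Now bound the divergence: by the pinching inequality (\cref{lem:pinching_property}(iii)) applied with $\rho_A^{\otimes n}\otimes\1_{R^n}$, whose spectrum count is $|\spec(\rho_A^{\otimes n})| \le (n+1)^{d-1}$, we get $\hat\sigma_{A^n R^n} = \cP_{\cdots}(\sigma_{AR}^{\otimes n}) \ge \frac{1}{(n+1)^{d-1}}\sigma_{AR}^{\otimes n}$; hmm, the exponent in the statement is $d(d+1)/2$, suggesting the pinching is actually with respect to $\rho_{AR}^{\otimes n}$ and \cref{lem:permutation_inv_small_spectrum}/\cref{lem:pinching_output_spectrum} is invoked for the spectrum of the pinched state. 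So more precisely I would pinch $\sigma_{AR}^{\otimes n}$ with respect to $\rho_{AR}^{\otimes n}$, use \cref{lem:pinching_output_spectrum} to bound $|\spec(\hat\sigma_{A^n R^n})| \le (n+d)^{d(d+1)/2}$ where $d=\dim(AR)$ — wait, the statement says $d=\dim(A)$, so the pinching must be with respect to an operator on $A^n$ alone, namely $\rho_A^{\otimes n}\otimes\1_{R^n}$, and then $\dim$ in \cref{lem:pinching_output_spectrum} is $\dim(A)$. Either way, the pinching inequality gives $\cP_{\rho_A^{\otimes n}\otimes\1}(\sigma_{AR}^{\otimes n}) \ge \frac{1}{|\spec(\rho_A^{\otimes n})|}\sigma_{AR}^{\otimes n}$, hence by the scaling behaviour of $D_\alpha$ under scalar multiplication of the second argument, $\dalpha{\rho_{AR}^{\otimes n}}{\hat\sigma_{A^n R^n}} \le \dalpha{\rho_{AR}^{\otimes n}}{\frac{1}{c}\sigma_{AR}^{\otimes n}} = \dalpha{\rho_{AR}^{\otimes n}}{\sigma_{AR}^{\otimes n}} + \frac{\alpha}{\alpha-1}\log c$ with $c = |\spec(\rho_A^{\otimes n})| \le (n+d)^{d(d+1)/2}$ (bounding generously). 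Now, crucially, $\hat\sigma_{A^n R^n}$ commutes with $\rho_A^{\otimes n}\otimes\1_{R^n}$ (property (ii)), so — and this is the point where I expect the main obstacle — I need $\dalpha{\rho_{AR}^{\otimes n}}{\hat\sigma_{A^nR^n}}$ to be close to $n\dalpha{\rho_A}{\sigma_A}$, not to $n\dalpha{\rho_{AR}}{\sigma_{AR}}$. This should follow because, with $\hat\sigma$ commuting with $\rho_A^{\otimes n}\otimes\1$, one can compute the divergence "classically" along the $\rho_A$-eigenspaces and the $R^n$-dependence integrates out to give exactly the $A$-marginal contribution; making this rigorous — showing $\dalpha{\rho_{AR}^{\otimes n}}{\cP_{\rho_A^{\otimes n}\otimes\1}(\sigma_{AR}^{\otimes n})} \le n\dalpha{\rho_A}{\sigma_A} + (\text{pinching penalty})$ — is the heart of the argument and the step I would spend the most effort on, likely via writing $D_\alpha$ in terms of $\tr{(\hat\sigma^{\frac{1-\alpha}{2\alpha}}\rho^{\otimes n}\hat\sigma^{\frac{1-\alpha}{2\alpha}})^\alpha}$, using the block-diagonal structure, and a variational/operator-monotonicity argument to reduce the $R^n$ systems. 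Finally, divide by $n$: the penalty $\frac{\alpha}{\alpha-1}\cdot\frac{d(d+1)\log(n+d)}{n}$ appears exactly from the $\frac1n\log|\spec|$ term, giving the stated bound.
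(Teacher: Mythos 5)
Your argument for the left inequality is correct and is exactly what the paper does. For the right inequality there is a genuine gap, which you yourself flag: your construction only ever bounds $\frac1n\dalpha{\rho_{AR}^{\ot n}}{\hat\sigma_{A^nR^n}}$ by $\dalpha{\rho_{AR}}{\sigma_{AR}}$ plus a vanishing penalty, not by $\dalpha{\rho_A}{\sigma_A}$. Since $\dalpha{\rho_{AR}}{\sigma_{AR}}$ can exceed $\dalpha{\rho_A}{\sigma_A}$ by an arbitrary amount (indeed it can be $+\infty$ if the supports do not match), this is the entire ball game, and the appeal to ``block-diagonal structure'' and a ``variational argument'' is a placeholder, not a proof. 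The auxiliary fix you propose — replacing $\sigma_A$ by $\cP_{\rho_A}(\sigma_A)$ so that your pinched extension has the right marginal — does not help either: the infimum in the lemma is over extensions with $\hat\sigma_{A^n}=\sigma_A^{\ot n}$ for the \emph{original} $\sigma_A$, so a competitor with marginal $\cP_{\rho_A}(\sigma_A)^{\ot n}$ is not in the feasible set, and controlling $\dalpha{\rho_A}{\cP_{\rho_A}(\sigma_A)}$ proves nothing about the quantity you need.

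The paper's argument pinches the \emph{first} argument of $D_\alpha$, not the second, and that is the missing idea. It sets $\rho'=\cP_{\sigma_{A^n}\ot\1_{R^n}}(\rho_{AR}^{\ot n})$ and $\hat\rho=\cP_{\rho'_{A^n}\ot\1_{R^n}}(\rho')$; two applications of the pinching inequality give $\rho^{\ot n}\le|\spec(\sigma_A^{\ot n})|\,|\spec(\rho'_{A^n})|\,\hat\rho$, and operator monotonicity of $D_\alpha$ in the first argument (for $\alpha>1$) then gives $\frac1n\dalpha{\rho^{\ot n}}{\hat\sigma}\le\frac1n\dalpha{\hat\rho}{\hat\sigma}+\frac{\alpha}{\alpha-1}\frac{\eta(n)}{n}$ with $\eta(n)\le d(d+1)\log(n+d)$ by \cref{lem:pinching_output_spectrum} — this is where the stated error term comes from, with $d=\dim(A)$. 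The competitor $\hat\sigma$ is then constructed by extending $\sigma_{A^n}$ through the Petz-recovery-style map $\cT(\omega)=\hat\rho_{A^nR^n}^{1/2}\hat\rho_{A^n}^{-1/2}\,\omega\,\hat\rho_{A^n}^{-1/2}\hat\rho_{A^nR^n}^{1/2}$, i.e.\ $\hat\sigma=\cT(\sigma_{A^n})$. The commutation relations produced by the double pinching ($[\hat\rho_{A^nR^n},\hat\rho_{A^n}\ot\1]=0$, $[\hat\rho_{A^nR^n},\sigma_{A^n}\ot\1]=0$, $[\hat\rho_{A^n},\sigma_{A^n}]=0$) force $\hat\sigma_{A^n}=\sigma_A^{\ot n}$ exactly, so $\hat\sigma$ is a legal competitor, and two data-processing steps — once under $\cT$, once under $\cP_{\sigma_{A^n}}$ — collapse $\dalpha{\hat\rho}{\hat\sigma}$ down to $n\dalpha{\rho_A}{\sigma_A}$. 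Without this pinch-$\rho$-and-extend-via-$\cT$ device, the reduction from the $AR$-divergence to the $A$-marginal divergence never happens, which is precisely the step you acknowledge not having.
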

\begin{proof}
The inequality 
\begin{align*}
\dalpha{\rho_A}{\sigma_A} \leq \inf_{\hat \sigma_{A^n R^n} \sth \hat \sigma_{A^n} = \sigma_A^{\otimes n}} \frac{1}{n} \dalpha{\rho_{AR}^{\otimes n}}{\hat \sigma_{A^n R^n}}
\end{align*}
follows directly from the data processing inequality for taking the partial trace over $R^n$, and additivity of $D_{\alpha}$ under tensor product~\cite{tomamichel2015quantum}.

For the other direction, we consider $n$-fold tensor copies of $\rho_{AR}$ and $\sigma_{A}$, which we denote by $\rho_{A^n R^n} = \rho_{A_1 R_1} \otimes \dots \otimes \rho_{A_n R_n}$ and  $\sigma_{A^n} = \sigma_{A_1} \otimes \dots \otimes \sigma_{A_n}$.
We define the following two pinched states 
\begin{align} \label{eqn:def_rho_prime}
\rho'_{A^n R^n} = \cP_{\sigma_{A^n} \otimes \1_{R^n}}(\rho_{A^n R^n}) \qquad \textnormal{and} \qquad 
\hat\rho_{A^n R^n} = \cP_{\rho'_{A^n} \otimes \1_{R^n}}(\rho'_{A^n R^n}) \,.
\end{align}
By definition of $\hat\rho_{A^n R^n}$ and using the pinching inequality (see~\cref{lem:pinching_property}(iii)) twice, we have 
\begin{align*}
\rho_{A^n R^n} \leq |\spec(\sigma_{A^n})| |\spec(\rho'_{A^n})| \; \hat\rho_{A^n R^n} \,. 
\end{align*}
Using the operator monotonicity of the sandwiched R\'enyi divergence in the first argument~\cite{tomamichel2015quantum} we find for any state $\hat \sigma_{A^n R^n}$
\begin{align}
\frac{1}{n} \dalpha{\rho_{AR}^{\otimes n}}{\hat \sigma_{A^nR^n}} \leq \frac{1}{n} \dalpha{\hat \rho_{A^nR^n}}{\hat \sigma_{A^nR^n}} + \frac{1}{n} \frac{\alpha}{\alpha-1} \eta(n) \,, \label{eqn:pinched_d}
\end{align}
with the error term 
\begin{align*}
\eta(n) = \log |\spec(\sigma_{A^n})| + \log |\spec(\rho'_{A^n})| \,.
\end{align*}

To prove the lemma, we now need to bound the error term $\eta(n)$ and construct a specific choice for $\hat \sigma_{A^nR^n}$ for which $\hat \sigma_{A^n} = \sigma_A^{\otimes n}$ and $\frac{1}{n} \dalpha{\hat \rho_{A^nR^n}}{\hat \sigma_{A^nR^n}} \leq \dalpha{\rho_A}{\sigma_A}$.
We first bound $\eta(n)$.
Since $\sigma_{A^n} = \sigma_A^{\otimes n}$, we have from \cref{eqn:tp_spec_bound} that $|\spec(\sigma_{A^n})| \leq (n+1)^{d-1}$, where $d = \dim(A)$.
To bound $|\spec(\rho'_{A^n})|$, we note that by~\cref{eqn:def_rho_prime} and~\cref{lem:pinching_property}(v)
\begin{align}
\rho'_{A^n} = \ptr{R^n}{\cP_{\sigma_{A^n} \otimes \1_{R^n}}(\rho_{A^n R^n})} = \cP_{\sigma_{A^n}}(\rho_{A^n}) = \cP_{\sigma_{A}^{\otimes n}}(\rho_{A}^{\otimes n}) \,.
\end{align}
We can therefore use \cref{lem:pinching_output_spectrum} to obtain $|\spec(\rho'_{A^n})| \leq (n+d)^{d(d+1)/2}$.
Hence, 
\begin{align}
\eta(n) \leq d(d+1) \log(n+d) \,. \label{eqn:bound_eta}
\end{align}

It thus remains to construct $\hat \sigma_{A^n R^n}$ satisfying the properties mentioned above.
To do so we first establish a number of commutation statements.
\begin{enumerate}
\item From \cref{lem:pinching_property}(ii) we have that $[\hat\rho_{A^n R^n}, \rho'_{A^n} \otimes \1_{R^n}] = 0$.
Recalling the definition of $\rho'$ from~\cref{eqn:def_rho_prime}, we get
\begin{align}
\hat{\rho}_{A^n} = \ptr{R^n}{\cP_{\rho'_{A^n} \otimes \1_{R^n}}(\rho'_{A^n R^n})} = \cP_{\rho'_{A^n}} (\rho'_{A^n}) = \rho'_{A^n} \,,\label{eqn:hat_prime_equal}
\end{align}
where the final step uses~\cref{lem:pinching_property}(i). As a result we find
\begin{align}
[\hat\rho_{A^n R^n}, \hat\rho_{A^n} \otimes \1_{R^n}] = 0 \,. \label{eqn:comm1}
\end{align}
\item From  \cref{lem:pinching_property}(ii) we have that $[\rho'_{A^n R^n}, \sigma_{A^n} \otimes \1_{R^n}] = 0$.
Taking the partial trace over $R^n$, this implies $[\rho'_{A^n}, \sigma_{A^n}] = 0$, so by~\cref{lem:pinching_property}(iv) and~\cref{eqn:def_rho_prime} 
\begin{align*}
\hat\rho_{A^n R^n} = \cP_{\rho'_{A^n} \otimes \1_{R^n}}\left( \cP_{\sigma_{A^n} \otimes \1_{R^n}}(\rho_{A^n R^n}) \right) = \cP_{\sigma_{A^n} \otimes \1_{R^n}}\left( \cP_{\rho'_{A^n} \otimes \1_{R^n}}(\rho_{A^n R^n}) \right) \,.
\end{align*}
Therefore, by \cref{lem:pinching_property}(ii), 
\begin{align}
[\hat\rho_{A^n R^n}, \sigma_{A^n} \otimes \1_{R^n}] = 0 \,. \label{eqn:comm2}
\end{align}
\item Taking the partial trace over $R^n$ in \cref{eqn:comm2}, we get 
\begin{align}
[\hat\rho_{A^n}, \sigma_{A^n}] = 0 \,. \label{eqn:comm3}
\end{align}
\end{enumerate}
Having established these commutation relations, we define $\cT \in \cptp(A^n, A^n R^n)$ by\footnote{In case $\hat\rho_{A^n}$ does not have full support, we only take the inverse on the support of $\hat\rho_{A^n}$.} 
\begin{align*}
\cT(\omega_{A^n}) = \hat\rho_{A^n R^n}^{1/2} \hat\rho_{A^n}^{-1/2} \omega_{A^n} \hat\rho_{A^n}^{-1/2} \hat\rho_{A^n R^n}^{1/2}\,.
\end{align*}
By construction, 
\begin{align}
\cT(\hat\rho_{A^n}) = \hat\rho_{A^n R^n} \,. \label{eqn:rho_with_T}
\end{align}
We define 
\begin{align}
\hat{\sigma}_{A^n R^n} = \cT(\sigma_{A^n}) \,. \label{eqn:sigma_with_T}
\end{align}
To see that this is a valid choice of $\hat \sigma$, i.e., that $\hat{\sigma}_{A^n} = \sigma_{A^n} = \sigma_A^{\otimes n}$, we use \cref{eqn:comm1}, \cref{eqn:comm2}, and \cref{eqn:comm3} to find 
\begin{align*}
\hat{\sigma}_{A^n} = \ptr{R^n}{\hat\rho_{A^n R^n}^{1/2} \hat\rho_{A^n}^{-1/2} \sigma_{A^n} \hat\rho_{A^n}^{-1/2} \hat\rho_{A^n R^n}^{1/2}} = \ptr{R^n}{\hat\rho_{A^n R^n} \hat\rho_{A^n}^{-1} \sigma_{A^n}} = \sigma_{A^n} \,.
\end{align*}

Using \cref{eqn:rho_with_T} and \cref{eqn:sigma_with_T} followed by the data processing inequality~\cite{tomamichel2015quantum}, we obtain
\begin{align}
\frac{1}{n} \dalpha{\hat \rho_{A^n R^n}}{\hat \sigma_{A^n R^n}} 
= \frac{1}{n} \dalpha{\cT(\hat\rho_{A^n})}{\cT(\sigma_{A^n})}
\leq \frac{1}{n} \dalpha{\hat\rho_{A^n}}{\sigma_{A^n}} \,. \label{eqn:chain_rule2}
\end{align}
By \cref{eqn:hat_prime_equal} and \cref{eqn:def_rho_prime} we have $\hat\rho_{A^n} = \rho'_{A^n} = \cP_{\sigma_{A^n}}(\rho_{A^n})$.
Therefore, continuing from \cref{eqn:chain_rule2} and using $\sigma_{A^n} = \cP_{\sigma_{A^n}}(\sigma_{A^n})$ followed by the data processing inequality gives 
\begin{align*}
\frac{1}{n} \dalpha{\hat \rho_{A^n R^n}}{\hat \sigma_{A^n R^n}} \leq  \frac{1}{n} \dalpha{\rho_{A^n}}{\sigma_{A^n}} = \frac{1}{n} \dalpha{\rho_{A}^{\otimes n}}{\sigma_{A}^{\otimes n}} = \dalpha{\rho_{A}}{\sigma_{A}} \,.
\end{align*}
Inserting this and our error bound from \cref{eqn:bound_eta} into \cref{eqn:pinched_d} proves the desired statement.
\end{proof}

With this, we can now prove \cref{thm:improved_chain_rule}.

\begin{proof}[Proof of \cref{thm:improved_chain_rule}]
Because $D_{\alpha}$ is additive under tensor products, for any $n \in \N$ we have
\begin{align}
\dalpha{\cE(\rho_{AR})}{\cF(\sigma_{AR})} 
&= \frac{1}{n} \dalpha{\cE^{\otimes n}(\rho_{AR}^{\otimes n})}{\cF^{\otimes n}(\sigma_{AR}^{\otimes n})} \nonumber \\
&= \inf_{\hat \sigma_{A^n R^n} \sth \hat \sigma_{A^n} = \sigma_A^{\otimes n}} \frac{1}{n} \dalpha{\cE^{\otimes n}(\rho_{AR}^{\otimes n})}{\cF^{\otimes n}(\hat \sigma_{A^nR^n})}\,, \label{eqn:min_over_sigma}
\end{align}
where the second equality holds because $\cF = \cR \circ \setft{Tr}_{R}$, so $\cF^{\otimes n}(\sigma_{AR}^{\otimes n}) = \cF^{\otimes n}(\hat \sigma_{A^nR^n})$ for any $\hat \sigma_{A^nR^n}$ that satisfies $\hat \sigma_{A^n} = \sigma_A^{\otimes n}$.
From the chain rule in \cref{lem:standard_chain_rule} we get that for any $\hat \sigma_{A^nR^n}$:
\begin{align*}
\frac{1}{n} \dalpha{\cE^{\otimes n}(\rho_{AR}^{\otimes n})}{\cF^{\otimes n}(\hat \sigma_{A^nR^n})} 
& \leq \frac{1}{n} \dalpha{\rho_{AR}^{\otimes n}}{\hat \sigma_{A^n R^n}} + \frac{1}{n} \drega{\cE^{\otimes n}}{\cF^{\otimes n}} \\
&= \frac{1}{n} \dalpha{\rho_{AR}^{\otimes n}}{\hat \sigma_{A^n R^n}} + \drega{\cE}{\cF} \,,
\end{align*}
where for the second line we used additivity of the regularised channel divergence (see~\cref{eqn:reg_additivity}).
Combining this with \cref{eqn:min_over_sigma}, we get 
\begin{align*}
\dalpha{\cE(\rho_{AR})}{\cF(\sigma_{AR})} \leq \inf_{\hat \sigma_{A^n R^n} \sth \hat \sigma_{A^n} = \sigma_A^{\otimes n}} \frac{1}{n} \dalpha{\rho_{AR}^{\otimes n}}{\hat \sigma_{A^nR^n}} + \drega{\cE}{\cF} \,.
\end{align*}
Taking $n\to\infty$, the theorem then follows from \cref{lem:min_over_ref_system}.
\end{proof}

\subsection{Removing the regularisation} \label{sec:removing_reg}
The chain rule presented in~\cref{thm:improved_chain_rule} contains a regularised channel divergence term, which cannot be computed easily and whose behaviour as $\alpha \searrow 1$ is not understood. 
In this section we show that in the specific case relevant for entropy accumulation, this regularisation can be removed.
From this, we then derive a chain rule for R\'enyi entropies in \cref{thm:improved_entropy_chain_rule}.

\begin{definition}[Replacer map]
The replacer map $\cS_{A} \in \cp(A,A)$ is defined by its action on an arbitrary state $\omega_{AR}$: 
\begin{align*}
\cS_A(\omega_{AR}) = \1_A \otimes \omega_{R} \,.
\end{align*}
\end{definition}

\begin{lemma} \label{lem:entropy_reg}
Let $\alpha \in (1,2)$, $\cE \in \cptp(AR, A'R')$, and $\cF = \cS_{A'} \circ \cE$, where $\cS_{A'}$ is the replacer map. Then we have 
\begin{align*}
\drega{\cE}{\cF} \leq \dvaralpha{\frac{1}{2 - \alpha}}{\cE}{\cF} \,.
\end{align*}
\end{lemma}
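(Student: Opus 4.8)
The plan is to unfold the regularisation into an $n$-copy channel divergence, reduce the optimisation over channel inputs to i.i.d.\ product inputs by a de Finetti / post-selection argument, and absorb the resulting polynomial overhead by first passing, via \cref{lem:halpha_up_down}, from the conditional entropy $\had$ to its ``up'' variant --- which is exactly what forces the loss $\alpha\mapsto\tfrac1{2-\alpha}$. Concretely, by \cref{def:channel_dalpha} one has $\drega{\cE}{\cF}=\lim_{n\to\infty}\tfrac1n\dalpha{\cE^{\ot n}}{\cF^{\ot n}}$, and since $\cF=\cS_{A'}\circ\cE$ also $\cF^{\ot n}=\cS_{(A')^n}\circ\cE^{\ot n}$, so that for every input $\omega$ (with reference $\tilde G$ isomorphic to $AR$), $\dalpha{\cE^{\ot n}(\omega)}{\cF^{\ot n}(\omega)}=-\had((A')^n \mid (R')^n\tilde G^n)_{\cE^{\ot n}(\omega)}$; in particular this depends on $\omega$ only through $\cE^{\ot n}(\omega)$. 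Writing $\beta\deq\tfrac1{2-\alpha}\in(1,\infty)$ (legitimate since $\alpha\in(1,2)$, which is also what \cref{lem:halpha_up_down} needs), it suffices to show $\tfrac1n\dalpha{\cE^{\ot n}}{\cF^{\ot n}}\le\dvaralpha{\beta}{\cE}{\cF}+o(1)$.

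First I would reduce to inputs supported in the symmetric subspace. As $\cE^{\ot n}$ and $\cF^{\ot n}$ are permutation-covariant, attaching to an arbitrary input a classical register holding a uniformly random permutation $\pi$ and replacing the input by the classically flagged mixture of its $\pi$-conjugated copies produces an input whose marginal on $(AR)^n$ is permutation invariant and on which $\dalpha{\cE^{\ot n}(\cdot)}{\cF^{\ot n}(\cdot)}$ takes the same value --- all flagged branches are unitarily related, and the divergence of a classical--quantum state with a common classical part is a power mean of the branch divergences. Purifying such an input into a reference isomorphic to $(AR)^n$, the purification can be taken inside $\sym^n(\H_{AR}\ot\H_{\tilde G})$, and by \cref{def:channel_dalpha} and the data-processing inequality this substitution does not decrease the divergence. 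Hence $\dalpha{\cE^{\ot n}}{\cF^{\ot n}}\le\sup_{\ket\psi}\dalpha{\cE^{\ot n}(\proj\psi)}{\cF^{\ot n}(\proj\psi)}$, the supremum over $\ket\psi\in\sym^n(\H_{AR}\ot\H_{\tilde G})$, and every such $\ket\psi$ satisfies $\proj\psi\le g_n\,\tau^{(n)}$, where $g_n$ is the (polynomial in $n$) dimension of this symmetric subspace and $\tau^{(n)}$ its normalised projector, which by Schur's lemma equals $\int(\proj{\phi}_{AR\tilde G})^{\ot n}\,\mathrm{d}\phi$ \cite{christandl2009postselection}.

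Now fix such a $\ket\psi$ and estimate $\dalpha{\cE^{\ot n}(\proj\psi)}{\cF^{\ot n}(\proj\psi)}=-\had((A')^n \mid (R')^n\tilde G^n)_{\cE^{\ot n}(\proj\psi)}$. By \cref{lem:halpha_up_down} this is at most $\inf_{\sigma}\dvaralpha{\beta}{\cE^{\ot n}(\proj\psi)}{\1_{(A')^n}\ot\sigma}$, the infimum over states $\sigma$ on $(R')^n\tilde G^n$; I would substitute the single choice $\sigma=\ptr{(A')^n}{\cE^{\ot n}(\tau^{(n)})}$. Operator monotonicity of $D_\beta$ in its first argument together with $\cE^{\ot n}(\proj\psi)\le g_n\,\cE^{\ot n}(\tau^{(n)})$ then costs only $\tfrac{\beta}{\beta-1}\log g_n=O(\log n)$ and reduces the task to bounding $\dvaralpha{\beta}{\cE^{\ot n}(\tau^{(n)})}{\1_{(A')^n}\ot\ptr{(A')^n}{\cE^{\ot n}(\tau^{(n)})}}$. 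Writing $\cE^{\ot n}(\tau^{(n)})=\int\bigl(\cE(\proj{\phi}_{AR\tilde G})\bigr)^{\ot n}\mathrm{d}\phi$, quasi-convexity of $D_\beta$ (both arguments carry the common measure $\mathrm{d}\phi$) followed by additivity under tensor products bounds this by $\sup_\phi n\,\dvaralpha{\beta}{\cE(\proj{\phi}_{AR\tilde G})}{\cF(\proj{\phi}_{AR\tilde G})}$ --- here one uses $\cF(\proj\phi)=\1_{A'}\ot\ptr{A'}{\cE(\proj\phi)}$ --- which is at most $n\,\dvaralpha{\beta}{\cE}{\cF}$. Dividing by $n$, taking the supremum over $\ket\psi$, and letting $n\to\infty$ kills the $O(\log n)/n$ error and yields $\drega{\cE}{\cF}\le\dvaralpha{\tfrac1{2-\alpha}}{\cE}{\cF}$.

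The main obstacle is exactly the reduction to i.i.d.\ inputs. Unlike in the post-selection technique for the trace distance, $D_\alpha$ is not jointly convex for $\alpha>1$, so one cannot symmetrise the input and dominate it by a de Finetti input while keeping control of $D_\alpha$. The resolution --- the technical heart of the argument --- is to first invoke \cref{lem:halpha_up_down}: it trades $\had$, whose defining divergence is forced to use the marginal as its second argument, for $\hauvar{\beta}$ with $\beta=\tfrac1{2-\alpha}$, whose second argument may be chosen freely; this freedom is what lets one plug in a de Finetti-compatible reference state and absorb the polynomial dimension of the symmetric subspace into a vanishing $O(\log n/n)$ term. One must also keep the replacer structure $\cF=\cS_{A'}\circ\cE$ intact throughout so that $\cE^{\ot n}$ and $\cF^{\ot n}$ transform consistently under all the above (that $\cF$ is not trace-preserving affects only bookkeeping constants).
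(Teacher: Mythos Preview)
Your proposal is correct and follows essentially the same approach as the paper: reduce to pure symmetric inputs, pass from $\had$ to $\hauvar{\beta}$ via \cref{lem:halpha_up_down}, dominate the symmetric input by $g_n\tau^{(n)}$, and finish with joint quasi-convexity and additivity on the de~Finetti mixture. The only cosmetic difference is that you bound $\proj\psi\le g_n\tau^{(n)}$ via operator monotonicity of $D_\beta$ in its first argument, whereas the paper writes $\tau^{(n)}$ as a convex combination containing $\hat\psi^n$ with weight $1/g_{n,d}$ and invokes \cite[Lemma~B.5]{eat}; both routes produce the same $O(\log n)$ penalty.
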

\begin{proof}
Due to the choice of $\cF$, we have that for any state $\psi^n \in \states(A^nR^n \tilde R^n)$ (with $\tilde R \equiv AR$):
\begin{align*}
\dalpha{\cE^{\otimes n}(\psi^n)}{\cF^{\otimes n}(\psi^n)} = - \had\left((A')^{n} | (R')^{n} \tilde R^n \right)_{\cE^{\otimes n}(\psi^n)} \,.
\end{align*}
From \cite[Proposition II.4]{leditzky2018approaches} and \cite[Lemma 4.2.2]{renner_thesis} we know that for every $n$, there exists a symmetric pure state $\ket{\hat\psi^n} \in \sym^n(A R \tilde R)$ such that 
\begin{align*}
\dalpha{\cE^{\otimes n}}{\cF^{\otimes n}} = \dalpha{\cE^{\otimes n}(\hat\psi^n)}{\cF^{\otimes n}(\hat \psi^n)} = - \had\left((A')^{n} | (R')^{n} \tilde R^n \right)_{\cE^{\otimes n}(\psi^n)} \,,
\end{align*}
where $\hat\psi^n = \proj{\hat \psi^n}$ and the supremum in the definition of the channel divergence is achieved because the conditional entropy is continuous in the state.
Let $d = \dim(AR\tilde R)$ and $g_{n, d} = \dim(\sym^n(AR\tilde R)) \leq (n+1)^{d^2 - 1}$.
We define the state
\begin{align}
\tau^n_{A^nR^n\tilde R^n} = \int \mu(\sigma_{AR\tilde R}) \sigma_{AR\tilde R}^{\otimes n} \,, \label{eqn:symm_opt}
\end{align}
where $\mu$ is the Haar measure on pure states.
We now claim that in the limit $n\to \infty$, we can essentially replace the optimizer $\hat \psi^n_{A^nR^n\tilde R^n}$ by the state $\tau^n_{A^nR^n\tilde R^n}$ in \cref{eqn:symm_opt}. 
More precisely, we claim that
\begin{align}
\lim_{n \to \infty} \frac{1}{n} \had((A')^{n} | (R')^{n} \tilde R^n)_{\cE^{\otimes n}(\hat \psi^n)} 
\geq \lim_{n \to \infty} \frac{1}{n} \hadvar{\frac{1}{2-\alpha}}((A')^{n} | (R')^{n} \tilde R^n)_{\cE^{\otimes n}(\tau^n)} \,. \label{eqn:entropy_psi_tau}
\end{align}

To show this, we first use \cref{lem:halpha_up_down} to get 
\begin{align*}
\had((A')^{n} | (R')^{n} \tilde R^n)_{\cE^{\otimes n}(\hat \psi^n)} \geq \hauvar{\frac{1}{2 - \alpha}}\left((A')^{n} | (R')^{n} \tilde R^n\right)_{\cE^{\otimes n}(\hat \psi^n)} \,.
\end{align*}
It is know that $\tau^n_{A^nR^n\tilde R^n}$ is the maximally mixed state on $\sym^n(AR\tilde R)$ (see e.g.~\cite{christandl2009postselection}).
Therefore,  
\begin{align*}
\rho^n_{A^nR^n\tilde R^n} \deq \frac{g_{n, d} \tau^n - \hat \psi^n}{g_{n, d} - 1}
\end{align*}
is a valid quantum state (i.e. positive and normalised).
Hence, we can write 
\begin{align*}
\tau^n = \left( 1 - \frac{1}{g_{n, d}} \right) \rho^n + \frac{1}{g_{n, d}} \hat \psi^n \,.
\end{align*}
Using \cite[Lemma B.5]{eat}, it follows that 
\begin{align*}
\frac{1}{n}\hauvar{\frac{1}{2 - \alpha}}\left((A')^{n} | (R')^{n}  \tilde R^n\right)_{\cE^{\otimes n}(\hat \psi^n)} 
\geq \frac{1}{n} \hauvar{\frac{1}{2 - \alpha}}\left((A')^{n} | (R')^{n} \tilde R^n\right)_{\cE^{\otimes n}(\tau^n)} - \frac{\alpha}{\alpha-1} \frac{\log(g_{n, d})}{n} \,.
\end{align*}
Since $\frac{\log(g_{n, d})}{n} \leq (d^2 - 1) \frac{\log n}{n}$ vanishes as $n\to\infty$, taking the limit and using $\hauvar{\frac{1}{2-\alpha}}(\cdot | \cdot) \geq \hadvar{\frac{1}{2-\alpha}}(\cdot | \cdot)$ proves \cref{eqn:entropy_psi_tau}.

Having established \cref{eqn:entropy_psi_tau}, we can now conclude the proof of the lemma as follows
\begin{align*}
\drega{\cE}{\cF} 
&= - \lim_{n \to \infty} \frac{1}{n} \had((A')^{n} | (R')^{n} \tilde R^n)_{\cE^{\otimes n}(\hat \psi^n)} \\
&\leq - \lim_{n \to \infty} \frac{1}{n} \hadvar{\frac{1}{2-\alpha}}((A')^{n} | (R')^{n} \tilde R^n)_{\cE^{\otimes n}(\tau^n)} \\
&= \lim_{n \to \infty} \frac{1}{n} \dvaralpha{\frac{1}{2-\alpha}}{\cE^{\otimes n}\Big( \int \mu(\sigma_{AR\tilde R}) \sigma_{AR\tilde R}^{\otimes n} \Big)}{\cF^{\otimes n}\Big( \int \mu(\sigma_{AR\tilde R}) \sigma_{AR\tilde R}^{\otimes n} \Big)} \\
&\leq \lim_{n \to \infty} \sup_{\sigma_{AR\tilde R} \in \states(AR\tilde R)} \frac{1}{n} \dvaralpha{\frac{1}{2-\alpha}}{\cE^{\otimes n}\left( \sigma_{AR\tilde R}^{\otimes n} \right)}{\cF^{\otimes n}\left( \sigma_{AR\tilde R}^{\otimes n} \right)} \\
&= \dvaralpha{\frac{1}{2-\alpha}}{\cE}{\cF} \,,
\end{align*}
where we used joint quasi-convexity~\cite[Proposition~4.17]{tomamichel2015quantum} in the fourth line and additivity under tensor products in the last line.
\end{proof}

\subsection{Strengthened chain rule for conditional R\'enyi entropy} \label{sec:entropy_chain_rule}
We next combine~\cref{thm:improved_chain_rule} with~\cref{lem:entropy_reg} to derive a new chain rule for the conditional R\'enyi entropy which then allows us to prove the generalised EAT in~\cref{sec:eat}.
\begin{lemma} \label{thm:improved_entropy_chain_rule}
Let $\alpha \in (1, 2)$, $\rho \in \states(ARE)$, and $\cM \in \cptp(RE, A'R'E')$ such that there exists $\cR \in \cptp(E, E')$ such that $\setft{Tr}_{A'R'} \circ \cM = \cR \circ \setft{Tr}_{R}$. 
Then 
\begin{align}
\had(AA'|E')_{\cM(\rho)} \geq \had(A|E)_{\rho} + \inf_{\omega \in \states(RE\tilde E)}\hadvar{\frac{1}{2-\alpha}}(A'|E' \tilde E)_{\cM(\omega)}  \label{eqn:entropy_chain_rule}
\end{align}
for a purifying system $\tilde E \equiv RE$.
\end{lemma}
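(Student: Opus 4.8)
The plan is to deduce \cref{eqn:entropy_chain_rule} from the divergence chain rule \cref{thm:improved_chain_rule} followed by the de-regularisation \cref{lem:entropy_reg}, after encoding the three conditional R\'enyi entropies as sandwiched R\'enyi divergences of channels applied to states (using $\had(XY|Z)_\xi=-\dalpha{\xi_{XYZ}}{\1_{XY}\ot\xi_Z}$). First I would define $\cN\deq\setft{Tr}_{R'}\circ\cM\in\cptp(RE,A'E')$, $\cE\deq\id_A\ot\cN\in\cptp(ARE,AA'E')$, and $\cF\deq\cS_{A'}\circ\cE\in\cp(ARE,AA'E')$, where $\cS_{A'}$ is the replacer map on the output register $A'$. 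A one-line computation gives $\cF(\omega_{ARE})=\1_{A'}\ot(\id_A\ot(\setft{Tr}_{A'}\circ\cN))(\omega_{ARE})$, and since $\setft{Tr}_{A'}\circ\cN=\setft{Tr}_{A'R'}\circ\cM=\cR\circ\setft{Tr}_R$ this equals $\1_{A'}\ot(\id_A\ot\cR)(\omega_{AE})$, so $\cF=\cR_{\mathrm{ch}}\circ\setft{Tr}_R$ with $\cR_{\mathrm{ch}}\deq\cS_{A'}\circ(\id_A\ot\cR)\in\cp(AE,AA'E')$; this is the only place the non-signalling hypothesis on $\cM$ enters. Then, with $\sigma_{ARE}\deq\1_A\ot\rho_{RE}\in\pos(ARE)$ (so $\sigma_{AE}=\1_A\ot\rho_E$), one checks directly that $\cE(\rho_{ARE})=\cM(\rho)_{AA'E'}$ and $\cF(\sigma_{ARE})=\1_{AA'}\ot\cM(\rho)_{E'}$, hence
\begin{align*}
\dalpha{\cE(\rho_{ARE})}{\cF(\sigma_{ARE})}=-\had(AA'|E')_{\cM(\rho)}\,,\qquad\dalpha{\rho_{AE}}{\sigma_{AE}}=-\had(A|E)_\rho\,.
\end{align*}

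With this encoding, I would apply \cref{thm:improved_chain_rule} with the roles ``$A$''$=AE$, ``$R$''$=R$, ``$B$''$=AA'E'$ and the operators $\rho_{ARE}$, $\sigma_{ARE}$ (the support condition $\supp\rho_{AE}\subseteq\supp\sigma_{AE}$ is immediate, and the nonstandard normalisation of $\sigma$ is harmless as both sides of \cref{eqn:improved_chain_rule} transform the same way under $\sigma\mapsto c\sigma$), obtaining $\had(AA'|E')_{\cM(\rho)}\geq\had(A|E)_\rho-\drega{\cE}{\cF}$. To bound $\drega{\cE}{\cF}$, I would first note that appending an identity channel leaves the channel divergence unchanged (one copy of the input already saturates the supremum in \cref{def:channel_dalpha}, so $\dalpha{\id_A\ot\cN}{\id_A\ot(\cS_{A'}\circ\cN)}=\dalpha{\cN}{\cS_{A'}\circ\cN}$ for all $\alpha$, hence likewise for the regularisation), so that $\drega{\cE}{\cF}=\drega{\cN}{\cS_{A'}\circ\cN}$. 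Now \cref{lem:entropy_reg} (valid for $\alpha\in(1,2)$) gives $\drega{\cN}{\cS_{A'}\circ\cN}\leq\dvaralpha{\frac{1}{2-\alpha}}{\cN}{\cS_{A'}\circ\cN}$, and for every $\omega\in\states(RE\tilde E)$ with $\tilde E\equiv RE$ we have $(\cS_{A'}\circ\cN)(\omega)=\1_{A'}\ot\cN(\omega)_{E'\tilde E}$ and $\cN(\omega)_{A'E'\tilde E}=\cM(\omega)_{A'E'\tilde E}$, so $\dvaralpha{\frac{1}{2-\alpha}}{\cN(\omega)}{(\cS_{A'}\circ\cN)(\omega)}=-\hadvar{\frac{1}{2-\alpha}}(A'|E'\tilde E)_{\cM(\omega)}$. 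Taking the supremum over $\omega$ gives $\drega{\cE}{\cF}\leq-\inf_{\omega\in\states(RE\tilde E)}\hadvar{\frac{1}{2-\alpha}}(A'|E'\tilde E)_{\cM(\omega)}$, and combining this with the bound from \cref{thm:improved_chain_rule} yields exactly \cref{eqn:entropy_chain_rule}.

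I expect the only genuinely delicate part to be the first step: one must simultaneously arrange that (i) $\dalpha{\cE(\rho)}{\cF(\sigma)}$ equals $-\had(AA'|E')_{\cM(\rho)}$, (ii) $\dalpha{\rho_{AE}}{\sigma_{AE}}$ equals $-\had(A|E)_\rho$ — in particular the conditioning must collapse to $E$ rather than $RE$, which forces $\cF$ to be built so that it depends only on the $AE$-marginal of its input — and (iii) $\cF$ has the product form $\cR_{\mathrm{ch}}\circ\setft{Tr}_R$ required by \cref{thm:improved_chain_rule}; property (iii) is precisely what the non-signalling condition on $\cM$ provides, and property (ii) is what makes the resulting chain rule strong enough for entropy accumulation. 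Everything after the encoding is a mechanical application of \cref{thm:improved_chain_rule} and \cref{lem:entropy_reg}, the only auxiliary input being the standard stabilisation property of the channel divergence used to strip off $\id_A$; alternatively one could avoid that step and instead argue directly, via a purification and data-processing argument, that the infimum of $\hadvar{\frac{1}{2-\alpha}}(A'|AE'\widetilde{ARE})_{\cE(\omega)}$ over states $\omega$ on $ARE\,\widetilde{ARE}$ is at least the infimum of $\hadvar{\frac{1}{2-\alpha}}(A'|E'\tilde E)_{\cM(\omega)}$ over states $\omega$ on $RE\tilde E$.
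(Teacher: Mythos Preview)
Your proposal is correct and follows essentially the same route as the paper: define the auxiliary channels $\cE=\id_A\ot\setft{Tr}_{R'}\circ\cM$ and $\cF=\cS_{A'}\circ\cE$ (the paper calls these $\tilde\cM$ and $\tilde\cN$), encode the entropies as divergences via $\sigma_{ARE}=\1_A\ot\rho_{RE}$, use the non-signalling assumption to verify $\cF=\cR_{\mathrm{ch}}\circ\setft{Tr}_R$, apply \cref{thm:improved_chain_rule}, strip off $\id_A$ by stabilisation of the channel divergence, and finish with \cref{lem:entropy_reg}. The only differences are cosmetic (your $\cN$ denotes $\setft{Tr}_{R'}\circ\cM$ rather than the paper's $\cS_{A'}\circ\cM$), and your remark about the unnormalised $\sigma$ is a nice explicit justification that the paper leaves implicit since \cref{thm:improved_chain_rule} already allows $\sigma\in\pos(AR)$.
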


\begin{proof}
We define the following maps\footnote{The map $\cM$ in the theorem statement is also implicitly tensored with an identity map on $A$, but for the definition of $\tilde M$ we make this explicit to avoid confusion when applying~\cref{thm:improved_chain_rule}.}
\begin{alignat*}{2}
\cN &= \cS_{A'} \circ \cM &&\in \cp(RE, A'R'E')\,,\\
\tilde \cM &= \id_{A} \otimes \setft{Tr}_{R'} \circ \cM &&\in \cptp(ARE, AA'E')\,,\\ 
\tilde \cN &= \cS_{A'} \circ \tilde\cM &&\in \cp(ARE, AA'E') \,.
\end{alignat*}
Note that in \cref{eqn:entropy_chain_rule}, we can replace $\cM$ by $\tilde \cM$, as the system $R'$ does not appear in \cref{eqn:entropy_chain_rule}.
With $\sigma_{ARE} = \1_A \otimes \rho_{RE}$ and $\tilde \cN = \cS_{A'} \circ \tilde \cM$, we can write 
\begin{align*}
- \had(AA'|E')_{\cM(\rho)} = \dalpha{\tilde \cM(\rho_{ARE})}{\tilde \cN(\sigma_{ARE})} \,.
\end{align*}
We now claim that there exists a map $\tilde \cR \in \cp(AE, AA'E)$ such that $\tilde \cN = \tilde \cR \circ \setft{Tr}_{R}$.
To see this, observe that by assumption, $\setft{Tr}_{A'} \circ \tilde \cM = \id_{A} \otimes \cR \circ \setft{Tr}_{R}$ for some $\cR \in \cp(E, E')$.
Then, we can define $\tilde \cR \in \cp(AE, AA'E)$ by its action on an arbitrary state $\omega_{AE}$:
\begin{align*}
\tilde \cR(\omega_{AE}) \deq \1_{A'} \otimes (\id_A \otimes \cR)(\omega_{AE}) = \1_{A'} \ot \setft{Tr}_{A'} \circ \tilde\cM (\omega_{ARE}) = \tilde \cN (\omega_{ARE})
\end{align*} 
for any extension $\omega_{ARE}$ of $\omega_{AE}$.
Therefore, we can apply \cref{thm:improved_chain_rule} to find 
\begin{align*}
\dalpha{\tilde \cM(\rho_{ARE})}{\tilde \cN(\sigma_{ARE})} \leq \dalpha{\rho_{AE}}{\sigma_{AE}} + \drega{\tilde \cM}{\tilde \cN} \,.
\end{align*}
By definition of $\sigma$, we have $\dalpha{\rho_{AE}}{\sigma_{AE}} = - \had(A|E)_{\rho}$. Since the channel divergence is stabilised, tensoring with $\id_{A}$ has no effect, i.e.,
\begin{align*}
\drega{\tilde \cM}{\tilde \cN} = \drega{\setft{Tr}_{R'} \circ \cM}{\setft{Tr}_{R'} \circ \cN} = \drega{\setft{Tr}_{R'} \circ \cM}{\cS_{A'} \circ \setft{Tr}_{R'} \circ \cM} \,.
\end{align*}
To this, we can apply \cref{lem:entropy_reg} and obtain
\begin{align*}
\drega{\tilde \cM}{\tilde \cN} \leq \dvaralpha{\frac{1}{2 - \alpha}}{\setft{Tr}_{R'} \circ \cM}{\cS_{A'} \circ \setft{Tr}_{R'} \circ \cM} = - \inf_{\omega \in \states(RE\tilde E)} \hadvar{\frac{1}{2-\alpha}}(A' | E'\tilde E)_{\cM(\omega)} 
\end{align*}
with $\tilde E \equiv RE$.
Combining all the steps yields the desired statement.
\end{proof}

\section{Generalised entropy accumulation}  \label{sec:eat}
We are finally ready to state and prove the main result of this work which is a generalisation of the EAT proven in~\cite{eat}.
We first state a simple version of this theorem, which follows readily from the chain rule \cref{thm:improved_entropy_chain_rule} and captures the essential feature of entropy accumulation: the min-entropy of a state $\cM_n \circ \dots \circ \cM_1(\rho)$ produced by applying a sequence of $n$ channels can be lower-bounded by a sum of entropy contributions of each channel $\cM_i$.
However, for practical applications, it is desirable not to consider the state $\cM_n \circ \dots \circ \cM_1(\rho)$, but rather that state conditioned on some classical event, for example ``success'' in a key distribution protocol -- a concept called ``testing''.
Analogously to~\cite{eat}, we present an EAT adapted to that setting in~\cref{sec:testing}.

\subsection{Generalised EAT}
\begin{theorem}[Generalised EAT] \label{thm:improved_eat}
Consider a sequence of channels $\cM_i \in \cptp(R_{i-1} E_{i-1}, A_i R_i E_i)$ such that for all $i \in \{1, \dots, n\}$, there exists $\cR_i \in \cptp(E_{i-1} , E_i )$ such that $\setft{Tr}_{A_i R_i} \circ \cM_i = \cR_i \circ \setft{Tr}_{R_{i-1}}$. Then for any $\eps \in (0,1)$ and any $\rho_{R_0 E_0 } \in \states(R_0 E_0 )$
\begin{align*}
\hmin^\eps(A^n | E_n )_{\cM_n \circ \dots \circ \cM_1(\rho_{R_0 E_0 })} \geq \sum_{i = 1}^n \inf_{\omega \in \states(R_{i-1}E_{i-1}\tilde E_{i-1})} H(A_i|E_i \tilde E_{i-1} )_{\cM_i(\omega)} - O(\sqrt{n})
\end{align*}
for a purifying system $\tilde E_{i-1} \equiv R_{i-1}E_{i-1}$. For a statement with explicit constants, see \cref{eqn:eat_info_explicit} in the proof.
\end{theorem}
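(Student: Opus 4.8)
The plan is to iterate the conditional R\'enyi chain rule \cref{thm:improved_entropy_chain_rule} across the $n$ rounds, then convert the resulting R\'enyi bound into a smooth min-entropy bound via the standard relation between $\hmin^\eps$ and $\had$, and finally pass from R\'enyi to von Neumann entropy using a continuity estimate, choosing $\alpha$ close to $1$ so that all penalty terms are $O(\sqrt{n})$. Concretely, fix $\alpha \in (1,2)$ and write $\rho^i = \cM_i \circ \dots \circ \cM_1(\rho_{R_0 E_0})$ for the state after round $i$ (with $\rho^0 = \rho_{R_0 E_0}$). For each $i$, applying \cref{thm:improved_entropy_chain_rule} with the substitution $A \leftrightarrow A^{i-1}$, $A' \leftrightarrow A_i$, $E \leftrightarrow E_{i-1}$, $E' \leftrightarrow E_i$, $R \leftrightarrow R_{i-1}$, $\rho \leftrightarrow \rho^{i-1}$ (which is legitimate since $\cM_i$ satisfies the non-signalling condition by hypothesis, and since $\cM_i$ acts as the identity on $A^{i-1}$) gives
\begin{align*}
\had(A^i | E_i)_{\rho^i} \geq \had(A^{i-1} | E_{i-1})_{\rho^{i-1}} + \inf_{\omega \in \states(R_{i-1} E_{i-1} \tilde E_{i-1})} \hadvar{\frac{1}{2-\alpha}}(A_i | E_i \tilde E_{i-1})_{\cM_i(\omega)} \,.
\end{align*}
Telescoping this inequality from $i=1$ to $n$ (noting $\had(A^0 | E_0)_{\rho^0} = 0$ since $A^0$ is trivial) yields
\begin{align*}
\had(A^n | E_n)_{\rho^n} \geq \sum_{i=1}^n \inf_{\omega \in \states(R_{i-1} E_{i-1} \tilde E_{i-1})} \hadvar{\frac{1}{2-\alpha}}(A_i | E_i \tilde E_{i-1})_{\cM_i(\omega)} \,.
\end{align*}

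Next I would relate the left-hand side to the smooth min-entropy. The standard bound (e.g. \cite{tomamichel09}, in the form used in \cite{dupuis2019entropy}) states that for $\alpha \in (1,2)$ and $\eps \in (0,1)$,
\begin{align*}
\hmin^\eps(A^n | E_n)_{\rho^n} \geq \had(A^n | E_n)_{\rho^n} - \frac{g(\eps)}{\alpha - 1} \,,
\end{align*}
where $g(\eps) = \log\frac{1}{1 - \sqrt{1 - \eps^2}} \le \log\frac{2}{\eps^2}$ is a constant depending only on $\eps$. Combining this with the telescoped bound reduces the problem to the single-round quantities $\hadvar{\frac{1}{2-\alpha}}(A_i | E_i \tilde E_{i-1})_{\cM_i(\omega)}$. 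To replace these by von Neumann entropies I would invoke a continuity bound for the conditional R\'enyi entropy in the order parameter near $\alpha = 1$: for $\beta = \frac{1}{2-\alpha} \in (1,2)$ and any state $\tau$ on systems $A_i$ and $E_i \tilde E_{i-1}$,
\begin{align*}
\hadvar{\beta}(A_i | E_i \tilde E_{i-1})_\tau \geq H(A_i | E_i \tilde E_{i-1})_\tau - (\beta - 1) \cdot c \,,
\end{align*}
where $c$ can be taken of the form $\log^2(1 + 2\dim(A_i))$ or similar (this is the continuity estimate from \cite[Cor.~IV.2]{dupuis2019entropy}); since $\beta - 1 = \frac{\alpha-1}{2-\alpha} \le 2(\alpha-1)$ for $\alpha$ bounded away from $2$, this penalty is $O(\alpha - 1)$ per round, hence $O(n(\alpha-1))$ in total. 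Taking the infimum over $\omega$ commutes appropriately (the infimum of a sum is at least the sum of infima, which is what we need), so we arrive at
\begin{align*}
\hmin^\eps(A^n | E_n)_{\rho^n} \geq \sum_{i=1}^n \inf_{\omega \in \states(R_{i-1} E_{i-1} \tilde E_{i-1})} H(A_i | E_i \tilde E_{i-1})_{\cM_i(\omega)} - n \cdot O(\alpha - 1) - \frac{O(1)}{\alpha - 1} \,.
\end{align*}

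Finally, optimizing the choice of $\alpha$: the error terms are of the form $n \kappa_1 (\alpha - 1) + \frac{\kappa_2(\eps)}{\alpha-1}$ for constants $\kappa_1$ (depending on $\max_i \dim A_i$) and $\kappa_2$ (depending on $\eps$), which is minimized at $\alpha - 1 = \sqrt{\kappa_2 / (n \kappa_1)}$, giving a combined error of $2\sqrt{n \kappa_1 \kappa_2} = O(\sqrt{n})$; one must check that this optimal $\alpha$ indeed lies in $(1,2)$ for $n$ large, which it does. This yields the claimed bound, and tracking the constants $\kappa_1, \kappa_2$ through the argument produces the explicit version referenced as \cref{eqn:eat_info_explicit}. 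The main obstacle I anticipate is bookkeeping rather than conceptual: ensuring that the single-round R\'enyi entropy $\hadvar{\frac{1}{2-\alpha}}(A_i | E_i \tilde E_{i-1})_{\cM_i(\omega)}$ that emerges from the chain rule is evaluated on exactly the same systems and with exactly the same conditioning structure as the von Neumann term in the final statement (so that no spurious dimension factors from $R_i$ or from the purifying system appear), and verifying that the purifying register $\tilde E_{i-1} \equiv R_{i-1} E_{i-1}$ introduced fresh in each application of the chain rule is consistent with the claim's indexing — in particular that the infimum over $\omega \in \states(R_{i-1} E_{i-1} \tilde E_{i-1})$ genuinely lower-bounds the telescoped quantity without needing any correlation between the $\omega$'s across rounds. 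A secondary subtlety is confirming that the $\hmin^\eps$-to-$\had$ conversion and the $\had$-to-$H$ continuity bound can be applied with the same $\alpha$ throughout (after the $\hadvar{1/(2-\alpha)}$ step, the order is $\beta = 1/(2-\alpha)$ rather than $\alpha$, so the continuity bound must be applied at $\beta$, and the $\hmin^\eps$ conversion at $\alpha$ — these are compatible since both $\to 1$ as $\alpha \to 1$).
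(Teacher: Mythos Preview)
Your proposal is correct and follows essentially the same route as the paper: iterate \cref{thm:improved_entropy_chain_rule} to obtain $\had(A^n|E_n)\geq\sum_i\inf_\omega H_{1/(2-\alpha)}(A_i|E_i\tilde E_{i-1})$, convert $\hmin^\eps$ to $\had$ via the standard bound with penalty $g(\eps)/(\alpha-1)$, replace each single-round R\'enyi term by the von Neumann entropy using a continuity estimate with penalty $O(\alpha-1)$ per round, and balance the two error terms by taking $\alpha-1=\Theta(1/\sqrt{n})$. The only cosmetic differences are that the paper handles the $i=1$ base case by monotonicity in $\alpha$ rather than by declaring $A^0$ trivial (both work), and cites \cite[Lemma~B.9]{eat} rather than \cite[Corollary~IV.2]{dupuis2019entropy} for the continuity step.
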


\begin{proof}
By \cite[Lemma B.10]{eat}, we have for $\alpha \in (1,2)$
\begin{align*}
\hmin^{\eps}(A_1^n | E_n )_{\cM_n \circ \dots \circ \cM_1(\rho_{R_0 E_0 })} \geq \had(A_1^n | E_n )_{\cM_n \circ \dots \circ \cM_1(\rho_{R_0 E_0 })} - \frac{g(\eps)}{\alpha-1}
\end{align*}
with $g(\eps) = \log(1 - \sqrt{1-\eps^2})$.
From \cref{thm:improved_entropy_chain_rule}, we have
\begin{align*}
&\had(A_1^n | E_n )_{\cM_n \circ \dots \circ \cM_1(\rho_{R_0 E_0 })} \\
&\hspace{20mm}\geq \had(A_1^{n-1} | E_{n-1} )_{\cM_{n-1} \circ \dots \circ \cM_1(\rho_{R_0 E_0 })}  + \inf_{\omega \in \states(R_{n-1}E_{n-1}\tilde E_{n-1})} \hadvar{\frac{1}{2-\alpha}}(A_n|E_n \tilde E_{n-1})_{\cM_n(\omega)} \,.
\end{align*}
Repeating this step $n-1$ times, we get 
\begin{align*}
\had(A_1^n | E_n )_{\cM_n \circ \dots \circ \cM_1(\rho_{R_0 E_0 })} 
&\geq \had(A_1 | E_1 )_{\cM_1(\rho_{R_0 E_0 })} + \sum_{i = 2}^n \inf_{\omega \in \states(R_{i-1}E_{i-1}\tilde E_{i-1})} \hadvar{\frac{1}{2-\alpha}}(A_i|E_i \tilde E_{i-1})_{\cM_i(\omega)} \\
&\geq \sum_{i = 1}^n \inf_{\omega \in \states(R_{i-1}E_{i-1}\tilde E_{i-1})} \hadvar{\frac{1}{2-\alpha}}(A_i|E_i \tilde E_{i-1})_{\cM_i(\omega)} \,,
\end{align*}
where the final step uses the monotonicity of the R\'enyi divergence in $\alpha$~\cite[Corollary~4.3]{tomamichel2015quantum}.
From \cite[Lemma B.9]{eat} we have for each $i \in \{1, \dots, n\}$ and $\alpha$ sufficiently close to 1, 
\begin{multline*}
\inf_{\omega \in \states(R_{i-1}E_{i-1}\tilde E_{i-1})} \hadvar{\frac{1}{2-\alpha}}(A_i|E_i \tilde E_{i-1})_{\cM_i(\omega)} \\
\geq  \inf_{\omega \in \states(R_{i-1}E_{i-1}\tilde E_{i-1})} H(A_i|E_i \tilde E_{i-1})_{\cM_i(\omega)} - \frac{\alpha - 1}{2-\alpha} \log^2\big(1 + 2 \, \dim(A_i)\big) \,.
\end{multline*}
Setting $d_A = \max_i \dim(A_i)$ and combining the previous steps, we obtain 
\begin{multline}
\hmin(A_1^n | E_n )_{\cM_n \circ \dots \circ \cM_1(\rho_{R_0 E_0})} \\
\geq \sum_{i = 1}^n \inf_{\omega_i \in \states(R_{i-1}E_{i-1}\tilde E_{i-1})} H(A_i|E_i \tilde E_{i-1})_{\cM_i(\omega_i)} - n \, \frac{\alpha - 1}{2-\alpha} \log^2(1 + 2 d_A) - \frac{g(\eps)}{\alpha-1} \,. \label{eqn:eat_info_explicit}
\end{multline}
Using $\alpha = 1 + O(1/\sqrt{n})$ yields the result.
\end{proof}

\subsection{Generalised EAT with testing} \label{sec:testing}
In this section, we will extend \cref{thm:improved_eat} to include the possibility of ``testing'', i.e., of computing the min-entropy of a cq-state conditioned on some classical event.
This analysis is almost identical to that of~\cite{dupuis2019entropy}; we give the full proof for completeness, but will appeal to~\cite{dupuis2019entropy} for specific tight bounds.
The resulting EAT (\cref{thm:with_testing}) has (almost) the same tight bounds as the result in~\cite{dupuis2019entropy}, but replaces the Markov condition with the more general non-signalling condition.
Hence, relaxing the Markov condition does not result in a significant loss in parameters (including second-order terms).

Consider a sequence of channels $\cM_i \in \cptp(R_{i-1} E_{i-1}, C_i A_i R_i E_i)$ for $i \in \{1, \dots, n\}$, where $C_i$ are classical systems with common alphabet $\cC$. 
We require that these channels $\cM_i$ satisfy the following condition: defining $\cM'_i  = \setft{Tr}_{C_i} \circ \cM_i$, there exists a channel $\cT \in \cptp(A^n E_n, C^n A^n E_n)$ such that $\cM_n \circ \dots \circ \cM_1 = \cT \circ \cM'_n \circ \dots \circ \cM'_1$ and $\cT$ has the form
\begin{align}
\cT(\omega_{A^n E_n}) = \sum_{y \in \cY , z \in \cZ} (\Pi_{A^n}^{(y)} \otimes \Pi_{E_n}^{(z)}) \omega_{A^n E_n} (\Pi_{A^n}^{(y)} \otimes \Pi_{E_n}^{(z)}) \otimes \proj{r(y,z)}_{C^n} \,, \label{eqn:measurement_condition}
\end{align}
where $\{\Pi_{A^n}^{(y)}\}_y$ and $\{\Pi_{E_n}^{(z)}\}_z$ are families of mutually orthogonal projectors on $A^n$ and $E_n$, and  $r : \cY \times \cZ  \to \cC$ is a deterministic function.  
Intuitively, this condition says that the classical statistics can be reconstructed ``in a projective way'' from systems $A^n$ and $E_n$ at the end of the protocol.
In particular, this requirement is always satisfied if the statistics are computed from classical information contained in $A^n$ and $E_n$, which is the case for many applications.
We note that the statistics are still generated in a round-by-round manner; \cref{eqn:measurement_condition} merely asserts that they could be \textit{reconstructed} from the final state.

Let $\mbP$ be the set of probability distributions on the alphabet $\cC$ of $C_i$, and let $\tilde E_{i-1}$ be a system isomorphic to $R_{i-1} E_{i-1}$.
For any $q \in \mbP$ we define the set of states 
\begin{align}
\label{eqn:def_sigma}
  \Sigma_i(q) = \bigl\{\nu_{C_i A_i R_i E_i \tilde E_{i-1}} = \cM_i(\omega_{R_{i-1} E_{i-1} \tilde E_{i-1}}) \,|\,  \omega \in \states(R_{i-1}E_{i-1}\tilde E_{i-1}) \text{ and } \nu_{C_i} = q \bigr\}  \ ,
\end{align}
where $\nu_{C_i}$ denotes the probability distribution over $\cC$ with the probabilities given by $\pr{c} = \bra{c} \nu_{C_i} \ket{c}$. In other words, $\Sigma_i(q)$ is the set of states that can be produced at the output of the channel $\mathcal{M}_i$ and whose reduced state on $C_i$ is equal to the probability distribution $q$.
\begin{definition} \label{def:tradeoff}
A  function $f: \mbP \to \R$ is called a \emph{min-tradeoff function} for $\{\cM_i\}$ if it satisfies 
\begin{align*}
f(q) \leq \min_{\nu \in \Sigma_i(q)} H(A_i|E_i \tilde E_{i-1})_{\nu} \quad \forall i = 1, \dots, n\, .
\end{align*}
Note that if $\Sigma_i(q) = \emptyset$, then $f(q)$ can be chosen arbitrarily.
\end{definition}
Our result will depend on some simple properties of the tradeoff function, namely the maximum and minimum of $f$, the minimum of $f$ over valid distributions, and the maximum variance of $f$:
\begin{align*}
\Max{f} &\deq \max_{q \in \mathbb{P}} f(q) \,,\\
\Min{f} &\deq \min_{q \in \mathbb{P}} f(q) \,,\\
\MinSigma{f} &\deq \min_{q : \Sigma(q) \neq \emptyset} f(q) \,,\\
\Var{f} &\deq \max_{ q : \Sigma(q) \neq \emptyset} \sum_{x \in \cC} q(x) f(\delta_{x})^2 - \left(\sum_{x \in \cC} q(x) f(\delta_x) \right)^2 \,,
\end{align*}
where $\Sigma(q) = \bigcup_i \Sigma_i(q)$ and $\delta_x$ is the distribution with all the weight on element $x$.
We write $\freq{C^n}$ for the distribution on $\cC$ defined by $\freq{C^n}(c) = \frac{|\{i \in \{1,\dots,n\} : C_i = c\}|}{n}$.  
We also recall that in this context, an event $\Omega$ is defined by a subset of $\cC^n$, and for a state $\rho_{C^n A^n E_n R_n}$ we write $\prs{\rho}{\Omega} = \sum_{c^n \in \Omega}\tr{\rho_{A_1^n E_n R_n, c^n}}$ for the probability of the event $\Omega$ and
\begin{align*}
  \rho_{C^n A^n E_n R_n | \Omega} = \frac{1}{\prs{\rho}{\Omega}} \sum_{c^n \in \Omega} \proj{c^n}_{C^n} \otimes \rho_{A^n E_n R_n, c^n}
\end{align*}
for the state conditioned on~$\Omega$.

\begin{theorem}\label{thm:with_testing}
Consider a sequence of channels $\cM_i \in \cptp(R_{i-1}E_{i-1}, C_i A_i R_i E_i)$ for $i \in \{1, \dots, n\}$, where $C_i$ are classical systems with common alphabet $\cC$ and the sequence $\{\cM_i\}$ satisfies \cref{eqn:measurement_condition} and the non-signalling condition: for each $\cM_i$, there exists $\cR_i \in \cptp(E_{i-1}, E_i)$ such that $\setft{Tr}_{A_i R_i C_i} \circ \cM_i = \cR_i \circ \setft{Tr}_{R_{i-1}}$.
Let $\eps \in (0,1)$, $\alpha \in (1, 3/2)$, $\Omega \subset \cC^n$, $\rho_{R_0 E_0} \in \states(R_0 E_0)$, and $f$ be an affine\footnote{A function $f$ on the convex set $\mbP(\cC)$ is called \emph{affine} if it is linear under convex combinations, i.e., for $\lambda \in [0,1]$ and $p_1, p_2 \in \mbP(\cC)$, $\lambda f(p_1) + (1 - \lambda) f(p_2) = f(\lambda p_1 + (1 - \lambda) p_2)$. Such functions are also sometimes called \emph{convex-linear}.} min-tradeoff function with $h = \min_{c^n \in \Omega} f(\freq{c^n})$. Then,
\begin{multline}
\hmin^\eps(A^n | E_n)_{\cM_n \circ \dots \circ \cM_1(\rho_{R_0 E_0})_{|\Omega}} \\
\geq n \, h - n \, \frac{\alpha-1}{2-\alpha} \, \frac{\ln(2)}{2} V^2 - \frac{g(\eps) + \alpha \log(1/\prs{\rho^n}{\Omega})}{\alpha-1} -  n \, \left( \frac{\alpha-1}{2-\alpha} \right)^2 K'(\alpha)\, ,  \label{eqn:alpha_to_choose}
\end{multline}
where $\pr{\Omega}$ is the probability of observing event $\Omega$, and
\begin{align*}
g(\eps) &= - \log(1 - \sqrt{1-\eps^2}) \,,\\
V &= \log (2d_A^2+1) + \sqrt{2 + \Var{f}} \,,\\
K'(\alpha) &= \frac{(2-\alpha)^3}{6 (3-2\,\alpha)^3 \ln 2} \, 2^{\frac{\alpha-1}{2-\alpha}(2\log d_{A}  + \Max{f} - \MinSigma{f})} \ln^3\left( 2^{2\log d_{A} + \Max{f} - \MinSigma{f}} + e^2 \right) \,,
\end{align*}
with $d_A = \max_i \dim(A_i)$.
\end{theorem}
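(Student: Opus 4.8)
The plan is to follow the proof of entropy accumulation with testing in~\cite{dupuis2019entropy} almost verbatim, the single modification being that wherever that proof invokes the Markov-conditioned chain rule of~\cite[Corollary~3.5]{eat}, we instead invoke our non-signalling chain rule~\cref{thm:improved_entropy_chain_rule}. The two chain rules have the same shape---each peels off a single round, replaces $\alpha$ by $\frac{1}{2-\alpha}$, and leaves behind a single-round term $\inf_\omega \hadvar{\frac{1}{2-\alpha}}(A_i|E_i\tilde E_{i-1})_{\cM_i(\omega)}$---so the substitution is transparent and all of the second-order bookkeeping of~\cite{dupuis2019entropy} carries over. The only difference in the setup is that our model keeps a fully general updated side-information register $E_i$ in each round rather than the ``static $E$ plus appended $B_i$'' split of~\cite{eat, dupuis2019entropy}; since~\cref{thm:improved_entropy_chain_rule} is already stated for a general updated register, this is harmless. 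Throughout one keeps $\alpha$ close to $1$ (in particular $\alpha < 3/2$, so that $\frac{1}{2-\alpha} < 2$ stays in the range where the single-round continuity bounds apply), and optimises at the end.

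Concretely, I would proceed as follows. \textbf{(i)~Reduce to a R\'enyi entropy and remove the conditioning on $\Omega$.} Write $\cM'_i = \setft{Tr}_{C_i}\circ\cM_i$; these channels still satisfy the non-signalling condition with the same $\cR_i$. Using a smoothing bound (\cite[Lemma~B.10]{eat}) together with the fact that, by~\cref{eqn:measurement_condition}, the classical statistics $C^n$ can be reconstructed from $A^n E_n$ by a projective channel, pass from $\hmin^\eps(A^n|E_n)_{\cM_n\circ\dots\circ\cM_1(\rho_{R_0 E_0})_{|\Omega}}$ to a R\'enyi entropy of the \emph{unconditioned} state $\cM'_n\circ\dots\circ\cM'_1(\rho_{R_0 E_0})$, at the cost of the additive term $\frac{g(\eps)+\alpha\log(1/\prs{\rho^n}{\Omega})}{\alpha-1}$. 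This step is identical to~\cite{dupuis2019entropy} and does not touch the Markov-vs.-non-signalling distinction. \textbf{(ii)~Bring in the affine min-tradeoff function.} Following~\cite{dupuis2019entropy}, augment each channel by a classical register recording $f(\delta_{C_i})$; since $f$ is affine, on $\Omega$ one has $\frac1n\sum_i f(\delta_{C_i}) = f(\freq{C^n}) \geq h$, so the frequency constraint defining $\Omega$ becomes a bound on a sum of single-round tradeoff values, and the problem reduces to lower-bounding a sum of single-round R\'enyi entropies constrained via $f$. The augmented channels still satisfy our non-signalling condition: the extra register is a function of the $\cM_i$-output and is therefore traced out on the ``$A$-side'' of~\cref{eqn:intro_ns} together with $A_iR_iC_i$, leaving the same witness $\cR_i$.

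\textbf{(iii)~Iterate the chain rule.} Apply~\cref{thm:improved_entropy_chain_rule} $n$ times to the augmented channels, each time using their non-signalling property, and follow~\cite{dupuis2019entropy} to obtain a lower bound of the form $n h + \sum_i \inf_\omega \hadvar{\frac{1}{2-\alpha}}(A_i|E_i\tilde E_{i-1})_{\cM_i(\omega)}$ with the $\Omega$-constraint incorporated. \textbf{(iv)~Pass to the von Neumann entropy.} Convert each single-round $\hadvar{\frac{1}{2-\alpha}}$ term to a von Neumann entropy using a continuity bound (\cite[Lemma~B.9]{eat} in the sharpened form of~\cite{dupuis2019entropy} that produces the constants $V$ and $K'(\alpha)$); this is where the dependence on $\Var{f}$, $\Max{f}$ and $\MinSigma{f}$ enters, as the per-round tradeoff values fluctuate and a second-order (variance) correction is needed---precisely the refinement of~\cite{dupuis2019entropy} over~\cite{eat}. \textbf{(v)~Optimise $\alpha$.} Collecting all error terms and choosing $\alpha-1 = \Theta(1/\sqrt n)$ to balance the terms proportional to $\frac{\alpha-1}{2-\alpha}$ against those proportional to $\frac{1}{\alpha-1}$ yields~\cref{eqn:alpha_to_choose}, and hence the stated $O(\sqrt n)$ bound.

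The main obstacle is step~(ii): correctly combining the conditioning on $\Omega$ with the affine min-tradeoff function while retaining explicit, near-optimal second-order constants. This is the technically heaviest part of~\cite{dupuis2019entropy} and requires careful use of the reconstruction property~\cref{eqn:measurement_condition} and of the reweighting argument that converts the global frequency constraint on $\Omega$ into per-round information. Crucially, however, none of this interacts with the passage from the Markov condition to the non-signalling condition; the only genuinely new points are the (immediate) observations that $\cM'_i$ and the augmented channels inherit the non-signalling condition, and the verification that~\cref{thm:improved_entropy_chain_rule} slots in wherever~\cite{dupuis2019entropy} uses~\cite[Corollary~3.5]{eat}. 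Everything else is a transcription of~\cite{dupuis2019entropy}, which is why the resulting bound matches theirs up to the negligible difference incurred by using our chain rule instead of the Markov one.
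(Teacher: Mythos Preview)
Your proposal is essentially correct and follows the same route as the paper: both argue that the proof of~\cite{dupuis2019entropy} goes through verbatim once the Markov chain rule is replaced by the non-signalling chain rule (\cref{thm:improved_entropy_chain_rule}), with the key observations that the augmented channels inherit the non-signalling condition and that~\cref{eqn:measurement_condition} lets one add $C^n$ to the $A$-side. Two small remarks: first, the paper removes the conditioning on $\Omega$ only \emph{after} introducing the auxiliary $D$-systems and splitting via a dedicated lemma (the paper's \cref{lem:chain_rule_testing}), rather than upfront as you sketch in step~(i); second, your step~(v) is not part of this theorem---\cref{eqn:alpha_to_choose} is the bound for \emph{fixed} $\alpha\in(1,3/2)$, and the optimisation over $\alpha$ is deferred to the subsequent corollary.
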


\begin{remark}
The parameter in~$\alpha$ in~\cref{thm:with_testing} can be optimized for specific problems, which leads to tighter bounds. Alternatively, it is possible to make a generic choice for $\alpha$ to recover a theorem that looks much more like \cref{thm:improved_eat}, which is done in \cref{cor:alpha_chosen}.
We also remark that even tighter second order terms have been derived in \cite{liu2021device}.
To keep our theorem statement and proofs simpler, we do not carry out this additional optimization explicitly, but note that this can be done in complete analogy to \cite{liu2021device}.
\end{remark}

To prove \cref{thm:with_testing}, we will need the following lemma (which is already implicit in \cite[Claim 4.6]{eat}, but we give a simplified proof here).
\begin{lemma} \label{lem:chain_rule_testing}
Consider a quantum state $\rho \in \states(CADE)$ that has the form 
\begin{align*}
\rho_{CADE} = \sum_{c \in \Omega} \proj{c} \otimes \rho_{AE,c} \otimes \rho_{D|c} \,,
\end{align*}
where $\Omega \subset \cC$ is a subset of the alphabet $\cC$ of the classical system $C$, and for each $c$, $\rho_{AE,c} \in \pos(AE)$ is subnormalised and $\rho_{D|c} \in \states(D)$ is a quantum state.
Then for $\alpha > 1$,
\begin{align*}
\hau(ACD|E)_{\rho} \leq \hau(AC|E)_{\rho} + \max_{c \in \Omega} \had(D)_{\rho_{D|c}} \,.
\end{align*}
\end{lemma}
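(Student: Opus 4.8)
The plan is to unpack the definition $\hau(X|E)_\rho = \sup_{\sigma_E}-\dalpha{\rho_{XE}}{\1_X\ot\sigma_E}$ and exploit the special product-with-classical-flag structure of $\rho_{CADE}$, which makes $\rho_{ACDE}$ block-diagonal in $C$ with each block factorising as $\rho_{AE,c}\ot\rho_{D|c}$. First I would fix an optimal (or near-optimal) $\sigma_E$ for $\hau(AC|E)_\rho$. The goal is to compare $\dalpha{\rho_{ACDE}}{\1_{ACD}\ot\sigma_E}$ with $\dalpha{\rho_{ACE}}{\1_{AC}\ot\sigma_E}$ using the same $\sigma_E$. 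Writing out the trace in the definition of $D_\alpha$ and using that $\sigma_E^{\frac{1-\alpha}{2\alpha}}\rho_{ACDE}\sigma_E^{\frac{1-\alpha}{2\alpha}}$ is block-diagonal over $c\in\Omega$ with block $\sigma_E^{\frac{1-\alpha}{2\alpha}}\rho_{AE,c}\sigma_E^{\frac{1-\alpha}{2\alpha}}\ot\rho_{D|c}$, the key point is that taking the $\alpha$-th power commutes with the tensor factorisation in each block: $(M_c\ot\rho_{D|c})^\alpha = M_c^\alpha\ot\rho_{D|c}^\alpha$. Hence
\begin{align*}
\tr{\big(\sigma_E^{\frac{1-\alpha}{2\alpha}}\rho_{ACDE}\sigma_E^{\frac{1-\alpha}{2\alpha}}\big)^\alpha} = \sum_{c\in\Omega}\tr{M_c^\alpha}\,\tr{\rho_{D|c}^\alpha},
\end{align*}
where $M_c = \sigma_E^{\frac{1-\alpha}{2\alpha}}\rho_{AE,c}\sigma_E^{\frac{1-\alpha}{2\alpha}}$.

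Next I would bound $\tr{\rho_{D|c}^\alpha} = 2^{(\alpha-1)\dalpha{\rho_{D|c}}{\1_D}} = 2^{-(\alpha-1)\had(D)_{\rho_{D|c}}} \le 2^{-(\alpha-1)\min_{c\in\Omega}\had(D)_{\rho_{D|c}}}$ (using $\alpha>1$ so that the exponent flips the inequality in the right direction), pull this uniform bound out of the sum, and recognise $\sum_{c\in\Omega}\tr{M_c^\alpha} = \tr{\big(\sigma_E^{\frac{1-\alpha}{2\alpha}}\rho_{ACE}\sigma_E^{\frac{1-\alpha}{2\alpha}}\big)^\alpha} = 2^{(\alpha-1)\dalpha{\rho_{ACE}}{\1_{AC}\ot\sigma_E}}$. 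Wait — I want $\max_{c\in\Omega}\had(D)$, not the min; since we are upper-bounding $\tr{\rho_{D|c}^\alpha}$ and $x\mapsto 2^{-(\alpha-1)x}$ is decreasing, $\tr{\rho_{D|c}^\alpha}\le 2^{-(\alpha-1)\max_c\had(D)_{\rho_{D|c}}}$ would be false; instead $\tr{\rho_{D|c}^\alpha}\le \max_c 2^{-(\alpha-1)\had(D)_{\rho_{D|c}}} = 2^{-(\alpha-1)\min_c\had(D)}$. So I should re-examine: the claimed bound has $+\max_c\had(D)$ added to an \emph{upper} bound on $\hau(ACD|E)$, so we want a \emph{lower} bound on $\tr{\rho_{D|c}^\alpha}$, namely $\tr{\rho_{D|c}^\alpha}\ge \min_c\tr{\rho_{D|c}^\alpha} = 2^{-(\alpha-1)\max_c\had(D)_{\rho_{D|c}}}$. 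Taking logs, dividing by $\alpha-1>0$, and negating turns $\dalpha{\rho_{ACDE}}{\1\ot\sigma_E} \le \dalpha{\rho_{ACE}}{\1\ot\sigma_E} - \max_c\had(D)_{\rho_{D|c}}$; negating and taking the supremum over $\sigma_E$ on the left (the same $\sigma_E$ is feasible for both) gives exactly $\hau(ACD|E)_\rho \le \hau(AC|E)_\rho + \max_{c\in\Omega}\had(D)_{\rho_{D|c}}$.

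The main subtlety — which is really the only place care is needed — is the interplay of suprema over $\sigma_E$: one must check that the $\sigma_E$ optimal (or $\delta$-optimal, then $\delta\to0$) for the right-hand entropy $\hau(AC|E)_\rho$ is also a valid feasible point for the left-hand entropy $\hau(ACD|E)_\rho$, so that the pointwise divergence inequality above, which holds for \emph{every} $\sigma_E\in\states(E)$, upgrades to the claimed inequality between the optimised entropies; this is immediate since the optimisation domain $\states(E)$ is the same for both. A second routine check is that $\rho_{AE,c}$ being only subnormalised causes no issue — $D_\alpha$ is defined for a normalised first argument and positive semidefinite second argument, but here the classical register $C$ carries the normalisation, so $\rho_{ACDE}$ and $\rho_{ACE}$ are genuine (sub)normalised states and the identities above are just the block-diagonal evaluation of the $D_\alpha$ trace formula; the support condition $\supp\rho\subseteq\supp(\1\ot\sigma_E)$ is automatic. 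No spectral pinching or Uhlmann machinery is needed here — this is a direct computation, and the only thing to get right is the direction of the inequality coming from $\alpha>1$.
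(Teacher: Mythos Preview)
Your approach is the paper's: use the block-diagonal structure in $C$ to write the sandwiched trace as $\sum_{c\in\Omega}\tr{M_c^\alpha}\,\tr{\rho_{D|c}^\alpha}$, pull out a uniform bound on the $D$-factor, and convert back to entropies. Two minor clean-ups: your displayed $D_\alpha$ inequality should read $\geq$, not $\leq$ (it comes from the \emph{lower} bound $\tr{\rho_{D|c}^\alpha}\ge\min_{c'}\tr{\rho_{D|c'}^\alpha}$ you correctly identified, and only after negating do you get the desired $\leq$), and the optimiser to fix is the one for the \emph{left-hand} entropy $\hau(ACD|E)$ --- as the paper does --- so that evaluating the pointwise inequality at that $\sigma_E$ immediately gives $\hau(ACD|E)\leq -\dalpha{\rho_{ACE}}{\1\ot\sigma_E}+\max_c\had(D)\leq \hau(AC|E)+\max_c\had(D)$, avoiding the feasibility discussion altogether.
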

\begin{proof}
Let $\sigma_E \in \states(E)$ such that 
\begin{align*}
\hau(ACD|E)_{\rho} = - \dalpha{\rho_{CADE}}{\1_{CAD} \ot \sigma_E} \,.
\end{align*}
Then
\begin{align*}
\left( \sigma_E^{\frac{1-\alpha}{2 \alpha}} \rho_{CADE} \sigma_E^{\frac{1-\alpha}{2 \alpha}} \right)^{\alpha} 
= \sum_{c \in \Omega} \proj{c} \ot \left( \sigma_E^{\frac{1-\alpha}{2 \alpha}} \rho_{AE,c} \sigma_E^{\frac{1-\alpha}{2 \alpha}} \right)^{\alpha}  \ot \rho_{D|c}^{\alpha} \,.
\end{align*}
Hence, 
\begin{align*}
\tr{\left( \sigma_E^{\frac{1-\alpha}{2 \alpha}} \rho_{CADE} \sigma_E^{\frac{1-\alpha}{2 \alpha}} \right)^{\alpha}} 
&= \sum_{c \in \Omega} \tr{\left( \sigma_E^{\frac{1-\alpha}{2 \alpha}} \rho_{AE,c} \sigma_E^{\frac{1-\alpha}{2 \alpha}} \right)^{\alpha}}  \, \tr{\rho_{D|c}^{\alpha}} \\
&\leq \sup_{\tilde \sigma_E \in \states(E)} \tr{\sum_{c \in \Omega} \proj{c} \ot \left( \tilde \sigma_E^{\frac{1-\alpha}{2 \alpha}} \rho_{AE,c} \tilde \sigma_E^{\frac{1-\alpha}{2 \alpha}} \right)^{\alpha}}  \, \max_{c \in \Omega} \tr{\rho_{D|c}^{\alpha}} \\
&= \sup_{\tilde \sigma_E \in \states(E)} \tr{\left( \tilde \sigma_E^{\frac{1-\alpha}{2 \alpha}} \rho_{CAE} \tilde \sigma_E^{\frac{1-\alpha}{2 \alpha}} \right)^{\alpha}} \, \max_{c \in \Omega} \tr{\rho_{D|c}^{\alpha}}
\end{align*}
Recalling the definitions of $D_{\alpha}$ (\cref{def:dalpha}) and $\hau$ (\cref{def:halpha}), we see that the lemma follows by taking the logarithm and multiplying by $\frac{1}{\alpha - 1}$.
\end{proof}

\begin{proof}[Proof of \cref{thm:with_testing}]
As in the proof of \cref{thm:improved_eat}, we first use \cite[Lemma B.10]{eat} to get
\begin{align}
\hmin^\eps(A^n | E_n)_{\cM_n \circ \dots \circ \cM_1(\rho_{R_0 E_0 })_{|\Omega}} \geq \hau(A^n | E_n )_{\cM_n \circ \dots \circ \cM_1(\rho_{R_0 E_0 })_{|\Omega}} - \frac{g(\eps)}{\alpha-1} \label{eqn:min_to_alpha}
\end{align}
for $\alpha \in (1,2]$ and $g(\eps) = \log(1 - \sqrt{1-\eps^2})$.
We therefore need to find a lower bound for 
\begin{align}
\hau(A^n | E_n )_{\cM_n \circ \dots \circ \cM_1(\rho_{R_0 E_0 })_{|\Omega}} = \hau(A^n C^n| E_n )_{\cM_n \circ \dots \circ \cM_1(\rho_{R_0 E_0 })_{|\Omega}} \,,\label{eqn:halpha_cond}
\end{align}
where the equality holds because of \cref{eqn:measurement_condition} and \cite[Lemma B.7]{eat}.

Before proceeding with the formal proof, let us explain the main difficulty compared to \cref{thm:improved_eat}.
The state for which we need to compute the entropy in \cref{eqn:halpha_cond} is conditioned on the event $\Omega \subset \cC^n$. 
This is a global event, in the sense that it depends on the classical outputs $C_1, \dots, C_n$ of all rounds.
We essentially seek a lower bound that involves $\min_{\nu \in \Sigma_i(\freq{c^n})} \had(A_i|E_i )_{\nu}$ for some $c^n \in \Omega$, i.e., for every round we only want to minimize over output states of the channel $\cM_i$ whose distribution on $C_i$ matches the frequency distribution $\freq{c^n}$ of the $n$ rounds we observed.
This means that we must use the global conditioning on $\Omega$ to argue that in each round, we can restrict our attention to states whose outcome distribution matches the (worst-case) frequency distribution associated with $\Omega$.
The chain rule \cref{thm:improved_chain_rule} does not directly allow us to do this as the r.h.s.~of \cref{eqn:entropy_chain_rule} always minimizes over all possible input states.

To circumvent this, we follow a strategy that was introduced in \cite{eat} and optimized in \cite{dupuis2019entropy}.
For every $i$, we introduce a quantum system $D_i$ with $\dim(D_i) = \lceil  2^{\Max{f} - \Min{f}} \rceil$ and define $\cD_i \in \cptp(C_i, C_i D_i)$ by 
\begin{align*}
\cD_i(\omega_{C_i}) = \sum_{c \in \cC} \bra{c}\omega_{C_i}\ket{c} \cdot \proj{c} \ot \tau_{D_{i}|c} \,.
\end{align*}
For every $c \in \cC$, the state $\tau_{D_{i}|c} \in \states(D)$ is defined as the mixture between a uniform distribution on $\{1, \dots, \lfloor 2^{\Max{f} - f(\delta_c)} \rfloor \}$ and a uniform distribution on $\{1, \dots, \lceil 2^{\Max{f} - f(\delta_c)} \rceil\}$ that satisfies 
\begin{align*}
H(D_i)_{\tau_{D_i|c}} = \Max{f} - f(\delta_c) \, ,
\end{align*} 
where $\delta_x$ stands for the distribution with all the weight on element $x$. 
This is clearly possible if $\dim(D_i) = \lceil  2^{\Max{f} - \Min{f}} \rceil$. 

We define $\bar \cM_i = \cD_i \circ \cM_i$ and denote 
\begin{align*}
\rho^n_{C^n A^n R_n E_n } = \cM_n \circ \dots \circ \cM_1(\rho_{R_0 E_0 }) \;\tand\; \bar \rho^n_{C^n A^n D^n R_n E_n } = \bar \cM_n \circ \dots \circ \bar \cM_1(\rho_{R_0 E_0 }) \,. 
\end{align*} 
The state $\bar \rho^n_{|\Omega}$ has the right form for us to apply \cref{lem:chain_rule_testing} and get 
\begin{align}
\hau(A^n C^n |E_n )_{\bar \rho^n_{|\Omega}} \geq - \max_{c^n \in \Omega} \had(D^n)_{\bar \rho^n_{D^n|c^n}} + \hau(A^n C^n D^n|E_n )_{\bar \rho^n_{|\Omega}}  \,, \label{eqn:h_testing_overall}
\end{align}
where 
\begin{align*}
\bar \rho^n_{D^n|c^n} = \tau_{D_{1}|c_1} \ot \dots \ot \tau_{D_{n}|c_n} \,.
\end{align*}
We treat each term in \cref{eqn:h_testing_overall} in turn.
\begin{enumerate}
\item For the term on the l.h.s., it is easy to see that $\bar \rho^n_{C^n A^n R_n E_n |\Omega} = \rho^n_{C^n A^n R_n E_n |\Omega}$, so 
\begin{align}
\hau(A^n C^n |E_n )_{\bar \rho^n_{|\Omega}} = \hau(A^n C^n |E_n )_{\rho^n_{|\Omega}} \,. \label{eqn:h_testing1} 
\end{align}
\item For the first term on the r.h.s., we compute
\begin{align*}
\had(D^n)_{\bar \rho^n_{D^n|c^n}} 
= \sum_{i} \had(D_i)_{\tau_{D_{i}|c_i}} 
\leq \sum_{i} H(D_i)_{\tau_{D_{i}|c_i}} 
&= n\,\Max{f} - \sum_{i} f(\delta_{c_i}) \\
&= n\,\Max{f} - n f(\freq{c^n}) \,, \numberthis \label{eqn:h_testing_2}
\end{align*} 
where the last equality holds because $f$ is affine.
\item For the second term on the r.h.s., we first use \cite[Lemma B.5]{eat} to remove the conditioning on the event $\Omega$, and then use that removing the classical system $C^n$ and switching from $\hau$ to $\had$ can only decrease the entropy: 
\begin{align*}
\hau(A^n C^n D^n|E_n )_{\bar \rho^n_{|\Omega}} 
&\geq \had(A^n D^n|E_n )_{\bar \rho^n} - \frac{\alpha}{\alpha-1} \log(1/\prs{\rho^n}{\Omega}) \,,
\end{align*}
where we used $\prs{\rho^n}{\Omega} = \prs{\bar \rho^n}{\Omega}$. 
Now noting that $\setft{Tr}_{D_i} \circ \bar \cM_i = \cM_i$, we see that the non-signalling condition $\setft{Tr}_{A_i R_i C_i} \circ \cM_i = \cR_i \circ \setft{Tr}_{R_{i-1}}$ on $\cM_i$ implies the non-signalling condition $\setft{Tr}_{A_i R_i C_i D_i} \circ \bar \cM_i = \cR_i \circ \setft{Tr}_{R_{i-1}}$ on $\bar \cM_i$.
We can therefore apply the chain rule in \cref{thm:improved_entropy_chain_rule} to find 
\begin{align*}
\had(A^n D^n|E_n )_{\bar \rho^n} \geq \sum_{i = 1}^n \min_{\omega_{i-1} \in \states(R_{i-1} E_{i-1} \tilde E_{i-1})} \hadvar{\beta}(A_i D_i|E_i \tilde E_{i-1})_{\bar \cM_i(\omega_{i-1})} \,,
\end{align*}
where we introduced the shorthand $\beta \deq \frac{1}{2-\alpha}$ and the purifying system $\tilde E_{i-1} \equiv R_{i-1} E_{i-1}$.
Noting that for $\alpha \in (1,3/2)$ we have $\beta \in (1, 2)$, we can now use \cite[Corollary IV.2]{dupuis2019entropy} to obtain 
\begin{align*}
\hadvar{\beta}(A_i D_i|E_i \tilde E_{i-1})_{\bar \cM_i(\omega_{i-1})} \geq H(A_i D_i|E_i \tilde E_{i-1})_{\bar \cM_i(\omega_{i-1})} - (\beta - 1)\frac{\ln(2)}{2} V^2 - (\beta-1)^2 K( \beta ) \,,
\end{align*}
where $V^2$ and $K(\beta)$ are quantities from~\cite[Proposition V.3]{dupuis2019entropy} that satisfy 
\begin{align*}
K(\beta) &\leq \frac{1}{6 (2-\beta)^3 \ln 2} \, 2^{(\beta - 1)(2\log d_{A}  + \Max{f} - \MinSigma{f})} \ln^3\left( 2^{2\log d_{A} + \Max{f} - \MinSigma{f}} + e^2 \right) \,, \\
 V^2 &= \left(\log (2d_A^2+1) + \sqrt{2 + \Var{f}} \right)^2 \,,
\end{align*}
where $d_A = \max_i \dim(A_i)$.
Furthermore, as in the proof of \cite[Proposition V.3]{dupuis2019entropy}, we have 
\begin{align*}
H(A_i D_i|E_i \tilde E_{i-1})_{\bar \cM_i(\omega_{i-1})} \geq \Max{f} \,.
\end{align*}
Therefore, the second term on the r.h.s.~of \cref{eqn:h_testing_overall} is bounded by
\begin{align}
&\hau(A^n C^n D^n|E_n)_{\bar \rho^n_{|\Omega}} \nonumber \\
&\hspace{10mm}\geq n \, \Max{f}  - n \, (\beta - 1)\frac{\ln(2)}{2} V^2 - n \, (\beta - 1)^2 K(\beta) - \frac{\alpha}{\alpha-1} \log(1/\prs{\rho^n}{\Omega})\,. \label{eqn:h_testing3}
\end{align}
\end{enumerate}
Combining our results for each of the three terms (i.e.~\cref{eqn:h_testing1}, \cref{eqn:h_testing_2}, and \cref{eqn:h_testing3}) and recalling $h = \min_{x^n \in \Omega} f(\freq{x^n})$, \cref{eqn:h_testing_overall} becomes 
\begin{align*}
\hau(A^n C^n |E_n )_{\rho^n_{|\Omega}} \geq n \, h - n \, (\beta - 1)\frac{\ln(2)}{2} V^2 - \frac{\alpha}{\alpha-1} \log(1/\prs{\rho^n}{\Omega}) - n \, (\beta - 1)^2 K(\beta) \,.
\end{align*}
Inserting this into \cref{eqn:min_to_alpha} and \cref{eqn:halpha_cond}, and defining $K'(\alpha) = K(\beta) = K(\frac{1}{2-\alpha})$ we obtain
\begin{multline}
\hmin^\eps(A^n | E_n )_{\cM_n \circ \dots \circ \cM_1(\rho_{R_0 E_0 })_{|\Omega}} \\
\geq n \, h - n \, (\beta - 1)\frac{\ln(2)}{2} V^2 - \frac{g(\eps) + \alpha \log(1/\prs{\rho^n}{\Omega})}{\alpha-1} -  n \, (\beta - 1)^2 K(\beta) 
\end{multline}
as desired.
\end{proof}

\begin{corollary} \label{cor:alpha_chosen}
For the setting given in~\cref{thm:with_testing} we have
\begin{align*}
\hmin^\eps(A^n | E_n )_{\cM_n \circ \dots \circ \cM_1(\rho_{R_0 E_0 })_{|\Omega}} \geq n h - c_1 \sqrt{n} - c_0 \, ,
\end{align*}
where the constants $c_1$ and $c_0$ are given by
\begin{align*}
c_1 &= \sqrt{ \frac{2 \ln(2) V^2}{\eta}\Big(g(\eps) + (2 - \eta) \log(1/\prs{\rho^n}{\Omega}) \Big)}\,,\\
c_0 &= \frac{(2 - \eta) \eta^2 \log(1/\prs{\rho^n}{\Omega}) + \eta^2 g(\eps)}{3 (\ln 2)^2 V^2 (2\eta-1)^3} \, 2^{\frac{1-\eta}{\eta}(2\log d_{A}  + \Max{f} - \MinSigma{f})} \ln^3\left( 2^{2\log d_{A} + \Max{f} - \MinSigma{f}} + e^2 \right)
\end{align*}
with 
\begin{align*}
\eta &= \frac{2 \ln(2)}{1 + 2 \ln(2)}  
\,, \qquad g(\eps) = \log(1 - \sqrt{1-\eps^2}) \,, \qquad
V = \log (2d_A^2+1) + \sqrt{2 + \Var{f}} \,.
\end{align*}
\end{corollary}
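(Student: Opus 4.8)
The corollary is just \cref{thm:with_testing} evaluated at a concrete, $n$-dependent choice of the free parameter $\alpha$, so the whole task is to pick $\alpha \in (1,3/2)$ in the bound \cref{eqn:alpha_to_choose} and simplify. The plan is to first replace the two $\alpha$-dependent prefactors $\frac{\alpha-1}{2-\alpha}$ and $\frac{\alpha}{\alpha-1}$ in \cref{eqn:alpha_to_choose} by cleaner surrogates: fix the constant $\eta = \frac{2\ln 2}{1+2\ln 2} \in (1/2,1)$, restrict to $\alpha \le 2-\eta$ (equivalently $\alpha-1 \le 1-\eta$), and use $\frac{\alpha-1}{2-\alpha} \le \frac{\alpha-1}{\eta}$ and $\frac{\alpha}{\alpha-1} \le \frac{2-\eta}{\alpha-1}$ on that range. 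One also checks, using that $\alpha \mapsto \frac{(2-\alpha)^3}{(3-2\alpha)^3}$ and $\alpha \mapsto 2^{\frac{\alpha-1}{2-\alpha}(2\log d_A + \Max{f} - \MinSigma{f})}$ are both increasing on $(1,3/2)$, that $K'(\alpha) \le K'(2-\eta)$ for $\alpha \in (1,2-\eta]$, where $K'(2-\eta) = \frac{\eta^3}{6(2\eta-1)^3\ln 2}\,2^{\frac{1-\eta}{\eta}(2\log d_A + \Max{f} - \MinSigma{f})}\ln^3\!\big(2^{2\log d_A + \Max{f} - \MinSigma{f}}+e^2\big)$. With these substitutions the right-hand side of \cref{eqn:alpha_to_choose} is lower bounded by
\[
n h \;-\; \Big(n\,(\alpha-1)\,\frac{\ln 2}{2\eta}V^2 \;+\; \frac{g(\eps)+(2-\eta)\log(1/\prs{\rho^n}{\Omega})}{\alpha-1}\Big) \;-\; n\,\frac{(\alpha-1)^2}{\eta^2}\,K'(2-\eta)\,.
\]

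Writing $B = g(\eps) + (2-\eta)\log(1/\prs{\rho^n}{\Omega})$, the next step is to choose $\alpha-1 = s^{\ast} := \sqrt{2\eta B/(n\,\ln 2\,V^2)}$, which minimises the bracketed sum over $\alpha-1 \in (0,1-\eta]$ whenever $s^{\ast} \le 1-\eta$. Substituting $s^{\ast}$, the bracket becomes $c_1\sqrt n$ with $c_1 = \sqrt{\frac{2\ln 2\,V^2}{\eta}\big(g(\eps)+(2-\eta)\log(1/\prs{\rho^n}{\Omega})\big)}$, and the last term becomes $n\,\frac{(s^{\ast})^2}{\eta^2}K'(2-\eta) = \frac{2B}{\eta\,\ln 2\,V^2}\,K'(2-\eta)$, which after inserting the formula for $K'(2-\eta)$ and expanding $\eta^2 B = \eta^2 g(\eps) + (2-\eta)\eta^2\log(1/\prs{\rho^n}{\Omega})$ is precisely the claimed $c_0$. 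This establishes the inequality for every $n$ with $s^{\ast} \le 1-\eta$, i.e.\ for $n \ge n_0 := \frac{2\eta B}{(1-\eta)^2\ln 2\,V^2}$.

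For the finitely many $n < n_0$ the claimed bound should be vacuous: every normalised state satisfies $\hmin^\eps(A^n|E_n) \ge -\log\dim(\mathcal H_{A^n}) \ge -n\log d_A$, so it suffices to verify $n h - c_1\sqrt n - c_0 \le -n\log d_A$ on $\{1,\dots,n_0-1\}$, i.e.\ $n(h+\log d_A) - c_1\sqrt n - c_0 \le 0$ there; if $h+\log d_A \le 0$ this is immediate, and otherwise it is a quadratic-in-$\sqrt n$ inequality. The particular value of $\eta$ is tailored to this last check: it gives $\frac{\eta}{(1-\eta)\ln 2} = 2$, hence $n_0 = \frac{4B}{(1-\eta)V^2}$, and combining this with $h \le \Max{f} \le \log d_A$ (entropy is at most the log of the dimension, replacing $f$ by $\min(f,\log d_A)$ if necessary) and $V \ge \log(2d_A^2+1) \ge 2\log d_A$ makes the quadratic inequality hold throughout $[1,n_0]$, closing the gap. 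The main obstacle is exactly this interplay in the last two steps: one must ensure the surrogate bounds on $\frac{\alpha-1}{2-\alpha}$, $\frac{\alpha}{\alpha-1}$ and $K'(\alpha)$ are simultaneously valid on the relevant range, carry out the one-dimensional optimisation over $\alpha-1$ cleanly, and — most delicately — check that the validity threshold $n_0$ of the analytic choice of $\alpha$ is dominated by the regime where the trivial dimension bound already suffices, which is where the specific constant $\eta = \frac{2\ln 2}{1+2\ln 2}$ is forced. The remaining algebra verifying that the substituted expressions reproduce the stated closed forms for $c_1$ and $c_0$ is routine but somewhat lengthy.
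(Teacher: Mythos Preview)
Your proposal is correct and follows essentially the same approach as the paper: restrict to $\alpha\in(1,2-\eta]$, replace $\frac{\alpha-1}{2-\alpha}$ by $\frac{\alpha-1}{\eta}$ and $\alpha$ by $2-\eta$ in the relevant places, bound $K'(\alpha)$ by its value at the endpoint $\alpha=2-\eta$, optimise the two competing terms in $\alpha-1$, and then handle small $n$ via the trivial dimension bound $\hmin^\eps(A^n|E_n)\ge -n\log d_A$, with the value $\eta=\frac{2\ln 2}{1+2\ln 2}$ chosen precisely so that the validity threshold for the analytic $\alpha$ coincides with the range where the trivial bound already gives the claim. The only cosmetic difference is the order of presentation (the paper disposes of the small-$n$ case first, then checks $\alpha\le 2-\eta$ on the complement, whereas you do the optimisation first and close the small-$n$ gap afterwards); the underlying arithmetic, including the identity $\frac{\eta}{(1-\eta)\ln 2}=2$ that drives the choice of $\eta$, is identical. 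One small caution: your remark ``replacing $f$ by $\min(f,\log d_A)$ if necessary'' to justify $h\le\log d_A$ would break affineness of $f$; the paper simply asserts $h\le\log d_A$ for any $\Omega$ of nonzero probability, and the cleaner justification is that a min-tradeoff function always satisfies $f(q)\le\log d_A$ on achievable distributions, so one may assume this globally without loss of generality when defining $f$.
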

\begin{proof}
We first note that for any $\Omega$ with non-zero probability, $h \leq \log d_A$.
Therefore, if $n \leq \left( \frac{c_1}{2 \log d_A} \right)^2$, it is easy to check that $n h - c_1 \sqrt{n} \leq - n \log d_A$, so the statement of \cref{cor:alpha_chosen} becomes trivial.
We may therefore assume that $n \geq ( \frac{c_1}{2 \log d_A})^2$.

As in the proof of \cref{thm:with_testing}, we define $\beta = \frac{1}{2-\alpha}$.
We will assume that $\alpha \in (1, 2-\eta)$ for $\eta = \frac{2 \ln(2)}{1+ 2 \ln(2)} \approx 0.58$ and later make a choice of $\alpha$ that satisfies this condition.
Then, $\beta - 1 = \frac{1}{2 - \alpha} - 1 \leq \frac{\alpha - 1}{\eta}$ and $\beta \in (1, 1/\eta)$.
Therefore, using $K(\beta)$ as defined in the proof of \cref{thm:with_testing} and noting that in the interval $\beta \in (1, 1/\eta) \subset (1,2)$ this quantity is monotonically increasing in $\beta$, we have 
\begin{align*}
K(\beta) &\leq K \deq \frac{\eta^3}{6 (2 \eta - 1)^3 \ln 2} \, 2^{\frac{1-\eta}{\eta}(2\log d_{A}  + \Max{f} - \MinSigma{f})} \ln^3\left( 2^{2\log d_{A} + \Max{f} - \MinSigma{f}} + e^2 \right) \,,
\end{align*}
Hence, we can simplify the statement of \cref{thm:with_testing} to 
\begin{multline}
\hmin^\eps(A^n | E_n )_{\cM_n \circ \dots \circ \cM_1(\rho_{R_0 E_0 })_{|\Omega}} \\
\geq n \, h - n \, (\alpha - 1) \frac{\ln(2)}{2 \eta} V^2 - \frac{g(\eps) + (2-\eta) \cdot \log(1/\prs{\rho^n}{\Omega})}{\alpha-1} -  n \, (\alpha - 1)^2 \frac{K}{\eta^2}\,. \label{eqn:bound_inter}
\end{multline}
We now choose $\alpha > 1$ as a function of $n$ and $\eps$ so that the terms proportional to $\alpha - 1$ and $\frac{1}{\alpha-1}$ match: 
\begin{align*}
\alpha = 1+ \sqrt{\frac{2 \eta}{n  \ln(2) V^2 }\Big( g(\eps) + (2 - \eta) \log(1/\prs{\rho^n}{\Omega}) \Big)} \,. 
\end{align*}
Inserting this choice of $\alpha$ into \cref{eqn:bound_inter} and combining terms yields the constants in \cref{cor:alpha_chosen}.
The final step is to show that this choice of $\alpha$ indeed satisfies $\alpha \leq 2-\eta$ for $n \geq( \frac{c_1}{2 \log d_A})^2$. For this, we note that for $n \geq( \frac{c_1}{2 \log d_A})^2$
\begin{align*}
\alpha = 1 + \frac{\eta}{\ln(2) V^2} \frac{c_1}{\sqrt{n}} \leq 1 + \frac{2 \eta \log d_A}{\ln(2) V^2} \,.
\end{align*}
We can now use that $V^2 \geq \left( \log(2 d_A^2) \right)^2 \geq 4 \log d_A$ since $d_A \geq 2$, so 
\begin{align*}
\alpha \leq 1 + \frac{2 \eta \log d_A}{\ln(2) V^2} \leq  1 + \frac{\eta}{2 \ln(2)} = 2 - \eta \,,
\end{align*}
where the last inequality holds because $\eta = \frac{2 \ln(2)}{1 + 2 \ln(2)}$. 
\end{proof}

In many applications,  e.g.~randomness expansion or QKD, a round can either be a ``data generation round'' (e.g.~to generate bits of randomness or key) or a ``test round'' (e.g.~to test whether a device used in the protocol behaves as intended).
More formally, in this case the maps $\cM_i \in \cptp(R_{i-1}E_{i-1}, C_i A_i R_i E_i)$ can be written as 
\begin{align}
\cM_i = \gamma \cM^{\textnormal{test}}_{i, R_{i-1}E_{i-1} \to C_i A_i R_i E_i} + (1-\gamma) \cM^{\textnormal{data}}_{i, R_{i-1}E_{i-1} \to A_i R_i E_i} \ot \proj{\bot}_{C_i} \,, \label{eqn:testing_decomp}
\end{align}
where the output of $\cM^{\textnormal{test}}_i$ on system $C_i$ is from some alphabet $\cC'$ that does not include $\bot$, so the alphabet of system $C_i$ is $\cC = \cC' \cup \{\bot\}$.
The parameter $\gamma$ is called the \emph{testing probability}, and for efficient protocols we usually want $\gamma$ to be as small as possible.

For maps of the form in \cref{eqn:testing_decomp}, there is a general way of constructing a min-tradeoff function for the map $\cM_i$ based only on the statistics generated by the map $\cM^{\textnormal{test}}_i$.
This was shown in \cite{dupuis2019entropy} and we reproduce their result (adapted to our notation) here for the reader's convenience.

\begin{lemma}[{\cite[Lemma V.5]{dupuis2019entropy}}]
\label{lem:infrequent_sampling}
Let $\mathcal{M}_i \in \cptp(R_{i-1}E_{i-1}, C_i A_i R_i E_i)$ be channels satisfying the same conditions as in \cref{thm:with_testing} that can furthermore be decomposed as in \cref{eqn:testing_decomp}. 
Suppose that an affine function $g : \mbP(\cC') \to \R$ satisfies for any $q' \in \mbP(\cC')$ and any $i = 1, \dots, n$
\begin{align}
\label{eq:tradeoff_g}
g(q') \leq \min_{\omega \in \states(R_{i-1} E_{i-i} \tilde E_{i-1})} \bigl\{ H(A_i | E_i \tilde E_{i-1})_{\cM_i(\omega)}:  \left(\cM^{\textnormal{test}}_i(\omega)\right)_{C_i} = q' \bigr\} 
\end{align}
where $\tilde E_{i-1} \equiv R_{i-1} E_{i-1}$ is a purifying system.
Then, the affine function $f : \mbP(\cC) \to \R$ defined by
\begin{align*}
f(\delta_x) &=  \Max{g} + \frac{1}{\gamma} (g(\delta_x) - \Max{g}) \;\;\;\;\; \forall x \in \mathcal{C}'\\
f(\delta_{\bot}) &= \Max{g} 
\end{align*}   
is a min-tradeoff function for $\{\mathcal{M}_{i}\}$. Moreover,
\begin{align*}
\Max{f} &= \Max{g} \\
\Min{f} &= \left(1-\frac{1}{\gamma} \right) \Max{g} + \frac{1}{\gamma}\Min{g} \\
\MinSigma{f} &\geq \Min{g} \\
 \Var{f} &\leq \frac{1}{\gamma}\big(\Max{g} - \Min{g} \big)^2.
 \end{align*}
\end{lemma}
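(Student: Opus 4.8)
The plan is to exploit the product structure of the decomposition~\cref{eqn:testing_decomp} to reduce every assertion about the affine function $f$ on $\mbP(\cC)$ to the corresponding assertion about $g$ on $\mbP(\cC')$. The one genuinely substantive step is understanding the marginal on $C_i$; everything after that is bookkeeping with affine functions.

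\textbf{Step 1: structure of the $C_i$-marginal.} Since $\cM^{\textnormal{test}}_i$ only outputs symbols in $\cC'$ and the data branch always outputs $\bot$, for every input $\omega \in \states(R_{i-1}E_{i-1}\tilde E_{i-1})$ we have $\bigl(\cM_i(\omega)\bigr)_{C_i} = \gamma \bigl(\cM^{\textnormal{test}}_i(\omega)\bigr)_{C_i} + (1-\gamma)\delta_{\bot}$. Hence $\Sigma_i(q) \neq \emptyset$ forces $q(\bot) = 1-\gamma$ and $q = \gamma q' + (1-\gamma)\delta_{\bot}$ for the uniquely determined $q' \in \mbP(\cC')$ with $q'(x) = q(x)/\gamma$; moreover $\Sigma_i(q) = \bigl\{\cM_i(\omega) : \bigl(\cM^{\textnormal{test}}_i(\omega)\bigr)_{C_i} = q'\bigr\}$. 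Consequently the minimum over $\nu \in \Sigma_i(q)$ appearing in~\cref{def:tradeoff} is \emph{exactly} the constrained minimum over input states appearing on the right-hand side of~\cref{eq:tradeoff_g}, so the hypothesis on $g$ can be invoked verbatim.

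\textbf{Step 2: the key identity $f(\gamma q' + (1-\gamma)\delta_{\bot}) = g(q')$.} Since $f$ is affine on the simplex, $f(q) = \sum_{c \in \cC} q(c) f(\delta_c)$; substituting $q(x) = \gamma q'(x)$ for $x \in \cC'$ and $q(\bot) = 1-\gamma$, plugging in the definitions of $f(\delta_x)$ and $f(\delta_\bot)$, and using $\sum_{x\in\cC'} q'(x) = 1$ together with the affinity of $g$, all terms proportional to $\Max{g}$ cancel and one is left with $g(q')$. Combining this with Step 1 gives the min-tradeoff property: for $q$ with $\Sigma_i(q) \neq \emptyset$ we get $f(q) = g(q') \leq \min_{\nu \in \Sigma_i(q)} H(A_i | E_i \tilde E_{i-1})_{\nu}$, and for $q$ with $\Sigma_i(q) = \emptyset$ nothing needs checking. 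Since this holds for every $i$, $f$ is a min-tradeoff function for $\{\cM_i\}$.

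\textbf{Step 3: the four quantities.} An affine function on a simplex attains its extrema at vertices, so it suffices to compare the values $f(\delta_x) = \Max{g} + \tfrac1\gamma\bigl(g(\delta_x) - \Max{g}\bigr)$ for $x \in \cC'$ with $f(\delta_\bot) = \Max{g}$. Because $g(\delta_x) \leq \Max{g}$ and $\tfrac1\gamma \geq 1$, these satisfy $f(\delta_x) \leq \Max{g}$ with equality at a maximiser of $g$, giving $\Max{f} = \Max{g}$; the minimum over vertices is attained at a minimiser of $g$ in $\cC'$, giving $\Min{f} = (1-\tfrac1\gamma)\Max{g} + \tfrac1\gamma \Min{g}$. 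For $\MinSigma{f}$, any $q$ with $\Sigma(q) \neq \emptyset$ satisfies $f(q) = g(q') \geq \Min{g}$ by Step 2, whence $\MinSigma{f} \geq \Min{g}$. Finally, $\Var{f}$ is the supremum over such $q$ of the variance of $f(\delta_X)$ with $X$ drawn from $q = \gamma q' + (1-\gamma)\delta_\bot$; since variance is shift-invariant we subtract the constant $\Max{g}$, and the shifted variable vanishes on the atom $\bot$ and equals $\tfrac1\gamma\bigl(g(\delta_X) - \Max{g}\bigr)$ on $\cC'$, so its second moment is $\tfrac1\gamma \E_{q'}\bigl[(g(\delta_X)-\Max{g})^2\bigr] \leq \tfrac1\gamma(\Max{g}-\Min{g})^2$ using $\Min{g} \leq g(\delta_x) \leq \Max{g}$; bounding the variance by this second moment gives $\Var{f} \leq \tfrac1\gamma(\Max{g}-\Min{g})^2$. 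The main obstacle is Step 1: one must check carefully that the constrained minimisation over output states of $\cM_i$ collapses exactly onto the constrained minimisation over input states with prescribed test-marginal, since that identification is what lets the hypothesis on $g$ be applied directly. After that the proof is essentially a computation with affine functions.
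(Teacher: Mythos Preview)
Your proof is correct and follows the standard argument. Note that the paper does not actually give its own proof of this lemma: it is stated as a direct citation of \cite[Lemma V.5]{dupuis2019entropy} and reproduced ``for the reader's convenience'' without proof. Your argument --- identifying the $C_i$-marginal as $\gamma q' + (1-\gamma)\delta_{\bot}$, verifying the affine identity $f(\gamma q' + (1-\gamma)\delta_{\bot}) = g(q')$, and then reading off the four quantities from the vertex values --- is exactly the proof given in the cited reference, so there is nothing to contrast.
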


\section{Sample applications}\label{sec:application_eat}
To demonstrate the utility of our generalised EAT, we provide two sample applications.
Firstly, in \cref{sec:blind_re} we prove security of blind randomness expansion against general attacks.
The notion of blind randomness was defined in~\cite{miller2017randomness} and has potential applications in mistrustful cryptography (see~\cite{miller2017randomness, fu2018local} for a detailed motivation).
Until now, no security proof against general attacks was known.
In particular, the original EAT is not applicable because its model of side information is too restrictive.
With our generalised EAT, we can show that security against general attacks follows straightforwardly from a single-round security statement.

Secondly, in \cref{sec:qkd} we give a simplified security proof for the E91 QKD protocol~\cite{E91}, which was also treated with the original EAT~\cite{eat}.
This example is meant to help those familiar with the original EAT understand the difference between that result and our generalised EAT.
In particular, this application highlights the utility of our more general model of side information: in our proof, the non-signalling condition is satisfied trivially and the advantage over the original EAT stems purely from being able to update the side information register~$E_i$.
We point out that while here we focus on the E91 protocol to allow an easy comparison with the original EAT, our generalised EAT can be used for a large class of QKD protocols for which the original EAT was not applicable at all.
A comprehensive treatment of this is given in~\cite{eat_for_qkd}.

\subsection{Blind randomness expansion} \label{sec:blind_re}
We start by recalling the idea of standard (non-blind) device-independent randomness expansion~\cite{colbeck2009quantum,Colbeck_2011,pironio2010random,vazirani2012certifiable,miller2016robust}.
Alice would like to generate a uniformly random bit string using devices $D_1$ and $D_2$ prepared by an adversary Eve.
To this end, in her local lab (which Eve cannot access) she isolates the devices from one another and plays multiple round of a non-local game with them, e.g.~the CHSH game.
On a subset of the rounds of the game, she checks whether the CHSH condition is satisfied.
If this is the case on a sufficiently high proportion of rounds, she can conclude that the devices' outputs on the remaining rounds must contain a certain amount of entropy, conditioned on the input to the devices and any quantum side information that Eve might have kept from preparing the devices.
Using a quantum-proof randomness extractor, Alice can then produce a uniformly random string.

Blind randomness expansion~\cite{miller2017randomness, fu2018local} is a significant strengthening of the above idea.
Here, Alice only receives one device $D_1$, which she again places in her local lab isolated from the outside world.
Now, Alice plays a non-local game with her device $D_1$ and the adversary Eve:
she samples questions for a non-local game as before, inputs one of the questions to $D_1$, and sends the other question to Eve.
$D_1$ and Eve both provide an output.
Alice then proceeds as in standard randomness expansion, checking whether the winning condition of the non-local game is satisfied on a subset of rounds and concluding that the output of her device $D_1$ must contain a certain amount of entropy conditioned on the adversary's side information.

For the purpose of applying the EAT, the crucial difference between the two notions of randomness expansion is the following:
in standard randomness expansion, the adversary's quantum side information is not acted upon during the protocol, and additional side information (the inputs to the devices, which we also condition on) are generated independently in a round-by-round manner.
This allows a relatively straightforward application of the standard EAT~\cite{arnon2018practical}.
In contrast, in blind randomness expansion, the adversary's quantum side information gets updated in every round of the protocol and is not generated independently in a round-by-round fashion.
This does not fit in the framework of the standard EAT, which requires the side information to be generated round-by-round subject to a Markov condition.
As a result,~\cite{miller2017randomness, fu2018local} were not able to prove a general multi-round blind randomness expansion result.

In the rest of this section, we will show that our generalised EAT is capable of treating multi-round blind randomness expansion, using a protocol similar to \cite[Protocol 3.1]{arnon2019simple}.
A formal description of the protocol is given in \cref{prot:re}.

\begin{figure}[ht!]
\begin{longfbox}[breakable=false, padding=1em, margin-top=1em, margin-bottom=1em]
\begin{protocol} {\bf General blind randomness expansion protocol} \label{prot:re} \end{protocol}
\noindent\underline{Protocol arguments}  \vspace{-0.8ex}
\vspace{-0.6ex}
\begin{center}
\begin{tabularx}{\textwidth}{r c X}
$G$ & : & two-player non-local game, specified by a question set $\cX \times \cY$, a probability distribution $q$ on $\cX \times \cY$, an answer set $\cA \times \cB$, and a winning condition $\omega: \cX \times \cY\times \cA \times \cB \to \bits$ \\
$x^* \in \cX, y^* \in \cY$ & : & inputs used for generation rounds \\
$D$ &:& untrusted device capable of playing one side of $G$ repeatedly \\
$n \in \N$ &:& number of rounds \\
$\gamma \in (0,1]$ &:& expected fraction of test rounds \\
$\wexp$ &:& expected winning probability in $G$ \\
$\delta$ &:& error tolerance
\end{tabularx}
\end{center}

\vspace{0.1ex}

\noindent\underline{Protocol steps}
\vspace{1ex}

For rounds $i = 1, \dots, n$, Alice performs the following steps:
\vspace{-0.6ex}
\begin{enumerate}[label=(\arabic*)]
    \setlength{\itemsep}{0.4ex}
    \setlength{\parskip}{0.4ex}
    \setlength{\parsep}{0.4ex} 
  \item Alice chooses $T_i \in \bits$ with $\pr{T_i = 1} = \gamma$. If $T_i = 1$, Alice chooses $X_i, Y_i \in \cX \times \cY$ according to the question distribution $q$. If $T_i = 0$, Alice chooses $X_i = x^*, Y_i = y^*$. \label{step:input}
  \item Alice inputs $X_i$ into her device $D$ and sends $Y_i$ to Eve. She receives answers $A_i$ and $B_i$, respectively.
  \item If $T_i = 0$, Alice sets $C_i = \bot$. If $T_i = 1$, Alice sets $C_i = \omega(X_i, Y_i, A_i, B_i)$.
  \end{enumerate} 
  \vspace{-1ex}
 At the end of the protocol, Alice aborts if $|\{i \sth C_i = 0\}| > (1 - \wexp + \delta) \cdot \gamma n$.
 \end{longfbox}
 \end{figure}

The following proposition shows a lower bound on on the amount of randomness Alice can extract from this protocol, as specified by the min-entropy.
For this, we assume a lower-bound on the single-round von Neumann entropy.
Such a single-round bound can be found numerically using a generic method as explained after the proof~\cref{lem:blind_re_eat}.

\begin{proposition} \label{lem:blind_re_eat}
Suppose Alice executes \cref{prot:re} with a device $D$ that cannot communicate with Eve.
We denote by $R_i$ and $E'_i$ the (arbitrary) quantum systems of the device $D$ and the adversary Eve after the $i$-th round, respectively.
Eve's full side-information after the $i$-th round is  $E_i \deq T^i X^i Y^i B^i E'_i$.
A single round of the protocol can be described by a quantum channel $\cN_i \in \cptp(R_{i-1} E_{i-1}, C_i A_i R_i E_i)$.
We also define $\cN_i^{\setft{test}}$ to be the same as $\cN_i$, except that $\cN_i^{\setft{test}}$ always picks $T_i = 1$.
Let $\rho_{A^n C^n R_n E_n}$ be the state at the end of the protocol and $\Omega$ the event that Alice does not abort.

Let $g: \mbP(\bits) \to \R$ be an affine function satisfying the conditions
\begin{align}
g(p) \leq \inf_{\omega \in \states(R_{i-1}E_{i-1}\tilde E_{i-1}) : \, \cN_i^{\setft{test}}(\omega)_{C_i} = p} H(A_i|E_i \tilde E_{i-1})_{\cN_i(\omega)} \,, \qquad \Max{g} = g(\delta_1) \,, \label{eqn:g_cond}
\end{align}
where $\tilde E_{i-1} \equiv R_{i-1}E_{i-1}$ is a purifying system.
Then, for any $\eps_a, \eps_s \in (0,1)$, either $\pr{\Omega} \leq \eps_a$ or 
\begin{align*}
\hmin^{\eps_s} (A^n | E_n) \geq n h - c_1 \sqrt{n} - c_0 
\end{align*}
for $c_1, c_0 \geq 0$ independent of $n$ and
\begin{align*}
h &= \min_{p' \in \mbP(\bits): p'(0) \leq 1 - \wexp + \delta} g(p') \,,
\end{align*}
where $\wexp$ is the expected winning probability and $\delta$ the error tolerance from \cref{prot:re}.
If we treat $\eps_s, \eps_a, \dim(A_i), \delta, \Max{g},$ and $\Min{g}$ as constants, then $c_1 = O(1/\sqrt{\gamma})$ and $c_0 = O(1)$.

Furthermore, if there exists a quantum strategy that wins the game $G$ with probability $\wexp$, there is an honest behaviour of $D$ and Eve for which $\pr{\Omega} \geq 1 - \exp(-\frac{\delta^2}{1 - \wexp + \delta} \gamma n)$.
\end{proposition}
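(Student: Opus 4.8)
The plan is to obtain the randomness bound by specialising the generalised EAT with testing to \cref{prot:re}, in the ready-made form of \cref{cor:alpha_chosen}, and to obtain the completeness statement from a Chernoff bound. A single round is described by the channel $\cN_i \in \cptp(R_{i-1}E_{i-1}, C_i A_i R_i E_i)$ acting on the device memory $R_{i-1}$ and on Eve's side information $E_{i-1} = T^{i-1}X^{i-1}Y^{i-1}B^{i-1}E'_{i-1}$: Alice samples $T_i$ and then $X_i,Y_i$, sends $X_i$ to $D$ and $Y_i$ to Eve, obtains $A_i$ (with $D$'s memory updated to $R_i$) and $B_i$ (with Eve's memory updated to $E'_i$), and sets $C_i=\bot$ or $C_i=\omega(X_i,Y_i,A_i,B_i)$. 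To invoke \cref{cor:alpha_chosen} with $\{\cM_i\}=\{\cN_i\}$ I must check four things: (a) the non-signalling condition; (b) the reconstruction condition \cref{eqn:measurement_condition}; (c) that $\cN_i$ has the test/data form \cref{eqn:testing_decomp}, so that \cref{lem:infrequent_sampling} upgrades the assumed single-round bound $g$ to an affine min-tradeoff function $f$; and (d) that the resulting $h$ and second-order constants are as claimed.

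For (a): since $D$ cannot communicate with Eve, tracing out $A_i,R_i,C_i$ from the output of $\cN_i$ leaves $E_i=T^iX^iY^iB^iE'_i$, and this register is a channel applied to $E_{i-1}$ alone --- Alice's fresh local randomness produces $T_i,X_i,Y_i$, and Eve produces $B_i$ and the updated $E'_i$ from $Y_i$ and $E'_{i-1}$ --- with no access to $R_{i-1}$. Hence there is $\cR_i\in\cptp(E_{i-1},E_i)$ with $\setft{Tr}_{A_i R_i C_i}\circ\cN_i=\cR_i\circ\setft{Tr}_{R_{i-1}}$. (Although $X_i$ is fed into $D$, the copy of $X_i$ retained in $E_i$ is produced by Alice independently of $R_{i-1}$; this is exactly where the original EAT's Markov condition fails, since Eve's register is updated every round, whereas the non-signalling condition still holds.) For (b): $C_i$ is a deterministic function of the classical values $T_i,X_i,Y_i,B_i$, all contained in $E_n$, and of the classical answer $A_i$, contained in $A^n$; hence $C^n$ is reconstructed by measuring the relevant classical registers of $A^n$ and $E_n$ in a fixed basis and applying a deterministic function, which is precisely the form \cref{eqn:measurement_condition}.

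For (c)--(d): with probability $\gamma$ the round samples $X_i,Y_i\sim q$ and records $C_i=\omega(\cdot)$, and with probability $1-\gamma$ it uses the fixed inputs $x^*,y^*$ and sets $C_i=\bot$, so $\cN_i$ has the form \cref{eqn:testing_decomp} with testing probability $\gamma$ and test channel $\cN_i^{\setft{test}}$. Since $g$ satisfies the hypothesis \cref{eqn:g_cond} of \cref{lem:infrequent_sampling}, that lemma yields an affine min-tradeoff function $f$ for $\{\cN_i\}$ with $\Max{f}=\Max{g}$, $\MinSigma{f}\ge\Min{g}$, $\Var{f}\le\tfrac1\gamma(\Max{g}-\Min{g})^2$, and --- using $\Max{g}=g(\delta_1)$ --- $f(\delta_1)=f(\delta_\bot)=\Max{g}$ and $f(\delta_0)=\Max{g}+\tfrac1\gamma(g(\delta_0)-\Max{g})$. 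Taking $\Omega$ to be the no-abort event, \cref{cor:alpha_chosen} gives $\hmin^{\eps_s}(A^n|E_n)_{\cN_n\circ\dots\circ\cN_1(\rho)_{|\Omega}}\ge n\min_{c^n\in\Omega}f(\freq{c^n})-c_1\sqrt n-c_0$; by affinity $f(\freq{c^n})=\Max{g}-\tfrac{\freq{c^n}(0)}{\gamma}(\Max{g}-g(\delta_0))$, while $c^n\in\Omega$ forces $\freq{c^n}(0)\le(1-\wexp+\delta)\gamma$, so $\min_{c^n\in\Omega}f(\freq{c^n})\ge\Max{g}-(1-\wexp+\delta)(\Max{g}-g(\delta_0))=\min_{p':\,p'(0)\le1-\wexp+\delta}g(p')=h$. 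For the scaling, $V=\log(2d_A^2+1)+\sqrt{2+\Var{f}}$ with $\sqrt{\Var{f}}=O(1/\sqrt\gamma)$, so $V=O(1/\sqrt\gamma)$; the explicit formulas in \cref{cor:alpha_chosen} are increasing in $\log(1/\pr{\Omega})$, so in the regime $\pr{\Omega}>\eps_a$ they are bounded by $n$-independent constants obtained by replacing $\pr{\Omega}$ with $\eps_a$, giving $c_1=O(V)=O(1/\sqrt\gamma)$ and $c_0=O(1/V^2)\cdot O(1)=O(1)$ when $\eps_s,\eps_a,\dim(A_i),\delta,\Max{g},\Min{g}$ are held fixed. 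Thus either $\pr{\Omega}\le\eps_a$, or $\pr{\Omega}>\eps_a$ and the stated bound holds.

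For completeness, suppose a quantum strategy wins $G$ with probability $\wexp$; let $D$ run Alice's half of this strategy with a fresh shared state each round and let Eve run the other half (keeping a purification she never touches), a legitimate honest behaviour in which $D$ does not communicate with Eve. Then the indicators $\1[C_i=0]$ are i.i.d.\ Bernoulli with mean $\gamma(1-\wexp)$, so $|\{i:C_i=0\}|$ has mean $\gamma(1-\wexp)n$ and exceeds the abort threshold $(1-\wexp+\delta)\gamma n$ with probability at most $\exp(-\tfrac{\delta^2}{1-\wexp+\delta}\gamma n)$ by a Chernoff bound, giving $\pr{\Omega}\ge1-\exp(-\tfrac{\delta^2}{1-\wexp+\delta}\gamma n)$. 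The genuinely delicate point is item (a): because Eve actively participates and updates her register every round, one must argue carefully that the marginal of the updated side information $E_i$ is nevertheless reproducible from $E_{i-1}$ without the hidden device memory $R_{i-1}$ --- this is exactly why the generalised, rather than the original, EAT is needed here, and once it is in place the remaining steps are bookkeeping on top of \cref{cor:alpha_chosen} and \cref{lem:infrequent_sampling}.
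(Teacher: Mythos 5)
Your proposal is correct and follows essentially the same route as the paper's proof: it verifies the non-signalling and reconstruction conditions for the channels $\cN_i$, invokes \cref{lem:infrequent_sampling} with $\Max{g}=g(\delta_1)$ to build the min-tradeoff function $f$, feeds this into \cref{cor:alpha_chosen}, performs the same affine arithmetic to bound $\min_{c^n\in\Omega}f(\freq{c^n})$ by $h$, and closes with a Chernoff bound for the honest strategy. The extra explicitness you provide on the scaling of $c_0,c_1$ and on why the non-signalling condition (rather than the Markov condition) is the crucial point is a useful elaboration, but not a different argument.
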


\begin{remark}
The condition on $g(p)$ in~\cref{eqn:g_cond} is formulated in terms of the entropy 
\begin{align*}
H(A_i|E_i \tilde E_{i-1})_{\cN_i(\omega)} = H(A_i|T^i X^i Y^i B^i E'_i \tilde E_{i-1})_{\cN_i(\omega)}
\end{align*}
with $\tilde E_{i-1} \equiv R_{i-1} E_{i-1}$.
However, the map $\cN_i$ corresponding to the $i$-th round does not act on the systems $T^{i-1} X^{i-1} Y^{i-1} B^{i-1}$.
Therefore, we can view these systems as part of the purifying system.
Since the infimum in~\cref{eqn:g_cond} already includes a purifying $\tilde E_{i-1}$, we can drop these additional systems and without loss of generality choose $\tilde E_{i-1}$ to be isomorphic to those input systems on which $\cN_i$ acts non-trivially, i.e.~$\tilde E_{i-1} \equiv R_{i-1} E'_{i-1}$.
This means that we can replace the upper bound on $g$ in \cref{eqn:g_cond} by the equivalent condition 
\begin{align}
g(p) \leq \inf_{\omega \in \states(R_{i-1}E_{i-1}\tilde E_{i-1}) : \, \cN_i^{\setft{test}}(\omega)_{C_i} = p} H(A_i|B_i X_i Y_i T_i E'_i \tilde E_{i-1})_{\cN_i(\omega)} \label{eqn:g_cond_mod}
\end{align}
with $\tilde E_{i-1} \equiv R_{i-1} E'_{i-1}$.
For the proof of \cref{lem:blind_re_eat} we will use \cref{eqn:g_cond} since it more closely matches the notation of \cref{thm:with_testing}, but intuitively, \cref{eqn:g_cond_mod} is more natural as it only involves quantities related to the $i$-th round of the protocol.
\end{remark}

\begin{proof}[Proof of \cref{lem:blind_re_eat}]
To show the min-entropy lower bound, we will make use of \cref{cor:alpha_chosen}.
For this, we first check that the maps $\cN_i$ satisfy the required conditions.
Since $C_i$ is a deterministic function of the (classical) variables $X_i, Y_i, A_i,$ and $B_i$, it is clear that \cref{eqn:measurement_condition} is satisfied.
For the non-signalling condition, we define the map $\cR_i \in \cptp(E_{i-1}, E_i)$ as follows: $\cR_i$ samples $T_i, X_i$ and $Y_i$ as Alice does in \cref{step:input} of \cref{prot:re}. $\cR$ then performs Eve's actions in the protocol (which only act on $Y_i$ and $E'_{i-1}$, which is part of $E_{i-1}$). 
It is clear that the distribution on $X_i$ and $Y_i$ produced by $\cR_i$ is the same as for $\cN_i$.
By the assumption that $D$ and Eve cannot communicate, the marginal of the output of $\cN_i$ on Eve's side must be independent of the device's system $R_{i-1}$.
Hence, $\setft{Tr}_{A_i R_i C_i} \circ \cN_i = \cR_i \circ \setft{Tr}_{R_{i-1}}$.

To construct a min-tradeoff function, we note that we can split $\cN_i = \gamma \cN_i^{\setft{test}} + (1-\gamma) \cN_i^{\setft{data}}$, with $\cN_i^{\setft{test}}$ always picking $T_i = 1$ and $\cN_i^{\setft{data}}$ always picking $T_i = 0$.
Then, we get from \cref{lem:infrequent_sampling} and the condition $\Max{g} = g(\delta_1)$ that the affine function $f$ defined by 
\begin{align*}
f(\delta_0) = g(\delta_1) + \frac{1}{\gamma}(g(\delta_0) - g(\delta_1)) \,, \qquad f(\delta_1) = f(\delta_\bot) = g(\delta_1)
\end{align*}
is an affine min-tradeoff function for $\{\cN_i\}$.

Viewing the event $\Omega$ as a subset of the range $\bits^n$ of the random variable $C^n$ and comparing with the abort condition in \cref{prot:re}, we see that $c^n \in \Omega$ implies $\freq{c^n}(0) \leq (1 - \wexp + \delta) \gamma$.
Therefore, for $c^n \in \Omega$ and denoting $p = \freq{c^n}$, 
\begin{align*}
f(\freq{c^n}) = p(0) f(\delta_0) + (1 - p(0)) f(\delta_1) = \frac{p(0)}{\gamma} g(\delta_0) + \left( 1- \frac{p(0)}{\gamma} \right) g(\delta_1) \geq h \,,
\end{align*}
where the last inequality holds because $g$ is affine and the distribution $p'(0) = p(0)/\gamma, p'(1) = 1-p(0)/\gamma$ satisfies $p'(0) \leq 1 - \wexp + \delta$.
The proposition now follows directly from \cref{cor:alpha_chosen} and the scaling of $c_1$ and $c_0$ is easily obtained from the expressions in \cref{cor:alpha_chosen}.

To show that an honest strategy succeeds in the protocol with high probability, we define a random variable $F_i$ by $F_i = 1$ if $C_i = 0$, and $F_i = 0$ otherwise.
If $D$ and Eve execute the quantum strategy that wins the game $G$ with probability $\wexp$ in each round, then $\E[F_i] = (1 - \wexp) \gamma$.
Using the abort condition in the protocol, we then find 
\begin{align*}
\pr{\textnormal{abort}} 
&= \pr{\sum_{i=1}^n F_i > (1 - \wexp + \delta)\cdot \gamma n} \\
&= \pr{\sum_{i=1}^n F_i > \left( 1 + \frac{\delta}{1-\wexp} \right) \cdot \E \Big[ \sum_{i=1}^n F_i \Big]} \\
&\leq e^{-\frac{\delta^2}{1 - \wexp + \delta} \gamma n} \,,
\end{align*}
where in the last line we used a Chernoff bound.
\end{proof} 

To make use of \cref{lem:blind_re_eat}, we need to construct a function $g(p)$ that satisfies the condition in \cref{eqn:g_cond}.
For this, we will use the equivalent condition \cref{eqn:g_cond_mod}. A general way of obtaining such a bound automatically is using the recent numerical method~\cite{brown2021computing}.\footnote{The main result of~\cite{miller2017randomness} (Theorem 14) does not appear to be sufficient for this. The reason is that the statement made in~\cite{miller2017randomness} essentially concerns the randomness produced on average over the question distribution $q$ of the game $G$. However, choosing a question at random consumes randomness, so to achieve exponential randomness expansion, in \cref{prot:re} we fix the inputs $x^*, y^*$ used for generation rounds. To the best of our knowledge, the results of~\cite{miller2017randomness} do not give a bound on the randomness produced in the non-local game for any \emph{fixed} inputs $x^*, y^*$. If one could prove an analogous statement to~\cite[Theorem 14]{miller2017randomness} that also certifies randomness on fixed inputs for a large class of games, our~\cref{lem:blind_re_eat} would then imply exponential blind randomness expansion for any such game. Alternatively, one can also assume that public (non-blind) randomness is a free resource and use this to choose the inputs for the non-local game. Then, no special inputs $x^*, y^*$ are needed in \cref{prot:re} to ``save randomness'' and the result of~\cite{miller2017randomness} combined with our generalised EAT implies that such a conversion from public to blind randomness is possible for any complete-support game.}
Specifically, using the assumption that Alice's lab is isolated, the maps $\cN_i$ describing a single round of the protocol take the form described in~\cref{fig:one-round-circuit}. 
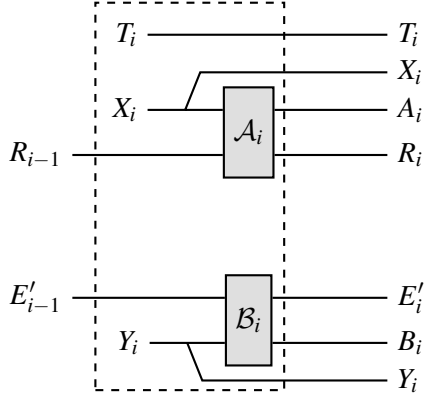
\begin{figure}[htb!]
    \centering
    \begin{tikzpicture}[thick, scale=1]
        \draw
            (0,0) node[porte, minimum height=1.2cm] (a) {$\mathcal{A}_i$}
            (a) ++(0, -2.5) node[porte, minimum height=1.2cm] (b) {$\mathcal{B}_i$}
            ;
        \draw  
            (a.west) ++(0, .3) coordinate (a-in1)
            (a.west) ++(0, -.3) coordinate (a-in2)
            (a.east) ++(0, .3) coordinate (a-out1)
            (a.east) ++(0, -.3) coordinate (a-out2)
            (b.west) ++(0, .3) coordinate (b-in1)
            (b.west) ++(0, -.3) coordinate (b-in2)
            (b.east) ++(0, .3) coordinate (b-out1)
            (b.east) ++(0, -.3) coordinate (b-out2)
            ;
        \draw
            (a.east) ++(1.5, 0) coordinate (rightwall)
            (a-in2) ++(-2, 0) coordinate (leftwall)
            ;
        \draw
            (a-in2 -| leftwall) node[left] {$R_{i-1}$} to (a-in2)
            (b-in1 -| leftwall) node[left] {$E'_{i-1}$} to (b-in1)
            (a-out2) to (a-out2 -| rightwall) node[right] {$R_{i}$}
            (b-out1) to (b-out1 -| rightwall) node[right] {$E'_i$}
            (a-in1) ++(-1,0) node[left] {$X_i$} to (a-in1)
            (b-in2) ++(-1,0) node[left] {$Y_i$} to (b-in2)
            (a-out1) to (a-out1 -| rightwall) node[right] {$A_i$}
            (b-out2) to (b-out2 -| rightwall) node[right] {$B_i$}
            (a-in1) ++(-.5, 0) to ++(.2, .5) coordinate (atmp) to (atmp -| rightwall) node[right] {$X_i$}
            (b-in2) ++(-.5, 0) to ++(.2, -.5) coordinate (btmp) to (btmp -| rightwall) node[right] {$Y_i$}
            (a-in1) ++(-1, 1) node[left] (t) {$T_i$}  to (t -| rightwall) node[right] {$T_i$}
            ;
        \node[draw, dashed, fit=(a) (b) (t) (btmp)] {};
    \end{tikzpicture}
    \vspace{-0.5em}
    \caption{Circuit diagram of $\cN: R_{i-1} E'_{i-1} \rightarrow A_iR_iT_iX_iY_iB_iE'_i$. For every round of the protocol, a circuit of this form is applied, where $\mathcal{A}$ and $\mathcal{B}$ are the (arbitrary) channels applied by Alice's device and Eve, respectively. As in the protocol, $T_i$ is a bit equal to 1 with probability $\gamma$, and $X_i$ and $Y_i$ are generated according to $q$ whenever $T_i=1$, and are fixed to $x^*,y^*$ otherwise. We did not include the register $C_i$ in the figure as it is a deterministic function of $T_iX_iY_iA_iB_i$.}
    \label{fig:one-round-circuit}
\end{figure}

\begin{figure}[!htb]
	\centering
	\vspace{0.5em}
	\begin{tikzpicture}
		
		\begin{axis}[
			width=2.5in,
			height=1.5in,
			scale only axis,
			xmin=0.75,
			xmax=0.85355,
			ymin=0,
			ymax=1.0,
			grid=major,
			xlabel={CHSH winning probability $\omega$},
			ylabel={Bits},
			xtick={0.75, 0.77, 0.79, 0.81, 0.83,0.85},
			ytick={0.0,0.2,0.4,0.6,0.8,1},
			axis background/.style={fill=white}
			]
			\addplot[color=blue, line width=1.5pt,smooth] table[col sep=comma] {blind_chsh.dat};
		\end{axis}
	\end{tikzpicture}
	\vspace{-0.7em}
	\caption{Lower bound on the conditional entropy $H(A_i|B_iX_iY_iT_iE_i')_{\rho_{|T_i=0}}$ for any state generated as in~\cref{fig:one-round-circuit} and such that on test rounds the obtained winning probability for the CHSH game is $\omega$. This lower bound was obtained by using the method from~\cite{brown2021computing}.
	For each input $y \in \cY$, the channel $\cB_{y}$ is modelled as $\cB_y(\omega) = \sum_{b} \Pi_y^{(b)} \omega \Pi_y^{(b)}$, where $\{\Pi_y^{(b)}\}_{b \in \cB}$ are orthogonal projectors summing to the identity, and similarly for the map $\cA$. It is simple to see that this is without loss of generality.}
	\label{fig:randomness_curve}
\end{figure}
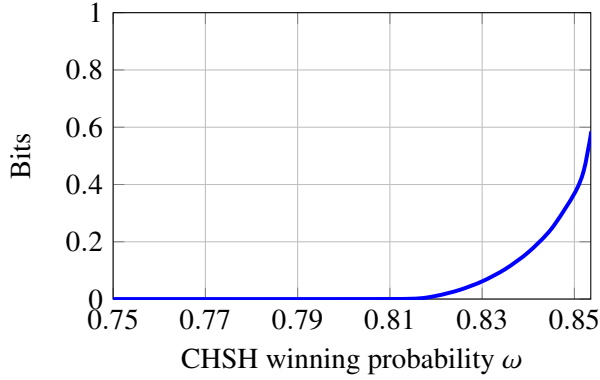 

The method of~\cite{brown2021computing} allows one to obtain lower bounds on the infimum of 
\begin{align*}
H(A_i|B_iX_iY_iT_iE'_i\tilde E_{i-1})_{\cN_i(\omega_{R_{i-1}E'_{i-1}\tilde E_{i-1}})}
\end{align*}
over all input states $\omega_{R_{i-1}E'_{i-1}\tilde E_{i-1}}$ and for any map $\cN_i$ of the form depicted in~\cref{fig:one-round-circuit}. Importantly, for any $\cN_i$ we may also restrict the infimum to states $\omega$ that are consistent with the observed statistics, i.e., $\cN^{\setft{test}}(\omega)_{C_i} = p$ for some distribution $p$ on $C_i$, using the notation of~\cref{lem:blind_re_eat}. 
Using this numerical method for the CHSH game, we obtain the values shown in~\cref{fig:randomness_curve}. 
From this, one can also construct an explicit affine min-tradeoff function $g(p)$ in an automatic way using the same method as in~\cite{brown2019framework}. 
As our focus is on illustrating the use of the generalised EAT, not the single-round bound, we do not carry out these steps in detail here.

Combining this single-round bound and \cref{lem:blind_re_eat}, one obtains that for~\cref{prot:re} instantiated with the CHSH game, $\wexp$ sufficiently close to the maximal winning probability of \smash{$\frac{1}{2} + \frac{1}{2\sqrt{2}}$}, and $\gamma = \Theta(\frac{\log n}{n})$, one can extract $\Omega(n)$ bits of uniform randomness from $A_1 \dots A_n$ while using only $\setft{polylog}(n)$ bits of randomness to run the protocol. 
In other words, \cref{prot:re} achieves exponential blind randomness expansion with the CHSH game.

\subsection{E91 quantum key distribution protocol} \label{sec:qkd}
The E91 protocol is one of the simplest entanglement-based QKD protocols~\cite{E91}.
This protocol was already treated using the original EAT in~\cite{eat}.
Here, we do not give a formal security definition and proof, only an informal comparison of how the original EAT and our generalised EAT can be applied to this problem; the remainder of the security proof is then exactly as in~\cite{eat}.
For a detailed treatment of the application of our generalised EAT to QKD, see~\cite{eat_for_qkd}.
To facilitate the comparison with~\cite{eat}, in this section we label systems the same as in~\cite{eat} even though this differs from the system labels used earlier in this paper.
The protocol we are considering is described explicitly in~\cref{prot:qkd}.
It is the same as in~\cite{eat} except for minor modifications to simplify the notation.

\begin{figure}[ht!]
\begin{longfbox}[breakable=true, padding=1em, margin-top=1em, margin-bottom=1em]
\begin{protocol} {\bf E91 quantum key distribution protocol} \label{prot:qkd} \end{protocol}
\noindent\underline{Protocol arguments}  \vspace{-0.8ex}
\vspace{-0.6ex}
\begin{center}
\begin{tabularx}{\textwidth}{r c X}
$n \in \mathbb{N}$ & : & number of uses of qubit channel \\
   $\mu \in (0,1)$ & : & probability for measurements in diagonal basis \\
  $e \in (0,\frac{1}{2})$ & : & maximum tolerated phase error ratio  \\
  $\vartheta_{\setft{EC}} \in [0,1]$ & : & relative communication cost of error correction scheme $\setft{EC}$ \\
  $r \in [0,1]$ & : & key rate  
\end{tabularx}
\end{center}

\vspace{0.1ex}

\noindent\underline{Protocol steps}

\vspace{-0.6ex}
\begin{enumerate}[label=(\arabic*)]
    \setlength{\itemsep}{0.4ex}
    \setlength{\parskip}{0.4ex}
    \setlength{\parsep}{0.4ex} 
  \item \emph{Distribution:} For $i \in \{1, \ldots, n\}$,  Alice prepares a pair $(Q_i, \bar{Q}_i)$ of entangled qubits and sends $\bar{Q}_i$ to Bob. Alice generates a random bit $B_i$ such that $P_{B_i}(1) = \mu$ and, depending on whether $B_i=0$ or $B_i = 1$, measures $Q_i$ in either the computational or  the diagonal basis, storing the outcome as $A_i$. In the same way, Bob measures $\bar{Q}_i$ in a basis determined by a random bit $\bar{B}_i$, storing the outcome as $\bar{A}_i$. 
  
  \item \emph{Sifting and information reconciliation:} Alice and Bob announce $B_i$ and $\bar{B}_i$.
  On indices $i$ where $B_i \neq \bar{B}_i$, they set $A_i = \bar A_i = \perp$.  They invoke a reliable\protect\footnotemark~error correction scheme $\setft{EC}$, allowing Bob to compute a guess $\hat{A}^n$ for Alice's string $A^n$.  If $\setft{EC}$ does not output a guess then the protocol is aborted. 
  
  \item \emph{Parameter estimation:}  Bob counts the number of indices $i \in S$  for which $\bar{B}_i = 1$ and $\bar{A}_i \neq \hat{A}_i$. If this number is larger than $e \mu^2 n$ then the protocol is aborted.

  \item \emph{Privacy amplification:} see~\cite{eat} for details.
  \end{enumerate} 
  \vspace{-1.8ex}
 \end{longfbox}
 \end{figure}

We consider the systems $B_i, \bar B_i, A_i, \bar A_i, Q_i, \bar Q_i$ as in~\cref{prot:qkd} and additionally define the system $X_i$ storing the statistical information used in the parameter estimation step:
\begin{align*}
X_i = \begin{cases}
A_i \oplus \bar A_i & \textnormal{if $B_i = \bar B_i = 1$,}\\
\bot & \textnormal{otherwise.}
\end{cases}
\end{align*}
Denoting by $E$ the side information gathered by Eve during the distribution step, we can follow the same steps as for~\cite[Equation (57)]{eat} to show that the security of~\cref{prot:qkd} follows from a lower bound on
\begin{align}
\hmin^\eps(A^n | B^n \bar B^n E)_{\rho_{|\Omega}} \,. \label{eqn:qkd_entropy_cond}
\end{align}
Here, $\rho_{|\Omega}$ is the state at the end of the protocol conditioned on acceptance.

We first sketch how the original EAT (whose setup was described in~\cref{sec:intro}) is applied to this problem in~\cite{eat}.
One cannot bound $\hmin^\eps(A^n | B^n \bar B^n E)_{\rho_{|\Omega}}$ directly using the EAT because a condition similar to~\cref{eqn:measurement_condition} has to be satisfied.
Therefore, one modifies the systems $\bar A_i$ from~\cref{prot:qkd} by setting $\bar A_i = \bot$ if $B_i = \bar B_i = 0$ and then applies the EAT to find a lower bound on
\begin{align}
\hmin^\eps(A^n \bar A^n | B^n \bar B^n E)_{\rho_{|\Omega}} \,. \label{eqn:barA}
\end{align}
For this, a round of~\cref{prot:qkd} is viewed as a map $\cM_i: Q_i^n \bar Q_i^n \to Q_{i+1}^n \bar Q_{i+1}^n A_i \bar A_i B_i \bar B_i X_i$, which chooses $B_i \bar B_i$ as in~\cref{prot:qkd}, applies Alice and Bob's (trusted) measurements on systems $Q_i \bar Q_i$ to generate $A_i \bar A_i$, and generates $X_i$ as described before.
To apply the EAT, $R_{i-1} \deq Q_i^n \bar Q_i^n$ takes the role of the ``hidden sytem'', and $A_i \bar A_i$ and $B_i \bar B_i$ are the output and side information of the $i$-th round, respectively.
It is easy to see that with this choice of systems, the Markov condition of the EAT is satisfied, so, using a min-tradeoff function derived from an entropic uncertainty relation~\cite{berta2010uncertainty}, one can find a lower bound on~\cref{eqn:barA}.

However, adding the system $\bar A_i$ in this manner has the following disadvantage: to relate the lower bound on $\hmin^\eps(A^n \bar A^n | B^n \bar B^n E)_{\rho_{|\Omega}}$ to the desired lower bound on $\hmin^\eps(A^n | B^n \bar B^n E)_{\rho_{|\Omega}}$ one needs to use a chain rule for min-entropies, incurring a penalty term of the form $\hmax^\eps(\bar A^n | A^n B^n \bar B^n E)_{\rho_{|\Omega}}$.
This penalty term is relatively easy to bound for the case of the E91 protocol, but can cause problems in general.
 \footnotetext{An error correction scheme is reliable if, except with negligible probability, either Bob's guess of Alice's string is correct or the protocol aborts.}

We now turn our attention to proving~\cref{eqn:qkd_entropy_cond} using our generalised EAT.
For this, we first observe that 
\begin{align*}
\hmin^\eps(A^n | B^n \bar B^n E)_{\rho_{|\Omega}} \geq \hmin^\eps(A^n | B^n \bar B^n X^n E)_{\rho_{|\Omega}} \,,
\end{align*}
so it suffices to find a lower bound on the r.h.s.
This step is similar to adding the $\bar A_i$ systems in~\cref{eqn:barA} in that its purpose is to satisfy~\cref{eqn:measurement_condition}.
However, it has the advantage that here, $X^n$ can be added to the \emph{conditioning} system and therefore lowers the entropy, not raises it like going from~\cref{eqn:qkd_entropy_cond} to~\cref{eqn:barA}.
The same step is not possible in the original EAT due to the restrictive Markov condition.

Using the same system names as before, we define $E_i \deq Q_{i+1}^n \bar Q_{i+1}^n B^i \bar B^i X^i E$.\footnote{In~\cref{prot:qkd}, instead of Alice distributing the systems $Q_i \bar Q_i$ and Eve gathering side information $E$ by intercepting $\bar Q_i$, we can equivalently imagine that Eve first prepares a state \smash{$\rho^0_{Q^n \bar Q^n E}$} and distributes $Q_i \bar Q_i$ to Alice and Bob in each round.
Then, the choice of $E_i$ intuitively captures the side information available to Eve from the first $i$ rounds: Eve still possess the systems \smash{$Q_{i+1}^n \bar Q_{i+1}^n$} to be distributed in future rounds, has gathered classical information $B^i \bar B^i X^i$, and keeps the static side information $E$ from preparing the initial state.}
Then, analogously to the original EAT, we can describe a single round of~\cref{prot:qkd} by a map $\cM_i: E_{i-1} \to A_i E_{i} X_i$. (Compared to the map $\cM_i$ we described above for the original EAT, we have traced out $\bar A_i$, added a copy of $X_i$, and added identity maps on the other additional systems in $E_{i-1}$.)
Denoting by \smash{$\rho^0_{Q^n \bar Q^n E}$} the joint state of Alice and Bob's systems $Q^n \bar Q^n$ before measurement and the information $E$ that Eve gathered during the distribution step, the state at the end of the protocol is $\rho = \cM_n \circ \dots \circ \cM_1(\rho^0)$.
To apply~\cref{cor:alpha_chosen} to find a lower bound on 
\begin{align*}
\hmin^\eps(A^n | E_n)_{\cM_n \circ \dots \circ \cM_1(\rho^0)_{|\Omega}} \,,
\end{align*}
we first observe that the condition in~\cref{eqn:measurement_condition} is satisfied because the system $X^n$ is part of $E_n$, and the non-signalling condition is trivially satisfied because there is no $R_i$-system.
A min-tradeoff function can be constructed in exactly the same way as in~\cite[Claim 5.2]{eat} by noting that all systems in $E_i$ on which $\cM_i$ does not act can be viewed as part of the purifying system.

This comparison highlights the advantage of the more general model of side information in our gener\-alised EAT: for the original EAT, one has to add the systems $\bar A_i$ to the ``non-conditioning side'' of the min-entropy in order to be able to satisfy the Markov condition.
In our case, the non-signalling condition, the analogue of the Markov condition, is trivially satisfied because we need no $R_i$-system. 
This is because we can add the quantum systems $Q^n \bar Q^n$ to the side information register $E_0$ at the start and then, since we allow side information to be updated and Alice and Bob act on $Q_i \bar Q_i$ using trusted measurement devices, we can remove the systems $Q_i \bar Q_i$ one by one during the rounds of the protocol.

\bibliographystyle{alpha}
\bibliography{main}

\newcommand{\etalchar}[1]{$^{#1}$}
\begin{thebibliography}{dRHRW16}

\bibitem[AFDF{\etalchar{+}}18]{arnon2018practical}
Rotem Arnon-Friedman, Fr{\'e}d{\'e}ric Dupuis, Omar Fawzi, Renato Renner, and
  Thomas Vidick.
\newblock Practical device-independent quantum cryptography via entropy
  accumulation.
\newblock {\em Nature communications}, 9(1):459, 2018.

\bibitem[AFRV19]{arnon2019simple}
Rotem Arnon-Friedman, Renato Renner, and Thomas Vidick.
\newblock Simple and tight device-independent security proofs.
\newblock {\em SIAM Journal on Computing}, 48(1):181--225, 2019.

\bibitem[AP21]{Akers2021}
Chris Akers and Geoff Penington.
\newblock Leading order corrections to the quantum extremal surface
  prescription.
\newblock {\em Journal of High Energy Physics}, 2021(4), April 2021.

\bibitem[BCC{\etalchar{+}}10]{berta2010uncertainty}
Mario Berta, Matthias Christandl, Roger Colbeck, Joseph~M Renes, and Renato
  Renner.
\newblock The uncertainty principle in the presence of quantum memory.
\newblock {\em Nature Physics}, 6(9):659--662, 2010.

\bibitem[BFF21]{brown2021computing}
Peter Brown, Hamza Fawzi, and Omar Fawzi.
\newblock Computing conditional entropies for quantum correlations.
\newblock {\em Nature communications}, 12(1):1--12, 2021.

\bibitem[BI20]{broadbent2020quantum}
Anne Broadbent and Rabib Islam.
\newblock Quantum encryption with certified deletion.
\newblock In {\em Theory of Cryptography Conference}, pages 92--122. Springer,
  2020.

\bibitem[BMP18]{BMP18}
C{\'{e}}dric Bamps, Serge Massar, and Stefano Pironio.
\newblock Device-independent randomness generation with sublinear shared
  quantum resources.
\newblock {\em {Quantum}}, 2:86, August 2018.

\bibitem[BRC19]{brown2019framework}
Peter Brown, Sammy Ragy, and Roger Colbeck.
\newblock A framework for quantum-secure device-independent randomness
  expansion.
\newblock {\em IEEE Transactions on Information Theory}, 66(5):2964--2987,
  2019.

\bibitem[CK11]{Colbeck_2011}
Roger Colbeck and Adrian Kent.
\newblock Private randomness expansion with untrusted devices.
\newblock {\em Journal of Physics A: Mathematical and Theoretical},
  44(9):095305, feb 2011.

\bibitem[CKR09]{christandl2009postselection}
Matthias Christandl, Robert K{\"o}nig, and Renato Renner.
\newblock Postselection technique for quantum channels with applications to
  quantum cryptography.
\newblock {\em Physical review letters}, 102(2):020504, 2009.

\bibitem[Col06]{colbeck2009quantum}
Roger Colbeck.
\newblock {\em Quantum and relativistic protocols for secure multi-party
  computation}.
\newblock PhD Thesis, University of Cambridge, 2006.

\bibitem[CSW14]{chung2014physical}
Kai-Min Chung, Yaoyun Shi, and Xiaodi Wu.
\newblock Physical randomness extractors: generating random numbers with
  minimal assumptions.
\newblock {\em arXiv preprint arXiv:1402.4797}, 2014.

\bibitem[CV21]{Campbell2021}
Steve Campbell and Bassano Vacchini.
\newblock Collision models in open system dynamics: A versatile tool for deeper
  insights?
\newblock {\em Europhysics Letters}, 133(6):60001, March 2021.

\bibitem[CY14]{coudron_infinite}
Matthew Coudron and Henry Yuen.
\newblock Infinite randomness expansion with a constant number of devices.
\newblock In {\em Proceedings of the Forty-Sixth Annual ACM Symposium on Theory
  of Computing}, STOC '14, page 427–436, New York, NY, USA, 2014. Association
  for Computing Machinery.

\bibitem[DF19]{dupuis2019entropy}
Fr{\'e}d{\'e}ric Dupuis and Omar Fawzi.
\newblock Entropy accumulation with improved second-order term.
\newblock {\em IEEE Transactions on information theory}, 65(11):7596--7612,
  2019.

\bibitem[DFR20]{eat}
Frederic Dupuis, Omar Fawzi, and Renato Renner.
\newblock Entropy accumulation.
\newblock {\em Communications in Mathematical Physics}, 379(3):867--913, 2020.

\bibitem[dRHRW16]{delRio2016}
L{\'{\i}}dia del Rio, Adrian Hutter, Renato Renner, and Stephanie Wehner.
\newblock Relative thermalization.
\newblock {\em Physical Review E}, 94(2), August 2016.

\bibitem[Eke91]{E91}
Artur~K. Ekert.
\newblock Quantum cryptography based on {B}ell's theorem.
\newblock {\em Phys. Rev. Lett.}, 67:661--663, Aug 1991.

\bibitem[FF18]{fawzi2018efficient}
Hamza Fawzi and Omar Fawzi.
\newblock Efficient optimization of the quantum relative entropy.
\newblock {\em Journal of Physics A: Mathematical and Theoretical},
  51(15):154003, 2018.

\bibitem[FF21]{sharp_divergence}
Hamza Fawzi and Omar Fawzi.
\newblock Defining quantum divergences via convex optimization.
\newblock {\em Quantum}, 5:387, 2021.

\bibitem[FFRS20]{FFRS20}
Kun Fang, Omar Fawzi, Renato Renner, and David Sutter.
\newblock Chain rule for the quantum relative entropy.
\newblock {\em Phys. Rev. Lett.}, 124:100501, Mar 2020.

\bibitem[FM18]{fu2018local}
Honghao Fu and Carl~A Miller.
\newblock Local randomness: Examples and application.
\newblock {\em Physical Review A}, 97(3):032324, 2018.

\bibitem[FRT13]{frauchiger2013true}
Daniela Frauchiger, Renato Renner, and Matthias Troyer.
\newblock True randomness from realistic quantum devices.
\newblock {\em arXiv preprint arXiv:1311.4547}, 2013.

\bibitem[Har05]{harrow_thesis}
Aram~W Harrow.
\newblock Applications of coherent classical communication and the {S}chur
  transform to quantum information theory.
\newblock {\em arXiv preprint quant-ph/0512255}, 2005.

\bibitem[Hay17]{Hayashi_2017}
Masahito Hayashi.
\newblock {\em Quantum Information Theory}.
\newblock Springer Berlin Heidelberg, 2017.

\bibitem[JK22]{jain2022direct}
Rahul Jain and Srijita Kundu.
\newblock A direct product theorem for quantum communication complexity with
  applications to device-independent qkd.
\newblock In {\em 2021 IEEE 62nd Annual Symposium on Foundations of Computer
  Science (FOCS)}, pages 1285--1295. IEEE, 2022.

\bibitem[KT20]{kundu2020composably}
Srijita Kundu and Ernest Tan.
\newblock Composably secure device-independent encryption with certified
  deletion.
\newblock {\em arXiv preprint arXiv:2011.12704}, 2020.

\bibitem[KW16]{kaniewski2016device}
Jedrzej Kaniewski and Stephanie Wehner.
\newblock Device-independent two-party cryptography secure against sequential
  attacks.
\newblock {\em New Journal of Physics}, 18(5):055004, 2016.

\bibitem[KZB20]{qpe2}
Emanuel Knill, Yanbao Zhang, and Peter Bierhorst.
\newblock Generation of quantum randomness by probability estimation with
  classical side information.
\newblock {\em Phys. Rev. Research}, 2:033465, Sep 2020.

\bibitem[LKDW18]{leditzky2018approaches}
Felix Leditzky, Eneet Kaur, Nilanjana Datta, and Mark~M Wilde.
\newblock Approaches for approximate additivity of the {H}olevo information of
  quantum channels.
\newblock {\em Physical Review A}, 97(1):012332, 2018.

\bibitem[LLR{\etalchar{+}}21]{liu2021device}
Wen-Zhao Liu, Ming-Han Li, Sammy Ragy, Si-Ran Zhao, Bing Bai, Yang Liu, Peter~J
  Brown, Jun Zhang, Roger Colbeck, Jingyun Fan, et~al.
\newblock Device-independent randomness expansion against quantum side
  information.
\newblock {\em Nature Physics}, 17(4):448--451, 2021.

\bibitem[MLDS{\etalchar{+}}13]{MLDSFT13}
Martin M\"{u}ller-Lennert, Fr\'ed\'eric Dupuis, Oleg Szehr, Serge Fehr, and
  Marco Tomamichel.
\newblock On quantum {R}\'enyi entropies: A new generalization and some
  properties.
\newblock {\em Journal of Mathematical Physics}, 54(12), 2013.

\bibitem[MR22]{eat_for_qkd}
Tony Metger and Renato Renner.
\newblock Security of quantum key distribution from generalised entropy
  accumulation.
\newblock {\em arXiv preprint arXiv:2203.04993}, 2022.

\bibitem[MS16]{miller2016robust}
Carl~A Miller and Yaoyun Shi.
\newblock Robust protocols for securely expanding randomness and distributing
  keys using untrusted quantum devices.
\newblock {\em Journal of the ACM (JACM)}, 63(4):1--63, 2016.

\bibitem[MS17]{miller2017randomness}
Carl~A Miller and Yaoyun Shi.
\newblock Randomness in nonlocal games between mistrustful players.
\newblock {\em Quantum information \& computation}, 17(7):595, 2017.

\bibitem[PAM{\etalchar{+}}10]{pironio2010random}
Stefano Pironio, Antonio Ac{\'\i}n, Serge Massar, A~Boyer de~La~Giroday,
  Dzmitry~N Matsukevich, Peter Maunz, Steven Olmschenk, David Hayes, Le~Luo,
  T~Andrew Manning, et~al.
\newblock Random numbers certified by bell’s theorem.
\newblock {\em Nature}, 464(7291):1021--1024, 2010.

\bibitem[Pet86]{Pet86}
D\'enes Petz.
\newblock Sufficient subalgebras and the relative entropy of states of a von
  {N}eumann algebra.
\newblock {\em Communications in Mathematical Physics}, 105(1):123--131, 1986.

\bibitem[Ren08]{renner_thesis}
Renato Renner.
\newblock Security of quantum key distribution.
\newblock {\em International Journal of Quantum Information}, 6(01):1--127,
  2008.

\bibitem[Sut18]{Sutter_book}
David Sutter.
\newblock {\em Approximate Quantum Markov Chains}.
\newblock Springer, 2018.

\bibitem[TCR09]{tomamichel09}
M.~{Tomamichel}, R.~{Colbeck}, and R.~{Renner}.
\newblock A fully quantum asymptotic equipartition property.
\newblock {\em IEEE Transactions on Information Theory}, 55(12):5840--5847, Dec
  2009.

\bibitem[Tom15]{tomamichel2015quantum}
Marco Tomamichel.
\newblock {\em Quantum information processing with finite resources:
  mathematical foundations}, volume~5.
\newblock Springer, 2015.

\bibitem[Uhl76]{Uhlmann76}
A.~Uhlmann.
\newblock The ``transition probability" in the state space of a $*$-algebra.
\newblock {\em Reports on Mathematical Physics}, 9(2):273--279, 1976.

\bibitem[VV12]{vazirani2012certifiable}
Umesh Vazirani and Thomas Vidick.
\newblock Certifiable quantum dice: or, true random number generation secure
  against quantum adversaries.
\newblock In {\em Proceedings of the forty-fourth annual ACM symposium on
  Theory of computing}, pages 61--76, 2012.

\bibitem[Wat18]{watrous2018theory}
John Watrous.
\newblock {\em The theory of quantum information}.
\newblock Cambridge university press, 2018.

\bibitem[WWY14]{WWY14}
Mark~M. Wilde, Andreas Winter, and Dong Yang.
\newblock Strong converse for the classical capacity of entanglement-breaking
  and {H}adamard channels via a sandwiched {R}\'enyi relative entropy.
\newblock {\em Communications in Mathematical Physics}, 331(2):593--622, 2014.

\bibitem[ZFK20]{qpe1}
Yanbao Zhang, Honghao Fu, and Emanuel Knill.
\newblock Efficient randomness certification by quantum probability estimation.
\newblock {\em Phys. Rev. Research}, 2:013016, Jan 2020.

\end{thebibliography}

\appendix

\section{Dual statement for smooth max-entropy} \label{app:hmax}
In the main text we have focused on deriving a lower bound on the smooth min-entropy.
Here, we show that this also implies an upper bound on the smooth max-entropy by applying a simple duality relation between min-~and max-entropy.
A similar upper bound was also derived in~\cite{eat}.
However, that bound is subject to a Markov condition and cannot be derived by a simple duality argument since the ``dual version'' of the Markov condition is unwieldy.
We show that the bound from~\cite{eat} follows as a special case of our more general bound even without any Markov conditions or other non-signalling constraints.
For simplicity, we restrict ourselves to an asymptotic statement without ``testing'', i.e.~we derive an $\hmax^\eps$-version of \cref{thm:improved_eat}.
By applying the same duality relation  to the more involved statement in \cref{thm:with_testing}, one can also obtain an $\hmax^\eps$-bound with explicit constants and testing. 

Recall that for $\rho_{AB} \in \states(AB)$ and $\eps \in [0,1]$, the $\eps$-smoothed max-entropy of $A$ conditioned on $B$ is defined as
\begin{align*}
    H_{\max}^\eps(A|B)_{\rho} =  \log \inf_{\tilde \rho_{AB} \in \cB_{\eps}(\rho_{AB})} \sup_{\sigma_{B} \in \states(B)} \norm{\tilde \rho_{AB}^{\frac{1}{2}} \sigma_B^{\frac{1}{2}} }_{1}^2 \, ,
\end{align*}
where $\norm{\cdot}_{1}$ denotes the trace norm and $\cB_{\eps}(\rho_{AB})$ is the $\eps$-ball around $\rho_{AB}$ in terms of the purified distance~\cite{tomamichel2015quantum}.
The smooth min-~and max-entropy satisfy the following duality relation~\cite[Proposition 6.2]{tomamichel2015quantum}: for a pure quantum state $\psi_{ABC}$, 
\begin{align*}
\hmin^\eps(A|B)_{\psi} = - \hmax^\eps(A|C)_{\psi} \,.
\end{align*}

For the setting of \cref{thm:improved_eat}, let $V_i: R_{i-1} E_{i-1} \to A_i R_i E_i F_i$ be the Stinespring dilation of the map $\cM_i$, and let $\ket{\rho^0}_{R_0 E_0 F_0}$ be a purification of the input state $\rho^0_{R_0 E_0}$.
Then, $V_n \cdots V_1 \ket{\rho^0}$ is a purification of \smash{$\cM_n \circ \dots \circ \cM_1(\rho^0)$}, so by the duality of the smooth min-~and max-entropy, 
\begin{align*}
\hmin^\eps(A^n | E_n )_{\cM_n \circ \dots \circ \cM_1(\rho^0)} = - \hmax^\eps(A^n | F^n R_n)_{V_n \cdots V_1 \ket{\rho^0}} \,.
\end{align*}
Furthermore, by concavity of the conditional entropy the infimum in \cref{thm:improved_eat} can be restricted to pure states $\ket{\omega}_{R_{i-1} E_{i-1} \tilde E_{i-1}}$, so $V_i \ket{\omega}$ is a purification of $\cM_i(\omega)$.
Then, by the duality relation for von Neumann entropies, 
\begin{align*}
H(A_i|E_i \tilde E_{i-1} )_{\cM_i(\omega)} = - H(A_i|R_i F_i)_{V_i \ket{\omega}} \,.
\end{align*}
Therefore, we obtain the following dual statement to \cref{thm:improved_eat}:
\begin{align}
\hmax^\eps(A^n | F^n R_n)_{V_n \cdots V_1 \ket{\rho^0}} \leq \sum_{i = 1}^n \max_{\ket{\omega}} H(A_i|R_i F_i)_{V_i \ket{\omega}} + O(\sqrt{n}) \,, \label{eqn:hmax_eat}
\end{align}
where the maximisation is over pure states on $R_{i-1} E_{i-1} \tilde E_{i-1}$.
This holds for any sequence of isometries $V_i$ for which the maps $\cM_{V_i}: R_{i-1} E_{i-1} \to A_i R_i E_i$ given by $\cM_{V_i}(\rho) = \ptr{F_i}{V_i \rho V_i^\dagger }$ satisfy the non-signalling condition of \cref{thm:improved_eat}: for each $i$, there must exist a map $\cR_i \in \cptp(E_{i-1} , E_i )$ such that $\setft{Tr}_{A_i R_i} \circ \cM_{V_i} = \cR_i \circ \setft{Tr}_{R_{i-1}}$.

To gain some intuition for the above statement, consider a setting where an information source  generates systems $A_1, \dots, A_n$ and $F_1, \dots, F_n$ by applying isometries $V_i: S_{i-1} \to A_i F_i S_i$ to some pure intial state $\ket{\rho^0}_{S_0}$.
We might be interested in compressing the information in $A^n$ in such a way that given $F^n$, one can reconstruct $A^n$ except with some small failure probability $\eps$.
Then, the amount of storage needed for the compressed information is given by $\hmax^\eps(A^n|F^n)$.
To apply \cref{eqn:hmax_eat}, for $i < n$ we split the systems $S_i$ into $R_i E_i$ in such a way that the channel $\cM_{V_i}$ defined above satisfies the non-signalling condition, and set $E_n = S_n$ (so that $R_n$ is trivial).
Then \cref{eqn:hmax_eat} gives an upper bound on $\hmax^\eps(A^n|F^n)$.
Note that this bound depends on how we split the systems $S_i = R_i E_i$:
the non-signalling condition can always be trivially satisfied by choosing $R_i$ to be trivial, but \cref{eqn:hmax_eat} tells us that if we can describe the source in such a way that $E_i$ is relatively small and $R_i$ is relatively large while still satisfying the non-signalling condition, we obtain a tighter bound on the amount of required storage.

From \cref{eqn:hmax_eat} we can also recover the max-entropy version of the original EAT, but without requiring a Markov condition.
To facilitate the comparison with~\cite{eat}, we first re-state their theorem with their choice of system labels, but add a bar to every system label to avoid confusion with our notation from before.
The max-entropy statement in~\cite{eat} considers a sequence of channels $\bar \cM_i: \bar R_{i-1} \to \bar A_i \bar B_i \bar R_i$ and asserts that under a Markov condition, for any initial state $\rho_{\bar R_0 \bar E}$ with a purifying system $\bar E \equiv \bar R_0$:
\begin{align}
\hmax^\eps(\bar A^n | \bar B^n \bar E)_{\bar \cM_n \circ \cdots \circ \bar \cM_1(\rho_{\bar R_0 \bar E})} \leq \sum_{i=1}^n \max_{\omega \in \states(\bar R_{i-1} \bar R)} H(\bar A_i | \bar B_i \bar R)_{\bar \cM_i(\omega)} + O(\sqrt{n}) \,, \label{eqn:hmax_eat_old}
\end{align}
where $\bar R \equiv \bar R_{i-1}$.
We want to recover this statement from \cref{eqn:hmax_eat} \emph{without any Markov condition}.
For this, we consider the Stinepring dilations $\bar V_i: \bar R_{i-1} \to \bar R_i \bar A_i \bar B_i \bar F_i$ of $\bar \cM_i$.
We make the following choice of systems: 
\begin{align*}
R_i = \bar B^i \bar E \,, \quad A_i = \bar A_i \,, \quad E_i = \bar R_i \bar F_i \,,
\end{align*}
and choose $F_i$ to be trivial.
By tensoring with the identity, we can then extend $\bar V_i$ to an isometry $V_i: R_{i-1} E_{i-1} \to A_i R_i E_i$.
Then, the maps $\cM_{V_i}$ satisfy the non-signalling condition since $V_i$ acts as identity on $R_{i-1}$.
Therefore, remembering that $R_n = \bar B^n \bar E$ and $F^n$ is trivial, we see that \cref{eqn:hmax_eat} implies \cref{eqn:hmax_eat_old}.
Note that our derivation did not require any conditions on the channels $\bar \cM_i$ we started with, i.e.~we have shown \cref{eqn:hmax_eat_old} holds for any sequence of channels $\bar \cM_i$, not just channels satisfying a Markov or non-signalling condition.

\section{Uhlmann property for the R\'enyi divergence} \label{sec:uhlmann_alpha}
We establish that for the max-divergence (where $\alpha \to \infty$), Uhlmann's theorem holds.
\begin{proposition}
\label{prop:dmax-stable}
Let $\sigma_{A} \in \states(A)$ and $\rho_{AE} \in \states(A E)$. Then we have
\begin{align}
\label{eq:dmax-stable}
D_{\max}(\rho_{A} \| \sigma_{A}) = \inf_{\hat{\sigma}_{AR} \, : \, \hat{\sigma}_{A} = \sigma_{A}} D_{\max}(\rho_{AR} \| \hat{\sigma}_{AR}) \ .
\end{align}
In addition, if $\rho_{AR}, \rho_{A} \otimes \id_{R}$ and $\sigma_{A} \otimes \id_{R}$ all commute, then for any $\alpha \in [\frac{1}{2}, \infty)$, we have 
\begin{align}
\label{eq:classical-stable}
D_{\alpha}(\rho_{A} \| \sigma_{A}) = \inf_{\hat{\sigma}_{AR} \, : \, \hat{\sigma}_{A} = \sigma_{A}} D_{\alpha}(\rho_{AR} \| \hat{\sigma}_{AR}) \, .
\end{align}
\end{proposition}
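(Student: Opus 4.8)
The plan is to prove both identities with the same two‑step template: the ``$\leq$'' half of each equation is an immediate consequence of the data‑processing inequality, and the ``$\geq$'' half is obtained by exhibiting a single explicit extension $\hat\sigma_{AR}$ of $\sigma_A$ that attains the left‑hand value. For the first half, observe that $\ptr{R}{\cdot}$ is a quantum channel, so by data processing \cite{tomamichel2015quantum} every $\hat\sigma_{AR}\in\states(AR)$ with $\hat\sigma_A=\sigma_A$ satisfies $D_{\max}(\rho_{AR}\|\hat\sigma_{AR})\geq D_{\max}(\rho_A\|\hat\sigma_A)=D_{\max}(\rho_A\|\sigma_A)$, and likewise $\dalpha{\rho_{AR}}{\hat\sigma_{AR}}\geq\dalpha{\rho_A}{\sigma_A}$ for $\alpha\in[\tfrac12,\infty)$; taking the infimum over such $\hat\sigma_{AR}$ gives ``$\leq$'' in both \eqref{eq:dmax-stable} and \eqref{eq:classical-stable}. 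It remains to build a good extension in each case, and we may assume $\supp(\rho_A)\subseteq\supp(\sigma_A)$ throughout, since otherwise both sides are $+\infty$ by the argument just given.

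For \eqref{eq:dmax-stable}, set $d\deq D_{\max}(\rho_A\|\sigma_A)$, so that $\sigma_A-2^{-d}\rho_A\geq0$. Fixing an arbitrary $\tau_R\in\states(R)$, let
\[
\hat\sigma_{AR}\deq 2^{-d}\,\rho_{AR}+\big(\sigma_A-2^{-d}\rho_A\big)\otimes\tau_R \,.
\]
Positivity is clear, a short computation gives $\tr{\hat\sigma_{AR}}=1$, and $\ptr{R}{\hat\sigma_{AR}}=2^{-d}\rho_A+(\sigma_A-2^{-d}\rho_A)=\sigma_A$; moreover $\hat\sigma_{AR}\geq 2^{-d}\rho_{AR}$, hence $D_{\max}(\rho_{AR}\|\hat\sigma_{AR})\leq d$, which closes \eqref{eq:dmax-stable}.

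For \eqref{eq:classical-stable}, the commutation hypotheses give $[\rho_A,\sigma_A]=0$; let $R_{\lambda\mu}$ be the orthogonal projector onto the joint eigenspace on which $\rho_A$ has eigenvalue $\lambda$ and $\sigma_A$ has eigenvalue $\mu$, so $\rho_A=\sum_{\lambda,\mu}\lambda R_{\lambda\mu}$ and $\sigma_A=\sum_{\lambda,\mu}\mu R_{\lambda\mu}$. Since $\rho_{AR}$ commutes with both $\rho_A\otimes\id_R$ and $\sigma_A\otimes\id_R$, it is block‑diagonal, $\rho_{AR}=\sum_{\lambda,\mu}\rho^{(\lambda\mu)}$ with $\rho^{(\lambda\mu)}\deq(R_{\lambda\mu}\otimes\id_R)\rho_{AR}(R_{\lambda\mu}\otimes\id_R)$, and $\ptr{R}{\rho^{(\lambda\mu)}}=\lambda R_{\lambda\mu}$, so in particular $\rho^{(0\mu)}=0$. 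Fixing $\tau_R\in\states(R)$, define the ``conditional'' extension
\[
\hat\sigma_{AR}\deq\sum_{\lambda>0}\sum_{\mu}\tfrac{\mu}{\lambda}\,\rho^{(\lambda\mu)}\;+\;\Big(\sigma_A-\sum_{\lambda>0}\sum_{\mu}\mu R_{\lambda\mu}\Big)\otimes\tau_R \,,
\]
where the bracket equals $\sum_\mu\mu R_{0\mu}\geq0$, so $\hat\sigma_{AR}\in\states(AR)$, and one checks $\ptr{R}{\hat\sigma_{AR}}=\sigma_A$. On $\supp(\rho_{AR})$ the operators $\rho_{AR}$ and $\hat\sigma_{AR}$ are simultaneously block‑diagonal, and within the block $(\lambda,\mu)$ with $\lambda>0$ (hence $\mu>0$ by the support assumption) we have $\hat\sigma^{(\lambda\mu)}=\tfrac{\mu}{\lambda}\rho^{(\lambda\mu)}$, a scalar multiple of $\rho^{(\lambda\mu)}$. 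Therefore $D_\alpha$ decomposes as a sum over blocks, and a direct computation using $1+\tfrac{1-\alpha}{\alpha}=\tfrac1\alpha$ yields
\begin{align*}
\tr{\Big(\hat\sigma_{AR}^{\frac{1-\alpha}{2\alpha}}\rho_{AR}\hat\sigma_{AR}^{\frac{1-\alpha}{2\alpha}}\Big)^{\alpha}}
&=\sum_{\lambda>0,\mu}\Big(\tfrac{\mu}{\lambda}\Big)^{1-\alpha}\tr{\rho^{(\lambda\mu)}}
=\sum_{\lambda>0,\mu}\mu^{1-\alpha}\lambda^{\alpha}\operatorname{rank}(R_{\lambda\mu})\\
&=\tr{\rho_A^{\alpha}\sigma_A^{1-\alpha}}
=\tr{\Big(\sigma_A^{\frac{1-\alpha}{2\alpha}}\rho_A\sigma_A^{\frac{1-\alpha}{2\alpha}}\Big)^{\alpha}} \,,
\end{align*}
where the penultimate step regroups the eigenvalue pairs of the commuting pair $(\rho_A,\sigma_A)$. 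Hence $\dalpha{\rho_{AR}}{\hat\sigma_{AR}}=\dalpha{\rho_A}{\sigma_A}$ for all $\alpha\in[\tfrac12,1)\cup(1,\infty)$; since $\hat\sigma_{AR}$ does not depend on $\alpha$, the case $\alpha=1$ follows from continuity of $\alpha\mapsto D_\alpha$. Together with the data‑processing bound this proves \eqref{eq:classical-stable}. The only point requiring care is bookkeeping on the supports of $\rho_A$ and $\sigma_A$ (absorbed into the $\otimes\tau_R$ correction term) together with the observation that within each block $\hat\sigma_{AR}$ is proportional to $\rho_{AR}$, which is exactly what makes the Rényi‑divergence computation elementary and is also why the commutation hypothesis cannot be dropped.
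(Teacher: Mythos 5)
Your proof is correct, and for the $D_{\max}$ identity it takes a genuinely different route from the paper. The paper establishes the ``$\geq$'' direction of \cref{eq:dmax-stable} via semidefinite programming duality: it writes $2^{D_{\max}(\rho_A\|\sigma_A)}$ and $\inf_{\hat\sigma_{AR}}2^{D_{\max}(\rho_{AR}\|\hat\sigma_{AR})}$ as primal SDPs and shows their duals coincide, which certifies equality without producing an optimizer. You instead exhibit an explicit near-optimal extension, $\hat\sigma_{AR}=2^{-d}\rho_{AR}+(\sigma_A-2^{-d}\rho_A)\otimes\tau_R$ with $d=D_{\max}(\rho_A\|\sigma_A)$, and verify directly that it is a state with marginal $\sigma_A$ satisfying $\rho_{AR}\leq 2^{d}\hat\sigma_{AR}$. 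This constructive argument is more elementary (no duality needed), makes the attainment of the infimum transparent, and has the pleasant feature of running in parallel with your treatment of the second identity: in both cases you build $\hat\sigma_{AR}$ as a state-dependent ``conditional'' piece plus a PSD correction tensored with $\tau_R$ to fix the marginal. For \cref{eq:classical-stable} the paper's choice $\hat\sigma_{AR}=\sigma_A\rho_A^{-1}\rho_{AR}$ is essentially your extension restricted to the blocks with $\lambda>0$; your version is a bit more careful about the case $\supp(\rho_A)\subsetneq\supp(\sigma_A)$, where the extra term $\bigl(\sum_\mu\mu R_{0\mu}\bigr)\otimes\tau_R$ is needed to make $\hat\sigma_{AR}$ a genuine state with the prescribed marginal. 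What the SDP approach buys instead is a uniform technique that does not require guessing an optimizer and that one might hope to adapt to other divergences; your approach buys explicitness and a cleaner conceptual link between the two halves of the proposition.
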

\begin{proof}
We start with~\cref{eq:dmax-stable}. The inequality $\leq$ is a direct consequence of the data-processing inequality for $D_{\max}$. 
For the inequality $\geq$, we use semidefinite programming duality, see e.g.,~\cite{watrous2018theory}. Observe that we can write $2^{D_{\max}( \rho_{A} \| \sigma_{A})}$ as the following semidefinite program
\begin{align*}
\min_{\tau_{A} \in \pos(A), \lambda \in \R}  \{ \tr{\tau_{A}} : \rho_{A} \leq \tau_{A}, \tau_{A} = \lambda \sigma_{A} \} \, .
\end{align*}
Using semidefinite programming duality, this is also equal to 
\begin{align}
\label{eq:dual_dmaxA}
\max_{X_{A} \in \pos(A), Y_{A} \in \Herm(A) } \{ \tr{X_{A} \rho_{A}} : \id_{A} + Y_{A} = X_{A} ,  \tr{Y_{A} \sigma_{A}} = 0  \} \, .
\end{align}
We can also write a semidefinite program for $\inf_{\hat{\sigma}_{AR} \, : \, \hat{\sigma}_{A} = \sigma_A} 2^{ D_{\max}(\rho_{AR} \| \hat{\sigma}_{AR}) }$. We introduce the variable $\theta_{AR} = \lambda \hat{\sigma}_{AR}$ and get
\begin{align*}
\min_{\tau \in \pos(A \otimes R), \lambda \in \R} \{ \tr{\tau_{AR}} : \rho_{AR} \leq \tau_{AR},  \theta_{A} = \lambda \sigma_{A}  \} \, .
\end{align*}
Again, by semidefinite programming duality, we get that it is equal to
\begin{align}
\label{eq:dual_dmaxAR}
\max_{X_{AR} \in \pos(A \otimes R), Y_{A} \in \Herm(A)} \{ \tr{X_{AR} \rho_{AR}} :  (\id_{A} + Y_{A}) \otimes \id_{R} = X_{AR}, \tr{Y_{A} \sigma_{A}} = 0 \}\, .
\end{align}
Eliminating the variable $X_{AR}$, we can write this last program as 
\begin{align*}
\max_{Y_{A} \in \Herm(A)} \{ \tr{(\id_{A} + Y_{A}) \rho_{A}} :  \id_{A} + Y_{A} \in \pos(A) , \tr{Y_{A} \sigma_{A}} = 0 \}\, ,
\end{align*}
which is the same as~\cref{eq:dual_dmaxA}. This proves~\cref{eq:dmax-stable}.
\cref{eq:classical-stable} follows immediately by choosing $\hat{\sigma}_{AR} = \sigma_{A}\rho_{A}^{-1}\rho_{AR}$ and using the commutation conditions.
\end{proof}

However, for $\alpha \geq 1$ and arbitrary $\sigma_{A} \in \states(A)$, $\rho_{AE} \in \states(A E)$, the Uhlmann property given by~\cref{eq:classical-stable} does not hold. A concrete example is $\rho_{AR} = \proj{\psi}_{AR}$ with
\begin{align*}
\ket{\psi}_{AR} = \sqrt{\frac{1}{4}} \ket{00}_{AR} + \sqrt{\frac{3}{4}} \ket{11}_{AR}     
\end{align*}
and $\sigma_{A} = \frac{1}{3} \proj{+} + \frac{2}{3} \proj{-}$. In this case, $D_2(\rho_A \| \sigma_A) < 0.476$ whereas 
\begin{align*}
    \inf_{\hat{\sigma}_{AR} \, : \, \hat{\sigma}_{A} = \sigma_{A}} D_2(\rho_{AR} \| \hat{\sigma}_{AR} ) \geq \inf_{\hat{\sigma}_{AR} \, : \, \hat{\sigma}_{A} = \sigma_{A}} D(\rho_{AR} \| \hat{\sigma}_{AR}) > 0.48 \, .
\end{align*} 
This computation was performed by numerically solving the semidefinite programs via CVXQUAD~\cite{fawzi2018efficient}. Putting everything together shows that~\cref{eq:classical-stable} does not hold for $\alpha \in \{1,2\}$: 
\begin{align*}
 D(\rho_{A} \| \sigma_{A}) \leq D_2(\rho_A \| \sigma_A) 
 <  \inf_{\hat{\sigma}_{AR} \, : \, \hat{\sigma}_{A} = \sigma_{A}} D(\rho_{AR} \| \hat{\sigma}_{AR}) \leq \inf_{\hat{\sigma}_{AR} \, : \, \hat{\sigma}_{A} = \sigma_{A}} D_{2}(\rho_{AR} \| \hat{\sigma}_{AR}) \, .  
\end{align*}

\end{document}